\documentclass{article}
\usepackage[paper=letterpaper,margin=1in]{geometry}

\usepackage{graphicx,yhmath}
\usepackage{pstricks,pstricks-add}
\usepackage{pst-node}
\usepackage{pst-coil}
\usepackage{amsthm}
\usepackage{amsmath}
\usepackage{amssymb}
\usepackage{amsfonts}
\usepackage{epsfig}
\usepackage{verbatim}
\usepackage{enumitem}
\usepackage{arcs}
\newrgbcolor{green}{0 .8 0}
\newrgbcolor{red}{.9 0 0}
\newtheorem{definition}{Definition}

\newtheorem{theorem}{Theorem}
\newtheorem{proposition}{Proposition}
\newtheorem{lemma}{Lemma}
\newtheorem{corollary}{Corollary}

\newlength{\alginputwidth}
\newlength{\algboxwidth}

\newcommand{\algtitle}[1]{\underline{Algorithm \ {\bf #1}} \vspace*{1mm}\\}

\newsavebox{\algbox}
\newsavebox{\captionbox}
\newenvironment{algorithm}[2]%
    {
        \setlength{\algboxwidth}{\columnwidth}
        \addtolength{\algboxwidth}{-\columnsep}
        \addtolength{\algboxwidth}{-1mm}
        \setlength{\alginputwidth}{\algboxwidth}
        \addtolength{\alginputwidth}{-1.7cm}
        \begin{figure}[htbp]
            \vspace*{-1mm}
            \centering
            \begin{lrbox}{\captionbox}
                \begin{minipage}[b]{\algboxwidth}
                    \centering
                    \label{#2}
                \end{minipage}
            \end{lrbox}
            \begin{lrbox}{\algbox}
                \begin{minipage}[b]{\algboxwidth}
                    \footnotesize
                    \vspace*{2mm}
    } 
    {
                    \vspace*{0.2mm}
               \end{minipage}
            \end{lrbox}
            \fbox{\usebox{\algbox}\hspace*{1mm}}
            \usebox{\captionbox}
            \vspace*{-1mm}
        \end{figure}
    }
\newsavebox{\algcodebox}
\newenvironment{codeblock}%
    {
        \begin{enumerate}
            \setlength{\itemsep}{2pt}
            \setlength{\parsep}{0pt}
            \setlength{\topsep}{0pt}
            \setlength{\parskip}{0pt}
            \setlength{\partopsep}{0pt}
    } 
    {\end{enumerate}}
\newcommand{\step}{\item}

\begin{document}

\title{The Stretch Factor of the Delaunay Triangulation Is Less Than 1.998}

\author{\sc{Ge Xia}\thanks{
Department of Computer Science,
Lafayette College,
Easton, PA 18042, USA.
xiag{\tt @}lafayette.edu. The preliminary version of the paper appears in the {\em Proceedings of the 27th Annual Symposium on Computational Geometry} (SoCG 2011).}\\
}
\date{}
\maketitle

\begin{abstract}
Let $S$ be a finite set of points in the Euclidean plane. Let $D$ be a Delaunay triangulation of $S$. The {\em stretch factor}  (also known as {\em dilation} or {\em spanning ratio}) of $D$ is the maximum ratio, among all points $p$ and $q$ in $S$, of the shortest path distance from $p$ to $q$ in $D$ over the Euclidean distance $||pq||$. Proving a tight bound on the stretch factor of the Delaunay triangulation has been a long standing open problem in computational geometry.

In this paper we prove that the stretch factor of the Delaunay triangulation is less than $\rho = 1.998$, significantly improving the current best upper bound of 2.42 by Keil and Gutwin (1989). Our bound of 1.998 also improves the upper bound of the best stretch factor that can be achieved by a plane spanner of a Euclidean graph (the current best upper bound is 2). Our result has a direct impact on the problem of constructing spanners of Euclidean graphs, which has applications in the area of wireless computing.
\vspace*{5mm}
\end{abstract}




\section{Introduction} \label{intro} \setcounter{page}{1} \pagestyle{plain}
Let $S$ be a finite set of points in the Euclidean plane. A {\em Delaunay triangulation} of $S$ is a triangulation in which the circumscribed circle of every triangle contains no point of $S$ in its interior. An alternative equivalent definition is: An edge $xy$ is in the Delaunay triangulation of $S$ if and only if there is a circle through points $x$ and $y$ whose interior is devoid of points of $S$. A Delaunay triangulation of $S$ is the dual graph of the Voronoi diagram of $S$.

Let $D$ be a Delaunay triangulation of $S$. The {\em stretch factor}  (also known as {\em dilation} or {\em spanning ratio}) of $D$ is the maximum ratio, among all points $p$ and $q$ in $S$, of the shortest path distance from $p$ to $q$ in $D$ over the Euclidean distance $||pq||$.

Proving a tight bound on the stretch factor of the Delaunay triangulation has been a long standing open problem in computational geometry. Chew~\cite{chew} showed a lower bound of $\pi/2$ on the stretch
factor of the Delaunay triangulation. This lower bound of $\pi/2$ was widely believed to be tight until recently (2009) when Bose et al.~\cite{BoseCGTA} proved that the lower bound is at least $1.5846 > \pi/2$, which was further improved to 1.5932 by Xia and Zhang~\cite{fwcg2010}. Dobkin, Friedman, and Supowit~\cite{dobkin0,dobkin} in 1987 showed that the Delaunay triangulation has stretch factor at most $(1+\sqrt{5})\pi/2 \approx 5.08$. This upper bound was improved by Keil and Gutwin~\cite{keil0,keil} in 1989 to $2\pi/(3\cos{(\pi/6)}) \approx 2.42$. Despite considerable efforts since then, 2.42 currently stands as the best upper bound on the stretch factor of the Delaunay triangulation. For the special case when the point set $S$ is in convex position, Cui, Kanj and Xia~\cite{CGTA} recently proved that the Delaunay triangulation of $S$ has stretch factor at most 2.33.

In this paper we prove that the stretch factor of the Delaunay triangulation of a point set in the plane is less than $\rho = 1.998$, significantly improving the current best upper bound of 2.42 from 1989~\cite{keil0,keil}. Our bound of 1.998 is also better than the upper bound of 2.33 for the special case when the point set is in convex position~\cite{CGTA}.

Our approach in proving the upper bound on the stretch factor of the Delaunay triangulation is different from the previous approaches~\cite{dobkin0,dobkin,keil0,keil,CGTA}. Our approach focuses on the geometry of a ``chain'' of disks in the plane (the formal definition of the chain is given in Section~\ref{section:prelim}). The ``stretch factor'' of a chain can be defined in analogy to that of the Delaunay triangulation. We prove that the upper bound on the stretch factor of a chain is less than 1.998.
\begin{theorem}\label{thm:main}
The stretch factor of a chain of disks $\mathcal{O}$ in the plane is less than $\rho=1.998$.
\end{theorem}
As a special case of Theorem~\ref{thm:main}, we derive the same upper bound on the stretch factor of the Delaunay triangulation.
\begin{corollary}\label{cor:main}
The stretch factor of a Delaunay triangulation $D$ of a set of points $S$ in the plane is less than $\rho=1.998$.
\end{corollary}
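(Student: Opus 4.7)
The plan is to reduce the corollary to Theorem~\ref{thm:main} by extracting, for any pair of points $p,q\in S$, a chain of Delaunay circles whose stretch factor dominates the ratio of the shortest-path distance in $D$ to $\|pq\|$. Let $T_1,\ldots,T_n$ be the sequence of triangles of $D$ whose interiors the segment $\overline{pq}$ crosses (standard perturbation handles degeneracies), with $p$ a vertex of $T_1$ and $q$ a vertex of $T_n$. Let $O_i$ be the circumscribed circle of $T_i$ and set $\mathcal{O}=(O_1,\ldots,O_n)$.

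First I would verify that $\mathcal{O}$ is a chain in the sense of Definition~\ref{def:circles}. Consecutive triangles $T_i,T_{i+1}$ share an edge whose two endpoints lie on both $O_i$ and $O_{i+1}$, so these two endpoints realise $a_i$ and $b_i$. For Property~(2), $T_i$ has only three vertices, so its two shared edges with $T_{i-1}$ and $T_{i+1}$ must share a single ``pivot'' vertex $y$ while the remaining two vertices $x,z$ lie on the opposite side of $\overline{pq}$. Then $a_{i-1}=a_i=y$, $b_{i-1}=x$, and $b_i=z$; by the empty-circle property, $z\notin\mathrm{int}(O_{i-1})$ and $x\notin\mathrm{int}(O_{i+1})$. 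Consequently $C_i^{(i-1)}$ is the arc of $O_i$ from $x$ to $y$ avoiding $z$, while $C_i^{(i+1)}$ is the arc from $y$ to $z$ avoiding $x$; these meet only at $y$, so they do not overlap.

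Next I would set $(u,v)=(p,q)$. Since $p$ is a vertex of $T_1$ and is not in $\mathrm{int}(O_2)$ by the Delaunay property (and symmetrically for $q$), this is a valid terminal pair. In general position, $\overline{pq}$ meets each chord $\overline{a_ib_i}$ in its interior, so $\overline{pq}$ itself realises the shortest polyline through the sequence of segments $\overline{a_ib_i}$; hence $D_{\mathcal{O}}(p,q)=\overline{pq}$, $|D_{\mathcal{O}}(p,q)|=\|pq\|$, and $p,q$ are unobstructed. Moreover, every edge of $P_{\mathcal{O}}(p,q)$ is either a segment $\overline{a_ib_i}$ (a shared edge of $D$) or an arc joining two vertices of $T_i$ along $O_i$; replacing each such arc by its straight chord yields a walk in $D$ of length at most $|P_{\mathcal{O}}(p,q)|$, since each chord connects two of $T_i$'s three vertices (so is an edge of $D$) and a chord is no longer than the arc it subtends. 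Writing $d_D(p,q)$ for the shortest-path distance in $D$, Theorem~\ref{thm:main} then gives
\[
\frac{d_D(p,q)}{\|pq\|}\;\le\;\frac{|P_{\mathcal{O}}(p,q)|}{|D_{\mathcal{O}}(p,q)|}\;\le\;C_{\mathcal{O}}\;<\;\rho,
\]
and maximising over $p,q\in S$ proves the corollary.

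The substantive step is the verification of Property~(2), which rests delicately on the empty-circle characterisation of Delaunay triangles; the remainder is essentially bookkeeping. The minor pitfalls are degenerate configurations — $\overline{pq}$ passing through a Delaunay vertex, or lying along a Delaunay edge — which are resolved by standard symbolic perturbation and do not affect the final stretch-factor bound.
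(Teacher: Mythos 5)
Your proof is correct and follows the same reduction as the paper: extract the sequence of circumcircles of the Delaunay triangles crossed by $\overline{pq}$, observe that it forms a chain with $p,q$ as unobstructed terminals, and invoke Theorem~\ref{thm:main}. The only difference is that you carefully justify the claim the paper dismisses as ``clear''---in particular the pivot-vertex argument establishing Property~(2) and the chord-shortening argument bounding $d_D(p,q)$ by $|P_{\mathcal{O}}(p,q)|$---which is a welcome elaboration but not a different route.
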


Our result has a direct impact on the problem of constructing {\em spanners} of Euclidean graphs~\cite{Bose_onplane}, which has applications in the area of wireless computing (for more details, see~\cite{spannerbook}).
Many spanner constructions in the literature rely on extracting subgraphs of the Delaunay triangulation (see for example~\cite{bosealgorithmica,levco1,siam10,iitunbounded}) and their spanning ratio is expressed as a function of the stretch factor of the Delaunay triangulation. Hence the new upper bound of 1.998 on the stretch factor of the Delaunay triangulation automatically improves the upper bounds on the spanning ratio of all such spanners.

Another important consequence of our result is that it improves the upper bound of the best stretch factor achieved by plane spanners of the complete 2-D Euclidean graph. Previously, the plane spanner with the best known upper bound on the stretch factor is the Triangular Distance Delaunay triangulation by Chew~\cite{chew}, whose stretch factor is 2. Our result shows that the Delaunay triangulation has a smaller upper bound of 1.998 on the stretch factor.

We believe that future research based on our approach will yield further improvements in both upper bound and lower bound, and may eventually lead to the tight bound of the stretch factor of the Delaunay triangulation. At the end of this paper, we will discuss possible ways to improve our approach for a better upper bound. Following our approach, Xia and Zhang~\cite{fwcg2010} showed an improved lower bound of 1.5932 for the stretch factor of the Delaunay triangulation by giving a sequence of chains with increasing stretch factors. It was conjectured in~\cite{fwcg2010} that the tight bound occurs at the limit of the sequence.

The paper is organized as follows. The necessary definitions are given in Section~\ref{section:prelim}. The main theorem of the paper is presented in Section~\ref{sec:outline}. There we also discuss the proof strategy and provide an outline of the proof for the main theorem. Most of the technical details are captured by two lemmas, whose proofs appear in Section~\ref{section:tech} and Section~\ref{section:main}. The paper ends with a discussion on the possible improvements of our approach in Section~\ref{sec:conclu}.

\section{Preliminaries} \label{section:prelim}

We label the points in the plane by lower case letters, such as $p,q,u,v,$ etc. For any two points $p,q$ in the plane, denote by $pq$ the line in the plane passing through $p$ and $q$, by $\overline{pq}$ the line segment connecting $p$ and $q$, and by $\overrightarrow{pq}$ the ray from $p$ to $q$. The Euclidean distance between $p$ and $q$ is denoted by $||pq||$. The length of a path $P$ in the plane is denoted by $|P|$. Any angle denoted by $\angle poq$ is measured from $\protect\overrightarrow{op}$ to $\protect\overrightarrow{oq}$ in the {\em counterclockwise} direction. Unless otherwise specified, all angles in this paper are defined in the range $(-\pi,\pi]$.

\begin{definition}\label{def:circles}\rm
We say that a finite sequence of distinct disks\footnote{In this paper, a disk is considered to be a closed subset of the plane} $\mathcal{O}=(O_1, O_2, \ldots, O_n)$ in the plane is a {\em chain} if it has the following two properties. {\bf Property (1)}: Every two consecutive disks $O_i, O_{i+1}$ intersect\footnote{This includes the case where $O_i, O_{i+1}$ are tangent.}, $1\leq i\leq n-1$, but neither disk contains the other. Denote by $C_i^{(i-1)}$ and $C_i^{(i+1)}$ the arcs on the boundary of $O_i$ that is in $O_{i-1}$ and $O_{i+1}$, respectively. We refer to $C_i^{(i-1)}$ and $C_i^{(i+1)}$ as the ``connecting arcs'' of $O_i$. {\bf Property (2)}: The connecting arcs $C_{i}^{(i-1)}$ and $C_{i}^{(i+1)}$ of $O_i$ do not overlap, for $2\leq i\leq n-1$; however $C_{i}^{(i-1)}$ and $C_{i}^{(i+1)}$ can share an endpoint.
See Figure~\ref{fig:G} for an illustration. For any $1\leq i \leq j \leq n$, denote by $\mathcal{O}_{i,j}$ a sub-chain of $\mathcal{O}$: $\mathcal{O}_{i,j}=(O_i, \ldots, O_j)$.
\end{definition}

   \begin{figure}[tbph]
\psset{unit=0.8pt}
\begin{center} \small
\begin{pspicture}(-120,-130)(150,120)
    \psset{labelsep=6pt}
\newrgbcolor{lightyellow}{1 1 0.8}


\pscircle[fillstyle=solid,fillcolor=lightyellow,linestyle=none](-107,4){35}

    \psarc[linewidth=2pt](-107,4){35}{10}{302}

\pscircle[fillstyle=solid,fillcolor=lightyellow,linestyle=none](-27,-31){61}
    \psarc[linewidth=2pt](-27,-31){61}{101}{138}
    \psarc[linewidth=2pt](-27,-31){61}{175}{372}

\pscircle[fillstyle=solid,fillcolor=lightyellow,linestyle=none](21,41){61}
    \psarc[linewidth=2pt](21,41){61}{20}{192}

\pscircle[fillstyle=solid,fillcolor=lightyellow,linestyle=none](68,14){49}
    \psarc[linewidth=2pt](68,14){49}{59}{77}
    \psarc[linewidth=2pt](68,14){49}{224}{280}

\pscircle[fillstyle=solid,fillcolor=lightyellow,linestyle=none](123,4){60}
    \psarc[linewidth=2pt](123,4){60}{-140}{120}

    \uput[90](-110,35){$O_1$}

    \uput[90](-60,20){$O_2$}

    \uput[90](0,95){$O_3$}

    \uput[90](90,55){$O_4$}

    \uput[90](150,55){$O_5$}

    \psarc[linecolor=blue,linewidth=0.5pt](-107,4){35}{302}{10}
    \psarc[linecolor=blue,linewidth=0.5pt](-27,-31){61}{372}{101}
    \psarc[linecolor=blue,linewidth=0.5pt](-27,-31){61}{138}{175}

    \psarc[linecolor=blue,linewidth=0.5pt](21,41){61}{192}{20}

    \psarc[linecolor=blue,linewidth=0.5pt](68,14){49}{280}{59}
    \psarc[linecolor=blue,linewidth=0.5pt](68,14){49}{77}{224}

    \psarc[linecolor=blue,linewidth=0.5pt](123,4){60}{120}{-140}

\uput[-90](0,-100){(a) Example 1.}
\end{pspicture}

\psset{unit=1.4pt}
\begin{pspicture}(-120,-70)(150,90)
    \psset{labelsep=6pt}

\pscircle[fillstyle=solid,fillcolor=lightyellow,linestyle=none](0,0){40}
\pscircle[fillstyle=solid,fillcolor=lightyellow,linestyle=none](60,0){25}
\pscircle[fillstyle=solid,fillcolor=lightyellow,linestyle=none](63,37){20}
\pscircle[fillstyle=solid,fillcolor=lightyellow,linestyle=none](37,55){18}
\pscircle[fillstyle=solid,fillcolor=lightyellow,linestyle=none](11,45){16}
\pscircle[fillstyle=solid,fillcolor=lightyellow,linestyle=none](0,22.4){14}

    \psarc[linewidth=2pt](0,0){40}{18}{342}

    \psarc[linewidth=2pt](60,0){25}{114.7}{150}
    \psarc[linewidth=2pt](60,0){25}{209.8}{53.7}

    \psarc[linewidth=2pt](63,37){20}{176}{224.8}
    \psarc[linewidth=2pt](63,37){20}{303.6}{114.7}

    \psarc[linewidth=2pt](37,55){18}{0}{168.6}
    \psarc[linewidth=2pt](37,55){18}{233.6}{290.6}

    \psarc[linewidth=2pt](11,45){16}{58.4}{213.1}
    \psarc[linewidth=2pt](11,45){16}{275}{343.9}

    \psarc[linewidth=2pt](0,22.4){14}{100}{28}

    \uput[90](11,-55){$O_1$}

    \uput[90](57,-40){$O_2$}

    \uput[90](90,44){$O_3$}

    \uput[90](55,65){$O_4$}

    \uput[90](3, 60){$O_5$}

    \uput[90](-20,10){$O_6$}

    \psarc[linecolor=blue,linewidth=0.5pt](0,0){40}{342}{18}
    \psarc[linecolor=blue,linewidth=0.5pt](60,0){25}{150}{209.8}
    \psarc[linecolor=blue,linewidth=0.5pt](60,0){25}{53.7}{114.7}
    \psarc[linecolor=blue,linewidth=0.5pt](63,37){20}{224.8}{303.6}
    \psarc[linecolor=blue,linewidth=0.5pt](63,37){20}{114.7}{176}
    \psarc[linecolor=blue,linewidth=0.5pt](37,55){18}{168.6}{233.6}
    \psarc[linecolor=blue,linewidth=0.5pt](37,55){18}{290.6}{0}
    \psarc[linecolor=blue,linewidth=0.5pt](11,45){16}{213.1}{275}
    \psarc[linecolor=blue,linewidth=0.5pt](11,45){16}{343.9}{58.4}
    \psarc[linecolor=blue,linewidth=0.5pt](0,22.4){14}{28}{100}

\uput[-90](0,-60){(b) Example 2. }
\end{pspicture}

\caption{Examples of chains. The connecting arcs are thin. A chain may self-intersect, as in Example 2. Although $O_1$ and $O_5$ intersect in Example 2, there are no connecting arcs between them because they are not consecutive in the sequence.}\label{fig:G}
\end{center}
\end{figure}

\begin{definition}\label{def:terminals}\rm
Given a chain $\mathcal{O}=(O_1, O_2, \ldots, O_n)$. Let $u$ be a point on the boundary of $O_1$ that is not in the interior of $O_2$. Let $v$ be a point on the boundary of $O_n$ that is not in the interior of $O_{n-1}$. We call $u,v$ {\em a pair of terminal points} (or simply {\em terminals}) of the chain $\mathcal{O}$. Let $o_1, \ldots, o_n$ be the centers of $O_1, \ldots, O_n$, respectively. We call the polyline $uo_1 \ldots o_nv$ the {\em centered polyline} between $u$ and $v$. For $1\leq i\leq n-1$, let $a_i$ and $b_i$ be the intersections of the boundaries of $O_i$ and $O_{i+1}$ (in the special case where $O_i, O_{i+1}$ are tangent, $a_i=b_i$). Without loss of generality, assume all $a_i$'s are on one side of the centered polyline $uo_1 \ldots o_nv$ and all $b_i$'s are on the other side\footnote{The two sides of $uo_1 \ldots o_nv$ can be distinguished by the left- and right-hand side as we move along $uo_1 \ldots o_nv$.} (if $a_i=b_i$ then both of them are on the centered polyline). For notational convenience, define $a_0=b_0=u$ and $a_n=b_n=v$. Every disk $O_i$ has two arcs on its boundary between the line segments $\overline{a_{i-1}b_{i-1}}$ and $\overline{a_ib_i}$, denoted by $A_i$ and $B_i$. Without loss of generality, assume that $a_{i-1},a_i$ are the ends of $A_i$ and $b_{i-1},b_i$ are the ends of $B_i$, for $1\leq i \leq n$. This means that $A_1\ldots A_n$ is a path from $u$ to $v$ on one side of the chain and $B_1\ldots B_n$ is a  path from $u$ to $v$ on the other side of the chain. An arc $A_i$ or $B_i$ may degenerate to a point, in which case $a_{i-1}=a_i$ or $b_{i-1}=b_i$, respectively. Refer to Figure~\ref{fig:polyline} for an illustration.
\end{definition}

   \begin{figure}[tbph]
\psset{unit=0.8pt}
\begin{center} \small
\begin{pspicture}(-120,-110)(150,100)
    \psset{labelsep=6pt}


\pscircle[fillstyle=solid,fillcolor=lightyellow,linestyle=none](-107,4){35}

    \psarc(-107,4){35}{10}{220}
    \psarc[linewidth=2pt,linecolor=blue](-107,4){35}{220}{302}

\pscircle[fillstyle=solid,fillcolor=lightyellow,linestyle=none](-27,-31){61}
    \psarc[linewidth=2pt,linecolor=blue](-27,-31){61}{101}{138}
    \psarc(-27,-31){61}{175}{372}

\pscircle[fillstyle=solid,fillcolor=lightyellow,linestyle=none](21,41){61}
    \psarc(21,41){61}{20}{192}

\pscircle[fillstyle=solid,fillcolor=lightyellow,linestyle=none](68,14){49}
    \psarc[linewidth=2pt,linecolor=blue](68,14){49}{59}{77}
    \psarc(68,14){49}{224}{280}

\pscircle[fillstyle=solid,fillcolor=lightyellow,linestyle=none](123,4){60}
    \psarc[linewidth=2pt,linecolor=blue](123,4){60}{26}{120}
    \psarc(123,4){60}{-140}{26}

    \psline[linewidth=2pt,linecolor=blue](-72,10)(-88,-26)
    \psline[linewidth=2pt,linecolor=blue](-39,28)(32,-19)
    \psline[linewidth=2pt,linecolor=blue](79,62)(32,-19)
    \psline(93,56)(76,-34)

    \uput[170](-72,10){$a_1$}
    \uput[-30](-88,-26){$b_1$}

    \uput[90](-110,35){$A_1$}
    \uput[-90](-110,-30){$B_1$}

    \uput[0](-39,28){$a_2$}
    \uput[220](32,-19){$b_2$($b_3$)}

    \uput[90](-60,20){$A_2$}
    \uput[-90](-50,-85){$B_2$}

    \uput[180](79,62){$a_3$}

    \uput[90](0,95){$A_3$}
    \uput[-90](42,-30){$B_3$}

    \uput[-20](93,56){$a_4$}
    \uput[0](76,-34){$b_4$}

    \uput[90](90,55){$A_4$}
    \uput[-90](65,-30){$B_4$}



    \uput[90](150,55){$A_5$}
    \uput[-90](135,-55){$B_5$}


\psline[linewidth=2pt,linestyle=dotted,dotsep=2pt](-134,-19)(-107,4)(-27,-31)(21,41)(68,14)(123,4)(178,30)

    \psline[linestyle=dashed,linecolor=red](-134,-19)(178,30)

    \uput[-190](-134,-19){$u$}
    \uput[-120](-125,-19){($a_0$,$b_0$)}

    \uput[30](178,30){$v$}
    \uput[-30](178,30){($a_5$,$b_5$)}

\uput[-90](0,-105){(a) Example 1. }
\end{pspicture}

\psset{unit=1.4pt}
\begin{pspicture}(-50,-80)(110,100)
    \psset{labelsep=6pt}

\pscircle[fillstyle=solid,fillcolor=lightyellow,linestyle=none](0,0){40}
\pscircle[fillstyle=solid,fillcolor=lightyellow,linestyle=none](60,0){25}
\pscircle[fillstyle=solid,fillcolor=lightyellow,linestyle=none](63,37){20}
\pscircle[fillstyle=solid,fillcolor=lightyellow,linestyle=none](37,55){18}
\pscircle[fillstyle=solid,fillcolor=lightyellow,linestyle=none](11,45){16}
\pscircle[fillstyle=solid,fillcolor=lightyellow,linestyle=none](0,22.4){14}

    \psarc[linewidth=2pt,linecolor=blue](0,0){40}{16}{180}
    \psarc(0,0){40}{180}{342}

    \psarc[linewidth=2pt,linecolor=blue](60,0){25}{114.7}{150}
    \psarc(60,0){25}{209.8}{53.7}

    \psarc[linewidth=2pt,linecolor=blue](63,37){20}{176}{224.8}
    \psarc(63,37){20}{303.6}{114.7}

    \psarc(37,55){18}{0}{168.6}
    \psarc[linewidth=2pt,linecolor=blue](37,55){18}{233.6}{290.6}

    \psarc(11,45){16}{58.4}{213.1}
    \psarc[linewidth=2pt,linecolor=blue](11,45){16}{275}{343.9}

    \psarc[linewidth=2pt,linecolor=blue](0,22.4){14}{100}{172.8}
    \psarc(0,22.4){14}{172.8}{28}

    \psline(38,11)(38,-12.5)
    \psline(49.2,22.7)(74.3,20.2)
    \psline(43.4,38.2)(55,55)
    \psline(26.4,40.5)(19.4,58.6)
    \psline[linewidth=2pt,linecolor=blue](12.4,29)(-2.4,36.2)

    \psline[linewidth=2pt,linestyle=dotted,dotsep=2pt](-40,0)(0,0)(60,0)(63,37)(37,55)(11,45)(0,22.4)(-13.9, 24.2)

    \psline[linestyle=dashed,linecolor=red](-40,0)(38,11)(50,22.7)(43.4,38.2)(26.4,40.5)(-13.9, 24.2)


    \uput[150](-40,0){$u$}
    \uput[210](-37,0){($a_0$,$b_0$)}

    \uput[150](-13.9, 24.2){$v$}
    \uput[210](-9.5, 24.2){($a_6$,$b_6$)}

    \uput[0](38,11){$a_1$}
    \uput[0](38,-11){$b_1$}

    \uput[45](50,22.7){$a_2$}
    \uput[0](74.3,20.2){$b_2$}

    \uput[0](43.4,38.2){$a_3$}
    \uput[45](55,55){$b_3$}

    \uput[60](26.4,40.5){$a_4$}
    \uput[120](19.4,58.6){$b_4$}

    \uput[-20](12.4,29){$a_5$}
    \uput[170](-2.4,36.2){$b_5$}

\uput[-90](0,-60){(b) Example 2. }
\end{pspicture}
\psset{unit=0.5pt}
\begin{pspicture}(-160,-200)(150,120)

    \psline[linewidth=2pt,linecolor=blue](-150,0)(-100,60)
    \psline(-150,0)(-100,-60)
    \psline(-100,60)(-100,-60)

    \psline[linewidth=2pt,linecolor=blue](-100,60)(-50,60)
    \psline(-100,-60)(-50,-60)

    \psline(-50,60)(-50,-60)

    \psline[linewidth=2pt,linecolor=blue](-50,60)(0,60)
    \psline(-50,-60)(0,-60)

    \psline(0,60)(0,-60)

    \psline[linewidth=2pt,linecolor=blue](0,60)(50,60)
    \psline(0,-60)(50,-60)

    \psline(50,60)(50,-60)

    \psline[linewidth=2pt,linecolor=blue](50,60)(100,60)
    \psline(50,-60)(100,-60)

    \psline[linewidth=2pt,linecolor=blue](100,60)(100,-60)

    \psline(100,60)(150,0)
    \psline[linewidth=2pt,linecolor=blue](100,-60)(150,0)

    \cnode*[linecolor=red](-150,0){2pt}{u}
    \cnode*[linecolor=red](-100,60){2pt}{a_1}
    \cnode*[linecolor=red](-100,-60){2pt}{b_1}
    \cnode*[linecolor=red](-50,60){2pt}{a_2}
    \cnode*[linecolor=red](-50,-60){2pt}{b_2}
    \cnode*[linecolor=red](150,0){2pt}{v}
    \cnode*[linecolor=red](100,60){2pt}{a_5}
    \cnode*[linecolor=red](100,-60){2pt}{b_5}
    \cnode*[linecolor=red](50,60){2pt}{a_4}
    \cnode*[linecolor=red](50,-60){2pt}{b_4}
    \cnode*[linecolor=red](0,60){2pt}{a_3}
    \cnode*[linecolor=red](0,-60){2pt}{b_3}

    \psset{labelsep=6pt}

    \uput[90](-100,60){$a_1$}
    \uput[90](-50,60){$a_2$}
    \uput[90](0,60){$a_3$}
    \uput[90](50,60){$a_4$}
    \uput[90](100,60){$a_5$}


    \uput[-90](-100,-60){$b_1$}
    \uput[-90](-50,-60){$b_2$}
    \uput[-90](0,-60){$b_3$}
    \uput[-90](50,-60){$b_4$}
    \uput[-90](100,-60){$b_5$}


    \uput[-190](-150,0){$u$}
    \uput[-120](-150,0){($a_0$,$b_0$)}
    \uput[-10](150,0){$v$}
    \uput[-60](150,0){($a_6$,$b_6$)}

\uput[-90](0,-140){(c) Graph representation of Example 2.}
\end{pspicture}
\caption{Illustrations for the definitions of $D_{\mathcal{O}}(u,v)$ and $P_{\mathcal{O}}(u,v)$. In both Example 1 and Example 2, $D_{\mathcal{O}}(u,v)$ is the dashed (poly)line, $P_{\mathcal{O}}(u,v)$ is the thick path, and the dotted polyline is the centered polyline between $u$ and $v$. In Example 1, the terminals $u,v$ are unobstructed and in Example 2, $u,v$ are obstructed. Figure (c) is the graph representation $\mathbb{G}_{\mathcal{O}}$ of the chain $\mathcal{O}$ in Example 2, where the weight of each edge is the length of the corresponding arc or line segment in $\mathcal{O}$. The shortest path between $u$ and $v$ in (c) is the thick path $P_{\mathbb{G}_{\mathcal{O}}}(u,v)$, which corresponds to the thick path $P_{\mathcal{O}}(u,v)$ in (b). }\label{fig:polyline}
\end{center}
\end{figure}

\begin{definition}\label{def:polyline}\rm
Let $D_{\mathcal{O}}(u,v)=up_1\ldots p_{n-1}v$ be the shortest polyline from $u$ to $v$ that consists of line segments $\overline{up_1}$, $\overline{p_1p_2}$, $\ldots$, $\overline{p_{n-1}v}$, where $p_i \in \overline{a_ib_i}$ for $1\leq i \leq n-1$. In other words, the polyline $D_{\mathcal{O}}(u,v)$ is the shortest polyline (which can be visualized as a ``rubber band'') from $u$ to $v$ that intersects line segments $\overline{a_1b_1},\ldots,\overline{a_{n-1}b_{n-1}}$ in that order. See Figure~\ref{fig:polyline} for an illustration. The length of $D_{\mathcal{O}}(u,v)$, denoted by $|D_{\mathcal{O}}(u,v)|$, is the sum of the lengths of the line segments in $D_{\mathcal{O}}(u,v)$. If the polyline $D_{\mathcal{O}}(u,v)$ contains a point $p_j$ which is $a_j$ or $b_j$, for some $1\leq j \leq n-1$, we say that $u,v$ are {\em obstructed}. If $u,v$ are unobstructed, then $D_{\mathcal{O}}(u,v)$ is the straight line segment $\overline{uv}$. Note that the converse of this statement is not true, because even when $D_{\mathcal{O}}(u,v)$ is the straight line segment $\overline{uv}$, $D_{\mathcal{O}}(u,v)$ may still contain a point $p_j \in \{a_j,b_j\}$. See Figure~\ref{fig:polyline} for an illustrations of obstructed and unobstructed cases\footnote{Note that for the purpose of bounding the stretch factor of the Delaunay triangulation, only the case where $u,v$ are unobstructed is relevant. However, for our proof to work, it is necessary to consider the case when $u,v$ are obstructed (see Section~\ref{section:tech}).}. \end{definition}

We have the following simple proposition.
\begin{proposition}\label{fact}
If $D_{\mathcal{O}}(u,v)$ is the straight line segment $\overline{uv}$, then $\overrightarrow{uv}$ stabs $O_1,\ldots,O_n$ in order. That is, for any $1\leq i<j \leq n$, $\overrightarrow{uv}$ enters $O_i$ no later than entering $O_j$ and exits $O_i$ no later than exiting $O_j$. \end{proposition}
\begin{proof}
For $2\leq i\leq n-1$, the two connecting arcs on $O_i$ do not overlap. Hence $\overline{a_{i-1}b_{i-1}}$ and $\overline{a_{i+1}b_{i+1}}$ must appear on the different sides of the line $a_ib_i$. Refer to Figure~\ref{fig:stab}. Without loss of generality, assume that $a_ib_i$ is a vertical line which divides the plane into two half-planes so that $\overline{a_{i-1}b_{i-1}}$ appears in the left half-plane and $\overline{a_{i+1}b_{i+1}}$ appears in the right half-plane. If $D_{\mathcal{O}}(u,v)$ is the straight line segment $\overline{uv}$, then $u,p_1,p_2,\ldots,p_{n-1},v$ are co-linear and they appear in that order in the ray $\overrightarrow{uv}$. This means that $\overrightarrow{uv}$ crosses $\overline{a_ib_i}$ from left to right. Thus $\overrightarrow{uv}$ enters $O_{i-1}$ and $O_i$ in the left half-plane. The line $a_ib_i$ divides $O_{i-1}$ and $O_i$ each into the left-portion and the right-portion (w.r.t. $a_ib_i$). The left-portion of $O_i$ is contained in the left-portion of $O_{i-1}$, and hence $\overrightarrow{uv}$ must enter $O_{i-1}$ no later than entering $O_i$. Likewise, $\overrightarrow{uv}$ exits $O_{i-1}$ and $O_i$ in the right half-plane. Since the right-portion of $O_{i-1}$ is contained in the right-portion of $O_i$, $\overrightarrow{uv}$ must exit $O_{i-1}$ no later than exiting $O_i$. This completes the proof.
\end{proof}

\begin{figure}[tbhp]
\psset{unit=0.6pt}
\begin{center} \small
\begin{pspicture}(-160,-150)(160,140)
    \psset{labelsep=5pt}

    \pscircle(-100,20){70}

    \pscircle(115,-20){55}

    \pscircle[linecolor=blue](0,0){100}

    \pscircle[linecolor=blue](60,0){87.5}

    \psline[linecolor=red](50,150)(50,-150)

  \psline[linestyle=dashed]{->}(-160,55)(159,-52)

  \uput[110](-160,55){$u$}

\uput[-70](159,-52){$v$}

\uput[50](50,87){$a_i$}
\uput[-50](50,-87){$b_i$}

\uput[50](-80,80){$a_{i-1}$}
\uput[-50](-110,-50){$b_{i-1}$}

\uput[50](139,25){$a_{i+1}$}
\uput[-50](90,-75){$b_{i+1}$}

\uput[-45](-35,125){$O_{i-1}$}
\uput[-45](90,100){$O_i$}

\uput[-45](-155,108){$O_{i-2}$}
\uput[-45](165,5){$O_{i+1}$}

\uput[-45](-75,-130){left half-plane}
\uput[-45](65,-130){right half-plane}

\cnode*[linecolor=red](-160,55){2pt}{u}

\cnode*[linecolor=red](159,-52){2pt}{v}

\end{pspicture}

\caption{An illustration for the proof of Fact~\ref{fact}. The left-portion of $O_i$ (w.r.t. $a_ib_i$) is contained in the left-portion of $O_{i-1}$, and the right-portion of $O_{i-1}$ is contained in the right-portion of $O_i$.}\label{fig:stab}
\end{center}
\end{figure}
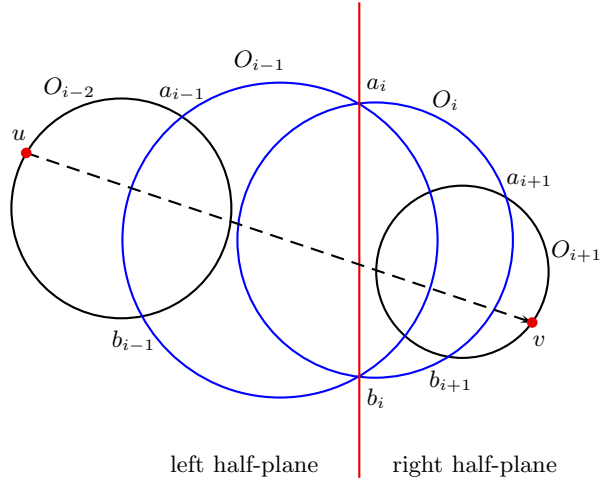

\begin{definition}\label{def:path}\rm
We define the shortest path between $u$ and $v$ in $\mathcal{O}$, denoted by $P_{\mathcal{O}}(u,v)$, to be the shortest path from $u$ to $v$ that consists of arcs in $\{A_1, \ldots, A_n\}\cup\{B_1,\ldots,B_n\}$ and line segments in $\{\overline{a_1b_1},\ldots, \overline{a_{n-1}b_{n-1}}\}$. A more formal definition of $P_{\mathcal{O}}(u,v)$ is given below. Consider a weighted graph representation of $\mathcal{O}$, denoted by $\mathbb{G}_{\mathcal{O}}$, whose vertex set is $\{u=a_0=b_0\}\cup\{a_1,\ldots,a_{n-1}\}\cup\{b_1,\ldots,b_{n-1}\}\cup \{v=a_n=b_n\}$ and whose edge set is $\{(a_{i-1},a_i) ~|~ 1\leq i \leq n\}\cup
\{(b_{i-1},b_i) ~|~ 1\leq i \leq n\}\cup
\{(a_i,b_i) ~|~ 1\leq i \leq n-1\}$. See Figure~\ref{fig:polyline}(c) for an illustration. There is a clear bijection between the edge set of $\mathbb{G}_{\mathcal{O}}$ and the set of the arcs and line segments $\{A_1, \ldots, A_n\}\cup\{B_1,\ldots,B_n\}\cup\{\overline{a_1b_1},\ldots, \overline{a_{n-1}b_{n-1}}\}$  in $\mathcal{O}$. The weight of any edge in $\mathbb{G}_{\mathcal{O}}$ is the length of the corresponding arc or line segment in $\mathcal{O}$, that is, $w(a_{i-1},a_i)=|A_i|$, $w(b_{i-1},b_i)=|B_i|$, and $w(a_i,b_i) = ||a_ib_i||$. If an arc or line segment is degenerated, then the weight is 0. Let $P_{\mathbb{G}_{\mathcal{O}}}(u,v)$ be the shortest path between $u$ and $v$ in $\mathbb{G}_{\mathcal{O}}$. Then $P_{\mathcal{O}}(u,v)$ is defined to be the path in $\mathcal{O}$ that corresponds to $P_{\mathbb{G}_{\mathcal{O}}}(u,v)$. The length of $P_{\mathcal{O}}(u,v)$, denoted $|P_{\mathcal{O}}(u,v)|$, is the total weight of the edges in $P_{\mathbb{G}_{\mathcal{O}}}(u,v)$. Refer to Figure~\ref{fig:polyline} for an illustration. Now we can define the {\em stretch factor} of a chain $\mathcal{O}$, denoted by $C_{\mathcal{O}}$, as the maximum value of $$
|P_{\mathcal{O}}(u,v)|/|D_{\mathcal{O}}(u,v)|,$$ over all terminals $u,v$ of $\mathcal{O}$.
\end{definition}


In this paper, we will prove that 1.998 is an upper bound on the stretch factor of the chain.

\section{An Overview of the Proof of Theorem~\ref{thm:main}} \label{sec:outline}

Due to the complex nature of the proof of Theorem~\ref{thm:main}, in this section we discuss the proof strategy and present an outline of the proof. The main technical details of the proof are captured in two lemmas, whose proofs are given in the subsequent sections.

When $\mathcal{O}$ has only one disk, it is easy to see that for all $\mathcal{O}, u$, and $v$, $|P_{\mathcal{O}}(u,v)|/|D_{\mathcal{O}}(u,v)| \leq \pi/2 < \rho$. So it is natural to attempt an inductive proof based on the number of disks in $\mathcal{O}$.
A simple induction would require us to show that adding a disk to a chain will not increase the stretch factor. However, this is not true because one can always increase the stretch factor of a chain by adding a disk to it, albeit by a very small amount~\cite{fwcg2010}.
We tackle this problem by amortized analysis. Specifically, we introduce a potential function $\Phi_{\mathcal{O}}$ (to be determined later) and define a target function $\Upsilon_{\mathcal{O}}(u,v)$:
\begin{align}
\Upsilon_{\mathcal{O}}(u,v) = |P_{\mathcal{O}}(u,v)| - \lambda|D_{\mathcal{O}}(u,v)|+\Phi_{\mathcal{O}},\label{defupsilon}\end{align} where $\lambda=1.8$ is a parameter whose value is determined by the potential function. Then we will try to prove that $\Upsilon_{\mathcal{O}}(u,v) < 0$ for all $\mathcal{O}, u$, and $v$. This is a sufficient condition for Theorem~\ref{thm:main}, as we will show later in this section.

The key component of the amortized analysis is the selection of an appropriate potential function $\Phi_{\mathcal{O}}$, which is described in the following.

   \begin{figure}[tbp]
\psset{unit=0.7pt}
\psset{labelsep=5pt}

\begin{center} \small
\begin{pspicture}(-150,-130)(150,130)

    \psline{->}(-130,0)(130,0)



    \pscircle(0,0){100}
    \pscircle(40,0){70}

    \psline[linestyle=dashed](0,0)(0,100)
    \psline[linestyle=dashed](40,0)(40,70)
    \psline[linestyle=dashed](0,0)(84,55)
    \psline[linestyle=dashed](40,0)(84,55)

    \psline[linestyle=dotted](0,70)(40,70)




    \uput[30](84,55){$a_{i-1}$}

    \psarc[linecolor=green,linewidth=3pt](0,0){100}{33}{90}
    \psarc[linecolor=red,linewidth=3pt](40,0){70}{51}{90}


    \uput[30](-120,-80){$O_{i-1}$}
    \uput[-90](120,-20){$O_{i}$}

    \uput[90](0,100){$q_{i-1}^{\rightarrow}$}
    \uput[90](40,65){$q_{i}^{\leftarrow}$}

    \uput[100](65,90){$\mathcal{P}_i$}

\uput[-90](0,-110){(a)}

    \uput[-90](50,0){$o_i$}
    \uput[-60](-10,0){$o_{i-1}$}

\end{pspicture}
\begin{pspicture}(-150,-130)(150,130)

    \psline{->}(-130,0)(130,0)


    \pscircle(0,0){100}
    \pscircle(-60,0){50}

    \psline[linestyle=dashed](0,0)(0,100)
    \psline[linestyle=dashed](-60,0)(-60,50)
    \psline[linestyle=dashed](0,0)(-92,38)
    \psline[linestyle=dashed](-60,0)(-92,38)

    \psline[linestyle=dotted](-60,50)(0,50)

    \psarc[linecolor=green,linewidth=3pt](0,0){100}{90}{156}
    \psarc[linecolor=red,linewidth=3pt](-60,0){50}{90}{131}

    \uput[0](80,-60){$O_i$}
    \uput[-90](-120,-20){$O_{i-1}$}

    \uput[-90](0,0){$o_i$}
    \uput[-60](-70,0){$o_{i-1}$}

    \uput[180](-92,38){$a_{i-1}$}

    \uput[90](0,100){$q_i^{\leftarrow}$}
    \uput[70](-60,45){$q_{i-1}^{\rightarrow}$}

    \uput[100](-45,90){$\mathcal{P}_i$}

\uput[-90](0,-110){(b)}
\end{pspicture}
\begin{pspicture}(-140,-130)(140,150)

    \psline{->}(-130,0)(130,0)


    \pscircle(40,0){70}
    \pscircle(-60,0){50}

    \psline[linestyle=dashed](40,0)(40,70)
    \psline[linestyle=dashed](-60,0)(-60,50)
    \psline[linestyle=dashed](40,0)(-21,31)
    \psline[linestyle=dashed](-60,0)(-21,31)


    \psline[linestyle=dotted](-60,50)(40,50)

    \psarc[linecolor=green,linewidth=3pt](40,0){70}{90}{152}
    \psarc[linecolor=green,linewidth=3pt](-60,0){50}{38}{90}

    \uput[0](90,-50){$O_i$}
    \uput[-90](-80,-40){$O_{i-1}$}

    \uput[-90](50,0){$o_i$}
    \uput[-60](-70,0){$o_{i-1}$}

    \uput[180](25,35){$a_{i-1}$}

    \uput[90](40,70){$q_i^{\leftarrow}$}
    \uput[90](-60,45){$q_{i-1}^{\rightarrow}$}

    \uput[100](-5,60){$\mathcal{P}_i$}

\uput[-90](0,-110){(c)}
\end{pspicture}
\begin{pspicture}(-180,-130)(160,150)

    \psline(0,0)(40,0)

    \psline(0,0)(-110,-61)

    \pscircle(0,0){100}
    \pscircle(40,0){70}
    \pscircle(-110,-61){50}

    \psline[linestyle=dashed](0,0)(0,100)
    \psline[linestyle=dashed](40,0)(40,70)
    \psline[linestyle=dashed](0,0)(84,55)
    \psline[linestyle=dashed](40,0)(84,55)

    \psline[linestyle=dashed](-110,-61)(-134,-18)
    \psline[linestyle=dashed](0,0)(-49,87)

    \psline[linestyle=dashed](-110,-61)(-99,-12)
    \psline[linestyle=dashed](0,0)(-99,-12)


    \uput[30](84,55){$a_{i}$}
    \uput[45](-99,-12){$a_{i-1}$}

    \psarc[linecolor=green,linewidth=3pt](0,0){100}{33}{90}
    \psarc[linecolor=red,linewidth=3pt](40,0){70}{51}{90}

    \psarc[linecolor=green,linewidth=3pt](0,0){100}{119}{187}
    \psarc[linecolor=green,linewidth=3pt](-110,-61){50}{78}{119}

    \uput[30](50,-100){$O_i$}
    \uput[-90](-165,-80){$O_{i-1}$}
    \uput[-90](125,-20){$O_{i+1}$}

    \uput[90](-49,87){$q_i^{\leftarrow}$}
    \uput[90](0,100){$q_i^{\rightarrow}$}
    \uput[100](40,65){$q_{i+1}^{\leftarrow}$}
    \uput[100](-130,-18){$q_{i-1}^{\rightarrow}$}

    \uput[100](-100,40){$\mathcal{P}_i$}
    \uput[100](70,90){$\mathcal{P}_{i+1}$}

\uput[-90](0,-110){(d)}

\end{pspicture}

\caption{Illustrations for Definition~\ref{peak}. The path $\mathcal{P}_i$ consists of ``heavy'' (red or dark-gray) arcs and/or ``light'' (green or light-gray) arcs. There are three possible cases for two consecutive disks $O_{i-1}$ and $O_i$ to intersect, as shown in case (a), (b), and (c).
In case (a), $Q_{i-1}^{\rightarrow}$ is light and $Q_i^{\leftarrow}$ is heavy. In this case, $r_{i-1} > r_i$. In case (b), $Q_{i-1}^{\rightarrow}$ is heavy and $Q_i^{\leftarrow}$ is light. In this case, $r_i > r_{i-1}$.  In case (c), both $Q_{i-1}^{\rightarrow}$ and $Q_i^{\leftarrow}$ are light. In this case, $r_{i-1} > r_i$, $r_i > r_{i-1}$ and $r_{i-1} = r_i$ are all possible.
Figure (d) shows an example of $\mathcal{P}_i$ and $\mathcal{P}_{i+1}$ defined on three consecutive disks $O_{i-1},O_i$ and $O_{i+1}$. Note that $O_i$ has two different peaks. The peak $q_i^{\leftarrow}$ is defined with regard to the preceding disk $O_{i-1}$. The peak $q_i^{\rightarrow}$ is defined with regard to the succeeding disk $O_{i+1}$.
}\label{fig:phi}
\end{center}
\end{figure}
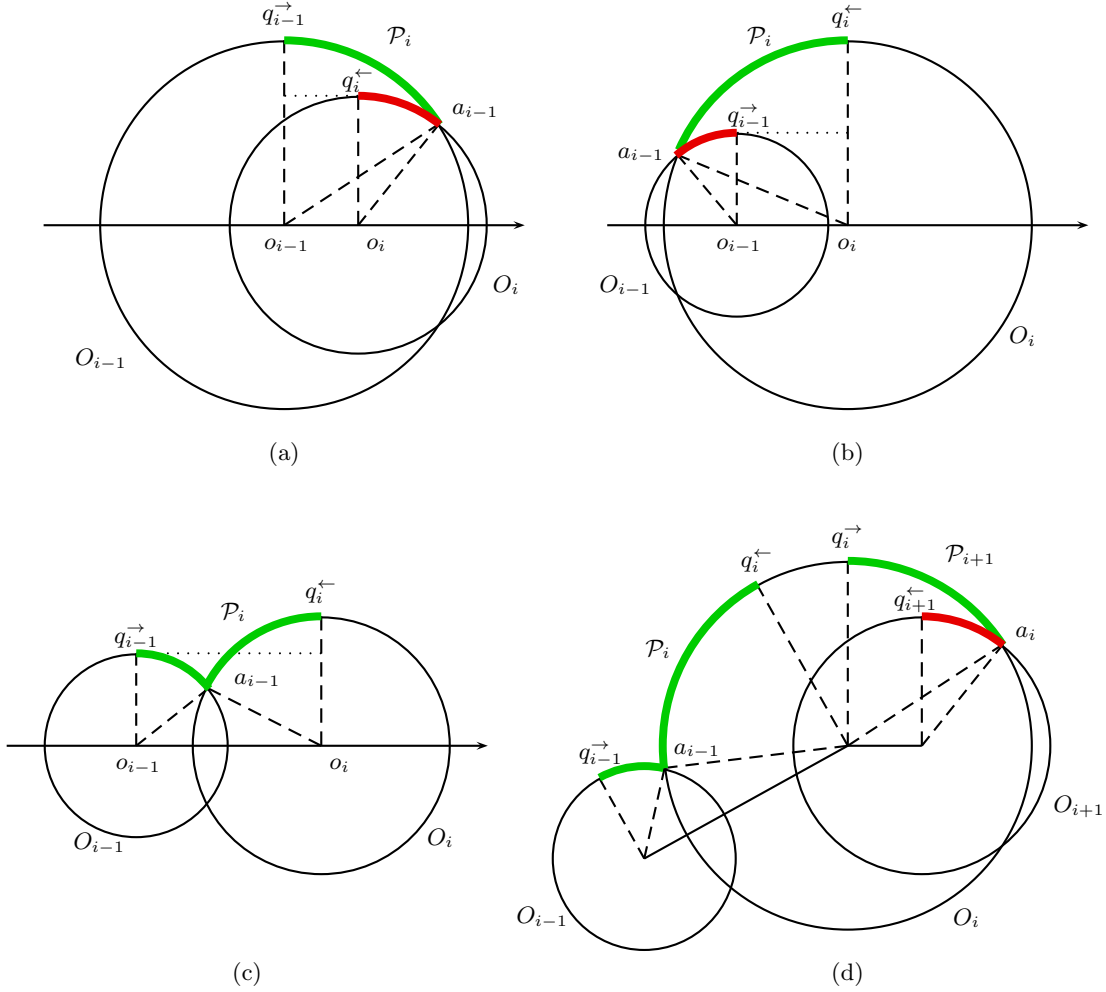

\begin{definition}\rm\label{peak}
Let $O_{i-1}$ and $O_i$ be two consecutive disks in a chain $\mathcal{O}$. Without loss of generality, assume that their centers $o_{i-1},o_i$ lie on a horizontal line and that $a_{i-1}$ is on or above the line $o_{i-1}o_i$. See Figure~\ref{fig:phi}. Let $q_{i-1}^{\rightarrow}$ be the ``peak'' of $O_{i-1}$ with regard to $o_{i-1}o_i$, i.e., the point on the upper boundary of $O_{i-1}$ that is the farthest from the line $o_{i-1}o_i$. Likewise, let $q_i^{\leftarrow}$ be the ``peak'' of $O_i$ with regard to $o_{i-1}o_i$ (the sign $^{\rightarrow}$ or $^{\leftarrow}$ indicates whether the peak is defined with the preceding disk or the succeeding disk in $\mathcal{O}$). Let $Q_{i-1}^{\rightarrow}$ be the upper arc between $q_{i-1}^{\rightarrow}$ and $a_{i-1}$ on the boundary of $O_{i-1}$ and let $Q_i^{\leftarrow}$ be the upper arc between $q_i^{\leftarrow}$ and $a_{i-1}$ on the boundary of $O_i$. If $Q_{i-1}^{\rightarrow}$ is inside $O_i$, we say that it is ``heavy'' (colored red\footnote{Red appears as dark-gray in black and white print.} in Figure~\ref{fig:phi}); otherwise we say that $Q_{i-1}^{\rightarrow}$ is ``light'' (colored green\footnote{Green appears as light-gray in black and white print.} in Figure~\ref{fig:phi}). Likewise, we say that $Q_i^{\leftarrow}$ is ``heavy''  or ``light'' depending on whether it is inside $O_{i-1}$. Let $\mathcal{P}_i$ be a path from $q_{i-1}^{\rightarrow}$ to $q_i^{\leftarrow}$ consisting of  $Q_{i-1}^{\rightarrow}$ and $Q_i^{\leftarrow}$. Let $H_i$ be the horizontal distance traveled along the path $\mathcal{P}_i$ with light arcs contributing positively to $H_i$ and heavy arcs contributing negatively to $H_i$. Similarly,  let $V_i$ be the vertical distance traveled along the path $\mathcal{P}_i$ with light arcs contributing positively to $V_i$ and heavy arcs contributing negatively to $V_i$. The potential function is defined as follows: \begin{align}
\Phi_{\mathcal{O}} = \varphi(r_n-r_1)- \frac{\varphi}{3}\sum_{i=2}^n(2H_i+V_i),\label{phi_def}
\end{align}
where $r_i$ is the radius of $O_i$ and $\varphi = \frac{3}{\sqrt{5}}(1-\lambda/\rho)$ is a parameter that determines the ``weight'' of the potential function. When $n=1$, i.e., when $\mathcal{O}$ has only one disk, $\Phi_{\mathcal{O}} = \varphi(r_1-r_1)- \frac{\varphi}{3}\sum_{i=2}^1(2H_i+V_i) = 0.$
\end{definition}

The potential function is constructed with three goals in mind, given  as three lemmas (Lemma~\ref{notincrease}, Lemma~\ref{lemma:main}, and Lemma~\ref{lemma:pot}) in the following. Lemma~\ref{notincrease} is easy to prove, but Lemma~\ref{lemma:main} and Lemma~\ref{lemma:pot} are quite technical and they will be proved in subsequent sections.

First goal, the potential function $\Phi_{\mathcal{O}}$ is constructed such that adding a disk to $\mathcal{O}$ does not increase $\Phi_{\mathcal{O}}$, as shown below.
\begin{lemma}\label{notincrease}
$\Phi_{\mathcal{O}} \leq \Phi_{\mathcal{O}_{1,n-1}}$.
\end{lemma}
\begin{proof}
We have the following observations, whose proofs are given in the appendix (Section~\ref{sec:appendix-hivi}):
\begin{align}
H_i &= ||o_io_{i-1}||, \label{hivi1}\\
V_i &\geq |r_i-r_{i-1}|.\label{hivi2}
\end{align}

By the triangle inequality, $||o_no_{n-1}|| \geq ||o_na_{n-1}||-||o_{n-1}a_{n-1}||=r_n-r_{n-1}$. Combining this with (\ref{hivi1}) and (\ref{hivi2}), we have \begin{align}
2H_n+V_n&\geq 2||o_no_{n-1}||+|r_n-r_{n-1}|\nonumber\\
&\geq 2(r_n-r_{n-1})+(r_n-r_{n-1})\nonumber\\
&=3(r_n-r_{n-1}).\end{align}
Therefore
\begin{align*}
\Phi_{\mathcal{O}}-\Phi_{\mathcal{O}_{1,n-1}} &= [\varphi(r_n-r_1)- \frac{\varphi}{3}\sum_{i=2}^n(2H_i+V_i)] - [\varphi(r_{n-1}-r_1)- \frac{\varphi}{3}\sum_{i=2}^{n-1}(2H_i+V_i)]\\
&= \varphi(r_n-r_{n-1})-\frac{\varphi}{3}(2H_n+V_n) \\
&\leq \varphi(r_n-r_{n-1})-\frac{\varphi}{3}(3(r_n-r_{n-1})) = 0.\end{align*}
\end{proof}

Second goal, the potential function $\Phi_{\mathcal{O}}$ is constructed such that $\Upsilon_{\mathcal{O}}(u,v) = |P_{\mathcal{O}}(u,v)| - \lambda|D_{\mathcal{O}}(u,v)|+\Phi_{\mathcal{O}} < 0$ for all $\mathcal{O}, u$, and $v$, as shown below.

\begin{lemma}[Proof in Section~\ref{section:tech}]\label{lemma:main}
For all $\mathcal{O},u$, and $v$, $\Upsilon_{\mathcal{O}}(u,v) < 0$.
\end{lemma}

Third goal, the potential function $\Phi_{\mathcal{O}}$ is constructed such that its value can be bounded from below as a function of $|P_{\mathcal{O}}(u,v)|$ for some chain $\mathcal{O}$ with certain extremal properties, as shown below.

\begin{lemma}[Proof in Section~\ref{section:main}]\label{lemma:pot}
Let $\mathbb{O}$ be a set of chains whose stretch factor is greater than or equal to a threshold $\tau$. If $\mathbb{O}$ is non-empty, then there exists a chain $\mathcal{O}^* \in \mathbb{O}$ with terminals $u,v$ such that $|P_{\mathcal{O}^*}(u,v)|/|D_{\mathcal{O}^*}(u,v)| \geq \tau$ and
$\Phi_{\mathcal{O}^*} \geq -\frac{\sqrt{5}\varphi}{3}|P_{\mathcal{O}^*}(u,v)|$.
\end{lemma}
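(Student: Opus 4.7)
The plan is to pick an extremal chain $\mathcal{O}^*\in\mathbb{O}$ and verify the bound via a Cauchy--Schwarz estimate on each peak path $\mathcal{P}_i$ combined with a geometric comparison of the peak paths to the shortest path $P_{\mathcal{O}^*}(u,v)$. Unfolding the definition of $\Phi_{\mathcal{O}^*}$, the target inequality is equivalent to
\[
\sum_{i=2}^n(2H_i+V_i)\;\leq\;3(r_n-r_1)+\sqrt{5}\,|P_{\mathcal{O}^*}(u,v)|,
\]
and since it is scale-invariant and symmetric under reversing the chain (which swaps $r_1\leftrightarrow r_n$), I would assume without loss of generality that $r_n\geq r_1$, so the first term on the right is nonnegative.

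For the choice of $\mathcal{O}^*$, I would extract an extremal element of $\mathbb{O}$: among all chains in $\mathbb{O}$ realizing stretch ratio at least $\tau$ at some pair of terminals $(u,v)$, take one of minimum number of circles, and among those minimize a functional such as $\sum_{i=2}^n(2H_i+V_i)-\sqrt{5}\,|P_{\mathcal{O}}(u,v)|$. A compactness argument on a normalized subfamily (fixing $|D_{\mathcal{O}}(u,v)|=1$) yields existence of such a minimizer. On this $\mathcal{O}^*$ I would then apply Cauchy--Schwarz with the weight vector $(2,1)$, whose Euclidean norm is $\sqrt{5}$, to obtain
\[
2H_i+V_i\;\leq\;\sqrt{5}\,\sqrt{H_i^2+V_i^2}\;\leq\;\sqrt{5}\,|\mathcal{P}_i|,
\]
where the second inequality follows from Minkowski's inequality applied to the two arcs composing $\mathcal{P}_i$: each arc is at least as long as its chord, and summing the two chord vectors yields a vector of Euclidean norm $\sqrt{H_i^2+V_i^2}$ by the construction of $H_i$ and $V_i$ in Definition~\ref{peak}.

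The main obstacle is the geometric comparison $\sum_{i=2}^n|\mathcal{P}_i|\leq|P_{\mathcal{O}^*}(u,v)|$, which together with the previous estimate would give the required bound on $\sum(2H_i+V_i)$. The difficulty is that on each interior circle $O_j$ two peak arcs appear---one from $\mathcal{P}_j$ ending at $a_{j-1}$, one from $\mathcal{P}_{j+1}$ ending at $a_j$---and in favorable configurations they sit inside the upper arc $A_j$ traversed by the shortest path, so their total length is at most $|A_1|+\cdots+|A_n|$. For generic chains, however, peaks may stick out beyond the $A_j$'s or overlap each other, and the shortest path may prefer to use the lower arcs $B_j$ or cut across with line segments, so extra care is needed. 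This is where the extremality of $\mathcal{O}^*$ should come in: any overshoot or overlap should correspond to a local modification (shifting a single circle, adjusting its radius, or merging two adjacent circles) that keeps the chain in $\mathbb{O}$ and strictly improves the extremizing functional, contradicting the choice of $\mathcal{O}^*$. Once the alignment is established, combining with the Cauchy--Schwarz step yields
\begin{align*}
\Phi_{\mathcal{O}^*}
&\geq\varphi(r_n-r_1)-\tfrac{\varphi}{3}\bigl(3(r_n-r_1)+\sqrt{5}\,|P_{\mathcal{O}^*}(u,v)|\bigr)\\
&=-\tfrac{\sqrt{5}\,\varphi}{3}\,|P_{\mathcal{O}^*}(u,v)|,
\end{align*}
as required.
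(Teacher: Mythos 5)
Your high-level framing is right --- reduce to showing $\sum(2H_i+V_i)\le 3(r_n-r_1)+\sqrt5\,|P_{\mathcal O^*}(u,v)|$, assume $r_n\ge r_1$, and invoke the weight vector $(2,1)$ of norm $\sqrt5$ --- but there are two places where the argument as written does not close, and the second one is the actual content of the lemma.

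First, the Cauchy--Schwarz is applied per index: $2H_i+V_i\le\sqrt5\,\sqrt{H_i^2+V_i^2}\le\sqrt5\,|\mathcal P_i|$, and then you need $\sum_i|\mathcal P_i|\le|P_{\mathcal O^*}(u,v)|$. That target inequality is stronger than what is needed and, in general, false. Consecutive paths $\mathcal P_i$ and $\mathcal P_{i+1}$ both place arcs on the circle $O_i$ (one from $q_i^{\leftarrow}$ to $a_{i-1}$, one from $a_i$ to $q_i^{\rightarrow}$), and when $a_{i-1},a_i$ are both near the peak these two arcs overlap substantially, so their combined length can exceed $|A_i|$. The paper sidesteps this by applying Cauchy--Schwarz once to the \emph{aggregate} $\bigl(\sum H_i,\sum V_i\bigr)$, which is strictly weaker than the termwise bound because the $H_i,V_i$ are all nonnegative, and then bounding $\sqrt{(\sum H_i)^2+(\sum V_i)^2}$, not $\sum|\mathcal P_i|$.

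Second, you concede that the geometric comparison to $|P_{\mathcal O^*}(u,v)|$ is ``the main obstacle'' and gesture at a local-modification argument, but this step is the heart of the proof and is left unproved. The paper's resolution uses three ingredients you do not reproduce: (i) the extremal chain minimizes $\sum_i r_i$ (not your $\sum(2H_i+V_i)-\sqrt5|P|$), and that specific choice is what drives the shrinking argument; (ii) from that extremality it derives that $u,v$ are unobstructed and, crucially, that \emph{both} boundary paths $A_1\ldots A_n$ and $B_1\ldots B_n$ are shortest; (iii) it then rotates the circles to make the centers collinear (preserving $|A_i|+|B_i|$), which forces the peaks of adjacent circles to coincide so that the $\mathcal P_i$ concatenate into one walk, and after cancelling opposite-signed (red/green) overlaps the residual walk is a genuine subpath of $A_1\ldots A_n$ of length $|P_{\mathcal O^*}(u,v)|$. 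Without the alignment step there is no way to convert the family $\{\mathcal P_i\}$ into a subpath of a shortest path, and without the specific radius-minimizing extremality there is no way to get the shortest-path equality that makes $|A_1\ldots A_n|=|P_{\mathcal O^*}(u,v)|$ after rotation. So the proposal identifies the right target but does not supply the mechanism that makes the bound true.
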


Assuming Lemma~\ref{lemma:main} and Lemma~\ref{lemma:pot} are true, we can prove the main theorem.

\begin{proof}[Proof for Theorem~\ref{thm:main}]
We will prove that for all $\mathcal{O}$, the stretch factor $C_{\mathcal{O}}$ is less than $\rho=1.998$.
For the sake of contradiction, suppose that there is a non-empty set $\mathbb{O}$ of chains $\mathcal{O}$ with stretch factor $C_{\mathcal{O}} \geq \rho$. By Lemma~\ref{lemma:pot}, there exists a chain $\mathcal{O}^* \in \mathbb{O}$ with terminals $u$ and $v$ such that \begin{align}
|P_{\mathcal{O}^*}(u,v)|/|D_{\mathcal{O}^*}(u,v)| \geq \rho, \label{lerho}
\end{align} and
\begin{align}
\Phi_{\mathcal{O}^*} \geq -\frac{\sqrt{5}\varphi}{3}|P_{\mathcal{O}^*}(u,v)|.\label{phibound}
\end{align}
By Lemma~\ref{lemma:main},
\begin{align}
\Upsilon_{\mathcal{O}^*}(u,v) = |P_{\mathcal{O}^*}(u,v)| - \lambda|D_{\mathcal{O}^*}(u,v)|+\Phi_{\mathcal{O}^*}
 < 0.\label{upsilon_bound}
\end{align}
Combining (\ref{phibound}) and (\ref{upsilon_bound}), we have
\begin{align}
&|P_{\mathcal{O}^*}(u,v)| - \lambda|D_{\mathcal{O}^*}(u,v)|-\frac{\sqrt{5}\varphi}{3}|P_{\mathcal{O}^*}(u,v)|\nonumber\\
\leq & |P_{\mathcal{O}^*}(u,v)| - \lambda|D_{\mathcal{O}^*}(u,v)|+\Phi_{\mathcal{O}^*}\nonumber\\
= & \Upsilon_{\mathcal{O}^*}(u,v)\nonumber\\
< & 0.\label{last_ineq}
\end{align}
Recall that $\varphi = \frac{3}{\sqrt{5}}(1-\lambda/\rho)$. We have \begin{align}
&|P_{\mathcal{O}^*}(u,v)| - \lambda|D_{\mathcal{O}^*}(u,v)|-\frac{\sqrt{5}\varphi}{3}|P_{\mathcal{O}^*}(u,v)|\nonumber\\
=& (1-\frac{\sqrt{5}\varphi}{3})|P_{\mathcal{O}^*}(u,v)| - \lambda|D_{\mathcal{O}^*}(u,v)|\nonumber\\
=& (1-(1-\lambda/\rho))|P_{\mathcal{O}^*}(u,v)| - \lambda|D_{\mathcal{O}^*}(u,v)|\nonumber\\
=& \frac{\lambda}{\rho}|P_{\mathcal{O}^*}(u,v)| - \lambda|D_{\mathcal{O}^*}(u,v)|\nonumber\\
=& \lambda\left(\frac{|P_{\mathcal{O}^*}(u,v)|}{\rho} - |D_{\mathcal{O}^*}(u,v)|\right).\label{last_ineq_new}
\end{align}

From (\ref{last_ineq}) and (\ref{last_ineq_new}), we have $\lambda\left(\frac{|P_{\mathcal{O}^*}(u,v)|}{\rho} - |D_{\mathcal{O}^*}(u,v)|\right) < 0.$ In other words, $\frac{|P_{\mathcal{O}^*}(u,v)|}{|D_{\mathcal{O}^*}(u,v)|} < \rho$. This is a contradiction to (\ref{lerho}). Therefore $\mathbb{O}$ must be empty and hence $C_{\mathcal{O}} < \rho$ for all $\mathcal{O}$.
\end{proof}

As a a special case of Theorem~\ref{thm:main}, we can derive an improved upper bound on the stretch factor of the Delaunay triangulation.

\begin{proof}[Proof for Corollary~\ref{cor:main}]
We will prove that the stretch factor of a Delaunay triangulation $D$ of a set of points $S$ in the plane is less than $\rho=1.998$.
For any two points $x,y \in S$, let $\mathcal{T}$ be the sequence of triangles in $D$ crossed by the ray $\overrightarrow{xy}$. Let $\mathcal{O}$ be the corresponding sequence of circumscribed circles of the triangles in $\mathcal{T}$. It is clear that $\mathcal{O}$ is a chain and $x,y$ are terminals of $\mathcal{O}$. The shortest path distance from $x$ to $y$ is at most $|P_{\mathcal{O}}(x,y)|$. Since $\overrightarrow{xy}$ stabs through all circles in $\mathcal{O}$, $x,y$ are unobstructed and hence $|D_{\mathcal{O}}(x,y)|= ||xy||$. By Theorem~\ref{thm:main}, the stretch factor of a Delaunay triangulation is at most $|P_{\mathcal{O}}(x,y)|/||xy|| =|P_{\mathcal{O}}(x,y)|/|D_{\mathcal{O}}(x,y)| < \rho$.
\end{proof}

The rest of the paper contains the proofs for Lemma~\ref{lemma:main} and Lemma~\ref{lemma:pot}.

\section{Proof of Lemma~\ref{lemma:main}}\label{section:tech}
In this section, we will prove that for all $\mathcal{O},u$, and $v$, $\Upsilon_{\mathcal{O}}(u,v) < 0$.

Recall that $\Upsilon_{\mathcal{O}}(u,v) = |P_{\mathcal{O}}(u,v)| - \lambda|D_{\mathcal{O}}(u,v)|+\Phi_{\mathcal{O}}$. Proceed by induction on $n$, the number of disks in $\mathcal{O}$. If $n=1$, then $\mathcal{O}$ has a single disk $O_1$. In this case,  $\Phi_{\mathcal{O}} = 0$, $P_{\mathcal{O}}(u,v)$ is the shorter arc between $u$ and $v$ on the boundary of $O_1$, and $D_{\mathcal{O}}(u,v)$ is the chord between $u$ and $v$ inside $O_1$. So $
|P_{\mathcal{O}}(u,v)|\leq \pi/2|D_{\mathcal{O}}(u,v)| < 1.8|D_{\mathcal{O}}(u,v)|=\lambda|D_{\mathcal{O}}(u,v)|
$ and hence $
\Upsilon_{\mathcal{O}}(u,v) = |P_{\mathcal{O}}(u,v)| - \lambda|D_{\mathcal{O}}(u,v)|+\Phi_{\mathcal{O}}  = |P_{\mathcal{O}}(u,v)| - \lambda|D_{\mathcal{O}}(u,v)|
< 0
$ when $n=1$.

Now assuming that the statement is true when there are less than $n$ disks in $\mathcal{O}$, where $n \geq 2$, we will prove that it is true when there are $n$ disks in $\mathcal{O}$.

For the rest of this section, fix an arbitrary chain $\mathcal{O}=(O_1, O_2, \ldots, O_n)$ and an arbitrary terminal $u$ on the boundary of $O_1$.

Before presenting the technical details of the proof for Lemma~\ref{lemma:main}, we give a road-map of the steps involved in the proof. Essentially, this is a step-by-step process of narrowing down the worst case to a specific configuration where the stretch factor of the chain can be bounded by some smooth single-variant functions, whose values can be easily bounded.
\begin{enumerate}
\item First, we eliminate some simple special cases, including the case when $v=a_{n-1}$ or $v=b_{n-1}$ (Proposition~\ref{endsok}) and the case when $u$ and $v$ are obstructed (Proposition~\ref{obsok}).
\item Then by Proposition~\ref{atends} and Proposition~\ref{endofa} we narrow down the worst case to the condition when $v$ is the ``pivot point'' on the boundary of $O_n$, i.e., when the shortest path from $u$ to $v$ including $A_n$ and the the shortest path from $u$ to $v$ including $B_n$ have the same length.
\item Finally we distinguish two cases depending on the angle between $uv$ and $o_{n-1}o_n$. Proposition~\ref{upper} deals with the case when the angle is large, and Proposition~\ref{last} deals with the case when the angle is small, which is by-far the most complicated case.
\end{enumerate}

First consider the special case when $v$ is either $a_{n-1}$ or $b_{n-1}$.
\begin{proposition}
$\Upsilon_{\mathcal{O}}(u,a_{n-1}) < 0$ and $\Upsilon_{\mathcal{O}}(u,b_{n-1}) < 0$.\label{endsok}
\end{proposition}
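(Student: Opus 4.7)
The plan is to reduce the case $v = a_{n-1}$ to the induction hypothesis applied to the shorter chain $\mathcal{O}_{1,n-1}$ with terminals $u$ and $a_{n-1}$; the case $v = b_{n-1}$ is symmetric. The first step is to verify that $a_{n-1}$ is a valid terminal for $\mathcal{O}_{1,n-1}$: it lies on the boundary of $O_{n-1}$ (as a shared boundary point with $O_n$), and by Property (2) of a chain it is not in the interior of $O_{n-2}$ (the connecting arcs $C_{n-1}^{(n)}$ and $C_{n-1}^{(n-2)}$ on $O_{n-1}$ do not overlap).

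Then three comparisons between $\mathcal{O}$ and $\mathcal{O}_{1,n-1}$ have to be established, each exploiting the degenerate fact that $v = a_{n-1}$ already sits on the chord $\overline{a_{n-1}b_{n-1}}$. For the polyline side, in $D_{\mathcal{O}}(u,v) = u p_1 \cdots p_{n-1} v$ the triangle inequality forces the optimal choice $p_{n-1} = a_{n-1} = v$, which collapses the last segment and leaves exactly the polyline $D_{\mathcal{O}_{1,n-1}}(u,v)$, so $|D_{\mathcal{O}}(u,v)| = |D_{\mathcal{O}_{1,n-1}}(u,v)|$. For the graph-path side, I would argue that any shortest path to $v=a_{n-1}$ in $\mathcal{O}_{1,n-1}$ arrives either via the arc $A_{n-1}$ (which is also an edge of $\mathcal{O}$) or via the arc of $O_{n-1}$ going from $b_{n-2}$ the long way around to $a_{n-1}$; this long arc decomposes as $B_{n-1}$ followed by the connecting arc $C_{n-1}^{(n)}$ from $b_{n-1}$ to $a_{n-1}$, and in $\mathcal{O}$ the arc $C_{n-1}^{(n)}$ may be replaced by the shorter chord $\overline{a_{n-1}b_{n-1}}$, which belongs to the edge set of $P_{\mathcal{O}}$. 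Hence $|P_{\mathcal{O}}(u,v)| \leq |P_{\mathcal{O}_{1,n-1}}(u,v)|$. For the potential, Proposition~\ref{notincrease} gives $\Phi_{\mathcal{O}} \leq \Phi_{\mathcal{O}_{1,n-1}}$. Assembling these three inequalities yields
\[
\Upsilon_{\mathcal{O}}(u,v) \;\leq\; \Upsilon_{\mathcal{O}_{1,n-1}}(u,v) \;<\; 0,
\]
where the strict inequality is the induction hypothesis for $(n-1)$-circle chains.

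The main obstacle I anticipate is the path-comparison step: one must make precise which edges of $\mathcal{O}_{1,n-1}$ are incident to its terminal $v=a_{n-1}$ (in particular identifying the ``$b$-side'' edge as $B_{n-1} \cup C_{n-1}^{(n)}$) and verify that the chord-for-arc substitution is always a legitimate transformation within the edge set of $P_{\mathcal{O}}$. Care is also needed in the degenerate configuration $a_{n-1} = b_{n-1}$ (tangent $O_{n-1},O_n$), but there the connecting arc $C_{n-1}^{(n)}$ collapses to a point and the substitution becomes trivial.
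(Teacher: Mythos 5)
Your proposal is correct and follows essentially the same route as the paper: reduce to the sub-chain $\mathcal{O}_{1,n-1}$ with terminals $u, a_{n-1}$, apply the inductive hypothesis, compare the three quantities ($|D|$ equal, $|P|$ no larger, $\Phi$ no larger via Proposition~\ref{notincrease}), and conclude. The paper states the comparisons $|P_{\mathcal{O}}(u,a_{n-1})|\leq |P_{\mathcal{O}_{1,n-1}}(u,a_{n-1})|$ and $|D_{\mathcal{O}}(u,a_{n-1})|=|D_{\mathcal{O}_{1,n-1}}(u,a_{n-1})|$ as ``easy to verify''; you supply exactly the right justification, including the correct identification $B_{n-1}' = B_{n-1} \cup C_{n-1}^{(n)}$ and the chord-for-arc substitution $\overline{a_{n-1}b_{n-1}}$ available in $\mathcal{O}$'s edge set, as well as the observation that $a_{n-1}$ is a legitimate terminal of $\mathcal{O}_{1,n-1}$.
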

\begin{proof}
We first prove that $\Upsilon_{\mathcal{O}}(u,a_{n-1}) < 0$. Refer to Figure~\ref{fig:at-end} for an illustration.
Consider the sub-chain $\mathcal{O}_{1,n-1}=(O_1,\ldots,O_{n-1})$. Since $u,a_{n-1}$ are terminals of $\mathcal{O}_{1,n-1}$, by the inductive hypothesis, \begin{align}
&\Upsilon_{\mathcal{O}_{1,n-1}}(u,a_{n-1}) \nonumber\\
=& |P_{\mathcal{O}_{1,n-1}}(u,a_{n-1})| - \lambda|D_{\mathcal{O}_{1,n-1}}(u,a_{n-1})|+\Phi_{\mathcal{O}_{1,n-1}} \nonumber\\ <& 0\label{upsilon_n-1}.\end{align}

We claim that
\begin{align}
|P_{\mathcal{O}}(u,a_{n-1})|\leq |P_{\mathcal{O}_{1,n-1}}(u,a_{n-1})|,\label{ineq:p}
\end{align}  and
\begin{align}
|D_{\mathcal{O}}(u,a_{n-1})| = |D_{\mathcal{O}_{1,n-1}}(u,a_{n-1})|.\label{ineq:d}
\end{align}


\begin{figure}[tbp]
\psset{unit=0.7pt}
\begin{center} \small
\begin{pspicture}(-120,-110)(150,130)
    \psset{labelsep=6pt}


\pscircle[fillstyle=solid,fillcolor=lightyellow,linestyle=none](-107,4){35}

    \psarc[linewidth=1.5pt](-107,4){35}{10}{220}
    \psarc[linewidth=1.5pt](-107,4){35}{220}{302}

\pscircle[fillstyle=solid,fillcolor=lightyellow,linestyle=none](-27,-31){61}
    \psarc[linewidth=1.5pt](-27,-31){61}{101}{138}
    \psarc[linewidth=1.5pt](-27,-31){61}{175}{372}

\pscircle[fillstyle=solid,fillcolor=lightyellow,linestyle=none](21,41){61}
    \psarc[linewidth=1.5pt](21,41){61}{20}{192}

\pscircle[fillstyle=solid,fillcolor=lightyellow,linestyle=none](68,14){49}
    \psarc[linewidth=1.5pt](68,14){49}{59}{77}
    \psarc[linewidth=1.5pt](68,14){49}{224}{280}

\pscircle[fillstyle=solid,fillcolor=lightyellow,linestyle=none](123,4){60}

\psarc[linewidth=1.5pt,linecolor=blue,linestyle=dotted](68,14){49}{280}{59}

    \psarc[linewidth=1.5pt](123,4){60}{26}{120}
    \psarc[linewidth=1.5pt](123,4){60}{-140}{26}

    \psline[linewidth=1.5pt](-72,10)(-88,-26)
    \psline[linewidth=1.5pt](-39,28)(32,-19)
    \psline[linewidth=1.5pt](79,62)(32,-19)
    \psline[linewidth=1.5pt](93,56)(76,-33)

    \uput[170](-72,10){$a_1$}
    \uput[-30](-88,-26){$b_1$}

    \uput[90](-110,35){$A_1$}
    \uput[-90](-110,-30){$B_1$}

    \uput[0](-39,28){$a_2$}
    \uput[220](32,-19){$b_2$($b_3$)}

    \uput[90](-60,20){$A_2$}
    \uput[-90](-50,-85){$B_2$}

    \uput[180](79,62){$a_3$}

    \uput[90](0,95){$A_3$}
    \uput[-90](42,-30){$B_3$}

    \uput[-10](93,56){$v$}
    \uput[-40](90,56){($a_4$,$a_5$,$b_5$)}
    
    \uput[0](76,-34){$b_4$}

    \uput[90](90,55){$A_4$}
    \uput[-90](65,-30){$B_4$}



    \uput[90](150,55){$A_5$}
    \uput[-90](135,-55){$B_5$}



    \psline[linecolor=red,linestyle=dashed](-134,-19)(93,56)

    \uput[-190](-134,-19){$u$}
    \uput[-120](-125,-19){($a_0$,$b_0$)}


\end{pspicture}

\caption{An illustration for Proposition~\ref{endsok}. }\label{fig:at-end}
\end{center}
\end{figure}

To verify (\ref{ineq:p}), we note that any arc or line segment in $P_{\mathcal{O}_{1,n-1}}(u,a_{n-1})$ can also be used by $P_{\mathcal{O}}(u,a_{n-1})$, with the exception that the arc $B_{n-1}$ in $\mathcal{O}_{1,n-1}$ between $b_{n-1}$ and $a_{n-1}$ (the dotted arc in Figure~\ref{fig:at-end}) is replaced by a ``shortcut''  in $\mathcal{O}$ via $\overline{b_{n-1}a_{n-1}}$. So clearly $|P_{\mathcal{O}}(u,a_{n-1})|\leq |P_{\mathcal{O}_{1,n-1}}(u,a_{n-1})|$.

Now we verify equality (\ref{ineq:d}). By Definition~\ref{def:polyline}, $D_{\mathcal{O}}(u,a_{n-1})$ is the shortest polyline that consists of line segments $\overline{up_1}$, $\overline{p_1p_2}$, $\ldots$, $\overline{p_{n-1}a_{n-1}}$, where $p_i \in \overline{a_ib_i}$ for $1\leq i \leq n-1$. We can assume $p_{n-1} = a_{n-1}$, because $|\overline{p_{n-2}a_{n-1}}| + |\overline{a_{n-1}a_{n-1}}| = |\overline{p_{n-2}a_{n-1}}| \leq |\overline{p_{n-2}p_{n-1}}| +|\overline{p_{n-1}a_{n-1}}|$ for any $p_{n-1}$ by triangle inequality. So $D_{\mathcal{O}}(u,a_{n-1})$ is the shortest polyline that consists of line segments $\overline{up_1}$, $\overline{p_1p_2}$, $\ldots$, $\overline{p_{n-2}a_{n-1}}$ which is the same as $D_{\mathcal{O}_{1,n-1}}(u,a_{n-1})$. Hence $|D_{\mathcal{O}}(u,a_{n-1})|=|D_{\mathcal{O}_{1,n-1}}(u,a_{n-1})|$ and we have equality (\ref{ineq:d}).

From (\ref{upsilon_n-1}), (\ref{ineq:p}) and (\ref{ineq:d}), we have \begin{align}
\Upsilon_{\mathcal{O}}(u,a_{n-1}) &= |P_{\mathcal{O}}(u,a_{n-1})|- \lambda|D_{\mathcal{O}}(u,a_{n-1})|+\Phi_{\mathcal{O}}\nonumber\\
&\leq |P_{\mathcal{O}_{1,n-1}}(u,a_{n-1})|- \lambda|D_{\mathcal{O}_{1,n-1}}(u,a_{n-1})|+\Phi_{\mathcal{O}}\nonumber\\
&=  |P_{\mathcal{O}_{1,n-1}}(u,a_{n-1})| - \lambda|D_{\mathcal{O}_{1,n-1}}(u,a_{n-1})|+\Phi_{\mathcal{O}_{1,n-1}} +\Phi_{\mathcal{O}}-\Phi_{\mathcal{O}_{1,n-1}} \nonumber\\
&= \Upsilon_{\mathcal{O}_{1,n-1}}(u,a_{n-1})+\Phi_{\mathcal{O}}-\Phi_{\mathcal{O}_{1,n-1}} \nonumber\\
&< \Phi_{\mathcal{O}}-\Phi_{\mathcal{O}_{1,n-1}} \nonumber\\
&\leq 0.
\end{align}
The last inequality is from Lemma~\ref{notincrease}. Similarly, we can prove that $\Upsilon_{\mathcal{O}}(u,b_{n-1}) < 0$. This completes the proof of Proposition~\ref{endsok}.
\end{proof}

\begin{figure}[tbp]
\psset{unit=0.7pt}
\begin{center} \small
\begin{pspicture}(-120,-110)(150,130)
    \psset{labelsep=6pt}


\pscircle[fillstyle=solid,fillcolor=lightyellow,linestyle=none](-107,4){35}

    \psarc[linewidth=1.5pt](-107,4){35}{10}{220}
    \psarc[linewidth=1.5pt](-107,4){35}{220}{302}

\pscircle[fillstyle=solid,fillcolor=lightyellow,linestyle=none](-27,-31){61}
    \psarc[linewidth=1.5pt](-27,-31){61}{101}{138}
    \psarc[linewidth=1.5pt](-27,-31){61}{175}{372}

\pscircle[fillstyle=solid,fillcolor=lightyellow,linestyle=none](21,41){61}
    \psarc[linewidth=1.5pt](21,41){61}{20}{192}

\pscircle[fillstyle=solid,fillcolor=lightyellow,linestyle=none](68,14){49}
    \psarc[linewidth=1.5pt](68,14){49}{59}{77}
    \psarc[linewidth=1.5pt](68,14){49}{224}{280}

\pscircle[fillstyle=solid,fillcolor=lightyellow,linestyle=none](123,4){60}
    \psarc[linewidth=1.5pt](123,4){60}{26}{120}
    \psarc[linewidth=1.5pt](123,4){60}{-140}{26}

    \psline[linewidth=1.5pt](-72,10)(-88,-26)
    \psline[linewidth=1.5pt](-39,28)(32,-19)
    \psline[linewidth=1.5pt](79,62)(32,-19)
    \psline[linewidth=1.5pt](93,56)(76,-34)

    \uput[-50](-72,10){$a_1$}
    \uput[-30](-88,-26){$b_1$}

    \uput[90](-110,35){$A_1$}
    \uput[-90](-110,-30){$B_1$}

    \uput[0](-39,28){$a_2$}
    \uput[220](32,-19){$b_2$($b_3$)}

    \uput[90](-60,20){$A_2$}
    \uput[-90](-50,-85){$B_2$}

    \uput[180](79,62){$a_3$}

    \uput[90](0,95){$A_3$}
    \uput[-90](42,-30){$B_3$}

    \uput[-20](93,56){$a_4$}
    \uput[0](76,-34){$b_4$}

    \uput[90](90,55){$A_4$}
    \uput[-90](65,-30){$B_4$}



    \uput[90](150,55){$A_5$}
    \uput[-90](135,-55){$B_5$}

    \psline[linecolor=red,linestyle=dashed](168,44)(-72,10)
    \psline[linecolor=red,linestyle=dashed](-140,19)(-72,10)



    \uput[120](-140,19){$u$}
    \uput[180](-140,19){($a_0$,$b_0$)}

    \uput[60](168,44){$v$}
    \uput[0](168,44){($a_5$,$b_5$)}

\end{pspicture}

\caption{An illustration for Proposition~\ref{obsok}. }\label{fig:obstruct}
\end{center}
\end{figure}


Next consider the case when $u$ and $v$ are obstructed. Refer to Figure~\ref{fig:obstruct} for an illustration.
\begin{proposition} If $u$ and $v$ are obstructed, then $\Upsilon_{\mathcal{O}}(u,v) < 0$.\label{obsok}
\end{proposition}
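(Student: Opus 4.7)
The plan is to exploit the obstruction point to split the chain into two shorter chains and invoke the inductive hypothesis on each. By the definition of ``obstructed'', $D_{\mathcal{O}}(u,v)$ contains some $p_j$ equal to $a_j$ or $b_j$ for some $1 \le j \le n-1$; by the $a$/$b$ symmetry of the chain I may assume $p_j = a_j$. I then split $\mathcal{O}$ into the two sub-chains $\mathcal{O}_{1,j}$ and $\mathcal{O}_{j+1,n}$, each containing at most $n-1$ circles. The point $a_j$ serves as a valid terminal for both: it lies on the boundaries of $O_j$ and of $O_{j+1}$, and Property~(2) of Definition~\ref{def:circles} forces $a_j$ to lie outside the interiors of $O_{j-1}$ and $O_{j+2}$ (otherwise $a_j$, which is an endpoint of $C_j^{(j+1)}$ and of $C_{j+1}^{(j)}$, would be a non-endpoint point of $C_j^{(j-1)}$ or $C_{j+1}^{(j+2)}$, contradicting the non-overlap of connecting arcs).

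Three relations will then tie the sub-chains back together. First, because the rubber band $D_{\mathcal{O}}(u,v)$ is pinned at $a_j$, each of its two halves is itself the shortest polyline on the corresponding sub-chain (otherwise the whole rubber band could be shortened), which gives the equality $|D_{\mathcal{O}}(u,v)| = |D_{\mathcal{O}_{1,j}}(u,a_j)| + |D_{\mathcal{O}_{j+1,n}}(a_j,v)|$. Second, the subadditivity $|P_{\mathcal{O}}(u,v)| \le |P_{\mathcal{O}_{1,j}}(u,a_j)| + |P_{\mathcal{O}_{j+1,n}}(a_j,v)|$ holds after a mild edge-replacement: in $\mathcal{O}_{1,j}$ the ``$B$-side'' arc of $O_j$ runs from $b_{j-1}$ to the terminal $a_j$ and is not itself an edge of $\mathcal{O}$, but it can be simulated inside $\mathcal{O}$ by the arc $B_j$ (from $b_{j-1}$ to $b_j$) followed by the chord $\overline{a_jb_j}$, whose combined length is no larger since a chord is no longer than the arc it subtends. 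A symmetric substitution takes care of the analogous extra arc in $\mathcal{O}_{j+1,n}$. Third, $\Phi_{\mathcal{O}} \le \Phi_{\mathcal{O}_{1,j}} + \Phi_{\mathcal{O}_{j+1,n}}$: telescoping the definition of $\Phi$ reduces this to the single inequality $2H_{j+1}+V_{j+1}\ge 3(r_{j+1}-r_j)$, which is exactly the estimate already established inside the proof of Proposition~\ref{notincrease} applied to the pair $(O_j,O_{j+1})$.

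Combining the three relations yields $\Upsilon_{\mathcal{O}}(u,v) \le \Upsilon_{\mathcal{O}_{1,j}}(u,a_j) + \Upsilon_{\mathcal{O}_{j+1,n}}(a_j,v)$, and the inductive hypothesis, applicable to both strictly smaller sub-chains, gives $\Upsilon_{\mathcal{O}}(u,v) < 0$. The case $p_j = b_j$ is handled symmetrically. I expect the main obstacle to be verifying the second relation: the ``extra'' sub-chain arc on $O_j$ genuinely does not belong to the edge set of $\mathcal{O}$, so one must explicitly substitute $B_j$ together with the chord $\overline{a_jb_j}$ as above; the remaining steps are routine chain-splitting together with a direct reuse of Proposition~\ref{notincrease}.
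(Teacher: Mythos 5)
Your decomposition $\mathcal{O}_{1,j}$ and $\mathcal{O}_{j+1,n}$ differs from the paper's, which splits at $\mathcal{O}_{1,j+1}$ and $\mathcal{O}_{j+1,n}$ so that the two sub-chains \emph{share} the circle $O_{j+1}$. That choice buys the paper an exact additivity $\Phi_{\mathcal{O}} = \Phi_{\mathcal{O}_{1,j+1}} + \Phi_{\mathcal{O}_{j+1,n}}$ (the telescoping $\sum(2H_i+V_i)$ splits with no leftover term) and makes the $P$-subadditivity painless: in $\mathcal{O}_{1,j+1}$ the chord $\overline{a_jb_j}$ is available as an edge, so the one arc on $O_{j+1}$ not shared with $\mathcal{O}$ is never used by a shortest path. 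The price the paper pays is a small case split: when $j = n-1$, $\mathcal{O}_{1,j+1}$ is all of $\mathcal{O}$ and the inductive hypothesis does not apply, so Proposition~\ref{endsok} has to be invoked instead. Your split makes both sub-chains strictly smaller, eliminating that case, but it forces the two places you correctly flagged: $\Phi$ is only \emph{subadditive}, which you recover by one extra application of the estimate $2H_{j+1}+V_{j+1}\ge 3(r_{j+1}-r_j)$ from the proof of Proposition~\ref{notincrease}, and the ``extra'' $B$-side arc of $O_j$ (from $b_{j-1}$ to $a_j$) in $\mathcal{O}_{1,j}$ and the symmetric arc of $O_{j+1}$ in $\mathcal{O}_{j+1,n}$ are not edges of $\mathcal{O}$ and must be replaced by $B_j$ plus the chord $\overline{a_jb_j}$ (resp.\ the chord plus $B_{j+1}$). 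Your chord-versus-arc replacement and your Property~(2) argument that $a_j$ is a legal terminal of both sub-chains are both sound, so the proof goes through; it is a valid alternative that trades the paper's $j=n-1$ case split for a slightly more delicate $P$ and $\Phi$ bookkeeping.
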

\begin{proof}
If $u$ and $v$ are obstructed, then $D_{\mathcal{O}}(u,v)$ contains a point $p_j$ that is either $a_j$ or $b_j$, for some $1\leq j\leq n-1$. Without loss of generality, assume $p_j=a_j$. Consider two sub-chains of $\mathcal{O}$: $\mathcal{O}_{1,j+1}=(O_1, \ldots, O_{j+1})$ and $\mathcal{O}_{j+1,n}=(O_{j+1}, \ldots, O_n)$. Points $u,a_j$ are terminals of $\mathcal{O}_{1,j+1}$ and points $a_j,v$ are terminals of $\mathcal{O}_{j+1,n}$. If $j=n-1$, by Proposition~\ref{endsok}, \begin{align}
\Upsilon_{\mathcal{O}_{1,j+1}}(u,a_j)=\Upsilon_{\mathcal{O}}(u,a_{n-1}) < 0.\label{uaj1}
\end{align}
If $j<n-1$, $\mathcal{O}_{1,j+1}$ has less than $n$ disks, and by the inductive hypothesis \begin{align}
\Upsilon_{\mathcal{O}_{1,j+1}}(u,a_j) < 0.\label{uaj2}
\end{align}
On the other hand, $\mathcal{O}_{j+1,n}$ has less than $n$ disks, and by the inductive hypothesis, \begin{align}
\Upsilon_{\mathcal{O}_{j+1,n}}(a_j,v) < 0.\label{ajv}
\end{align}
For the similar reasons as those for (\ref{ineq:p}) and (\ref{ineq:d}), we have \begin{align}
|P_{\mathcal{O}}(u,v)| &\leq |P_{\mathcal{O}_{1,j+1}}(u,a_j)|+|P_{\mathcal{O}_{j+1,n}}(a_j,v)|, \mbox{~~~~and}\label{puv}\\
|D_{\mathcal{O}}(u,v)| &= |D_{\mathcal{O}_{1,j+1}}(u,a_j)|+|D_{\mathcal{O}_{j+1,n}}(a_j,v)|.\label{duv}
\end{align} Also,
\begin{align}
\Phi_{\mathcal{O}} &= \varphi(r_n-r_1)- \frac{\varphi}{3}\sum_{i=2}^n(2H_i+V_i) \nonumber\\
&= \varphi(r_n-r_{j+1}+r_{j+1}-r_1)- \frac{\varphi}{3}\sum_{i=2}^{j+1}(2H_i+V_i)-\frac{\varphi}{3}\sum_{i=j+2}^n(2H_i+V_i)\nonumber\\
&= \varphi(r_{j+1}-r_1)- \frac{\varphi}{3}\sum_{i=2}^{j+1}(2H_i+V_i)+\varphi(r_n-r_{j+1})-\frac{\varphi}{3}\sum_{i=j+2}^n(2H_i+V_i)\nonumber\\
&= \Phi_{\mathcal{O}_{1,j+1}}+\Phi_{\mathcal{O}_{j+1,n}}.\label{pppuv}
\end{align} Combining (\ref{uaj1})---(\ref{pppuv}), we have \begin{align}
\Upsilon_{\mathcal{O}}(u,v) &= |P_{\mathcal{O}}(u,v)|-\lambda|D_{\mathcal{O}}(u,v)|+ \Phi_{\mathcal{O}}\nonumber\\
&\leq |P_{\mathcal{O}_{1,j+1}}(u,a_j)|-\lambda|D_{\mathcal{O}_{1,j+1}}(u,a_j)|+\Phi_{\mathcal{O}_{1,j+1}} \nonumber\\
&~~~+|P_{\mathcal{O}_{j+1,n}}(a_j,v)| -\lambda|D_{\mathcal{O}_{j+1,n}}(a_j,v)|+\Phi_{\mathcal{O}_{j+1,n}} \nonumber\\
&=
\Upsilon_{\mathcal{O}_{1,j+1}}(u,a_j)+\Upsilon_{\mathcal{O}_{j+1,n}}(a_j,v) \nonumber\\ &< 0,\end{align}
as desired. This completes the proof of Proposition~\ref{obsok}.
\end{proof}

\begin{figure}[tbp]
\psset{unit=0.7pt}
\begin{center} \small
\begin{pspicture}(-120,-160)(150,130)
    \psset{labelsep=6pt}


\pscircle[fillstyle=solid,fillcolor=lightyellow,linestyle=none](-107,4){35}

    \psarc(-107,4){35}{10}{220}
    \psarc(-107,4){35}{220}{302}

\pscircle[fillstyle=solid,fillcolor=lightyellow,linestyle=none](-27,-31){61}
    \psarc(-27,-31){61}{101}{138}
    \psarc(-27,-31){61}{175}{372}

\pscircle[fillstyle=solid,fillcolor=lightyellow,linestyle=none](21,41){61}
    \psarc(21,41){61}{20}{192}

\pscircle[fillstyle=solid,fillcolor=lightyellow,linestyle=none](68,14){49}
    \psarc(68,14){49}{55}{77}
    \psarc(68,14){49}{224}{262}

\pscircle[fillstyle=solid,fillcolor=lightyellow,linestyle=none](93,4){50}
    \psarc[linewidth=3pt,linecolor=magenta](93,4){50}{-30}{88}
    \psarc(93,4){50}{-130}{-30}

    \psline(-72,10)(-88,-26)
    \psline(-39,28)(32,-19)
    \psline(79,62)(32,-19)
    \psline(95,54)(61,-34)

    \uput[-30](-98,18){$a_1$}
    \uput[-30](-88,-26){$b_1$}

    \uput[90](-110,35){$A_1$}
    \uput[-90](-110,-30){$B_1$}

    \uput[0](-39,30){$a_2$}
    \uput[220](32,-19){$b_2$($b_3$)}

    \uput[90](-60,20){$A_2$}
    \uput[-90](-50,-85){$B_2$}

    \uput[180](79,62){$a_3$}

    \uput[90](0,95){$A_3$}
    \uput[-90](42,-25){$B_3$}

    \uput[-20](90,46){$a_4$}
    \uput[0](60,-34){$b_4$}

    \uput[90](90,55){$A_4$}
    \uput[-90](60,-30){$B_4$}



    \uput[90](125,44){$A_5$}
    \uput[-90](125,-38){$B_5$}

    \psline[linestyle=dashed](-136,-15)(95,54)
    \psline[linestyle=dashed](-136,-15)(136,-21)



    \uput[150](-136,-30){$u$}

    \uput[0](142,0){$\widehat{A}$}

    \uput[10](138,25){$v$}




\uput[-90](0,-115){(a) The trick arc is the unobstructed arc $\widehat{A}$. }
\end{pspicture}

\begin{pspicture}(-120,-160)(160,130)
    \psset{labelsep=6pt}


\pscircle[fillstyle=solid,fillcolor=lightyellow,linestyle=none](-107,4){35}

    \psarc(-107,4){35}{10}{220}
    \psarc(-107,4){35}{220}{302}

\pscircle[fillstyle=solid,fillcolor=lightyellow,linestyle=none](-27,-31){61}
    \psarc(-27,-31){61}{101}{138}
    \psarc(-27,-31){61}{175}{372}

\pscircle[fillstyle=solid,fillcolor=lightyellow,linestyle=none](21,41){61}
    \psarc(21,41){61}{20}{192}

\pscircle[fillstyle=solid,fillcolor=lightyellow,linestyle=none](68,14){49}
    \psarc(68,14){49}{55}{77}
    \psarc(68,14){49}{224}{262}

\pscircle[fillstyle=solid,fillcolor=lightyellow,linestyle=none](93,4){50}
    \psarc(93,4){50}{-130}{88}

    \psline(-72,10)(-88,-26)
    \psline(-39,28)(32,-19)
    \psline(79,62)(32,-19)
    \psline(95,54)(61,-34)

\psarc[linewidth=3pt,linecolor=red](-107,4){35}{215}{305}
\psline[linewidth=3pt,linecolor=red](-73,10)(-89,-26)
\psarc[linewidth=3pt,linecolor=red](-27,-31){61}{101}{138}
\psline[linewidth=3pt,linecolor=red](-39,29)(32,-18)
\psline[linewidth=3pt,linecolor=red](79,62)(32,-19)
\psarc[linewidth=3pt,linecolor=red](68,14){49}{55}{77}
\psarc[linewidth=3pt,linecolor=red](93,4){50}{24}{88}

    \uput[-30](-98,18){$a_1$}
    \uput[-30](-88,-26){$b_1$}

    \uput[90](-110,35){$A_1$}
    \uput[-90](-110,-30){$B_1$}

    \uput[0](-39,30){$a_2$}
    \uput[220](32,-19){$b_2$($b_3$)}

    \uput[90](-60,20){$A_2$}
    \uput[-90](-50,-85){$B_2$}

    \uput[180](79,62){$a_3$}

    \uput[90](0,95){$A_3$}
    \uput[-90](42,-25){$B_3$}

    \uput[-20](90,46){$a_4$}
    \uput[0](60,-34){$b_4$}

    \uput[90](90,55){$A_4$}
    \uput[-90](60,-30){$B_4$}



    \uput[90](125,44){$A_5$}
    \uput[-90](125,-38){$B_5$}

    \psline[linestyle=dashed](-136,-15)(138,25)



    \uput[150](-136,-30){$u$}

    \uput[0](142,0){$\widehat{A}$}

    \uput[10](138,25){$v$}




\uput[-90](0,-115){(b) The thick path is $P_{\mathcal{O}}^{A_n}(u,v)$, which is the shortest paths from $u$ to $v$ that include the arcs $A_n$. }
\end{pspicture}

\begin{pspicture}(-120,-150)(150,130)
    \psset{labelsep=6pt}


\pscircle[fillstyle=solid,fillcolor=lightyellow,linestyle=none](-107,4){35}

    \psarc(-107,4){35}{10}{220}
    \psarc(-107,4){35}{220}{302}

\pscircle[fillstyle=solid,fillcolor=lightyellow,linestyle=none](-27,-31){61}
    \psarc(-27,-31){61}{101}{138}
    \psarc(-27,-31){61}{175}{372}

\pscircle[fillstyle=solid,fillcolor=lightyellow,linestyle=none](21,41){61}
    \psarc(21,41){61}{20}{192}

\pscircle[fillstyle=solid,fillcolor=lightyellow,linestyle=none](68,14){49}
    \psarc(68,14){49}{55}{77}
    \psarc(68,14){49}{224}{262}

\pscircle[fillstyle=solid,fillcolor=lightyellow,linestyle=none](93,4){50}
    \psarc(93,4){50}{-130}{88}

    \psline(-72,10)(-88,-26)
    \psline(-39,28)(32,-19)
    \psline(79,62)(32,-19)
    \psline(95,54)(61,-34)

\psarc[linewidth=3pt,linecolor=blue](-107,4){35}{215}{305}
\psline[linewidth=3pt,linecolor=blue](-71,10)(-87,-26)
\psarc[linewidth=3pt,linecolor=blue](-27,-31){61}{101}{138}
\psline[linewidth=3pt,linecolor=blue](-39,27)(32,-20)
\psarc[linewidth=3pt,linecolor=blue](68,14){49}{224}{262}
\psarc[linewidth=3pt,linecolor=blue](93,4){50}{-130}{24}

    \uput[-30](-98,18){$a_1$}
    \uput[-30](-88,-26){$b_1$}

    \uput[90](-110,35){$A_1$}
    \uput[-90](-110,-30){$B_1$}

    \uput[0](-39,30){$a_2$}
    \uput[220](32,-19){$b_2$($b_3$)}

    \uput[90](-60,20){$A_2$}
    \uput[-90](-50,-85){$B_2$}

    \uput[180](79,62){$a_3$}

    \uput[90](0,95){$A_3$}
    \uput[-90](42,-25){$B_3$}

    \uput[-20](90,46){$a_4$}
    \uput[0](60,-34){$b_4$}

    \uput[90](90,55){$A_4$}
    \uput[-90](60,-30){$B_4$}



    \uput[90](125,44){$A_5$}
    \uput[-90](125,-38){$B_5$}

    \psline[linestyle=dashed](-136,-15)(138,25)



    \uput[150](-136,-30){$u$}

    \uput[0](142,0){$\widehat{A}$}

    \uput[10](138,25){$v$}




\uput[-90](0,-115){(c) The thick path is $P_{\mathcal{O}}^{B_n}(u,v)$, which is the shortest paths from $u$ to $v$ that include the arcs $B_n$. }
\end{pspicture}

\caption{
An illustration for Definition~\ref{def:pivot}. $v$ is pivotal if $|P_{\mathcal{O}}^{A_n}(u,v)|=|P_{\mathcal{O}}^{B_n}(u,v)|$.}\label{fig:pivotal}
\end{center}
\end{figure}

In the rest of the proof for Lemma~\ref{lemma:main}, we assume that $u$ and $v$ are unobstructed, even if it is not explicitly stated. The following definitions are needed. Recall that $u$ is fixed.

\begin{definition}\rm
Consider the set of terminal points $v$ on $O_n$ such that $u$ and $v$ are unobstructed. Such a set forms an arc (illustrated by the think arc in Figure~\ref{fig:pivotal} (a)), denoted by $\widehat{A}$ and referred to as the {\em unobstructed arc}\footnote{For notational convenience, assume that $\widehat{A}$ includes its endpoints. This makes $\widehat{A}$ a closed set.}. Denote by $P_{\mathcal{O}}^{A_n}(u,v)$ and $P_{\mathcal{O}}^{B_n}(u,v)$ the shortest paths from $u$ to $v$ that include the arcs $A_n$ and $B_n$ (on the boundary of $O_n$), respectively. See Figure~\ref{fig:pivotal} (b) and (c) for an illustration. We call $v$ {\em pivotal} if $|P_{\mathcal{O}}^{A_n}(u,v)| = |P_{\mathcal{O}}^{B_n}(u,v)|$. Note that $\widehat{A}$ cannot have more than one pivotal point because $|P_{\mathcal{O}}^{A_n}(u,v)| - |P_{\mathcal{O}}^{B_n}(u,v)|$ varies monotonically as $v$ moves along $\widehat{A}$ in one direction.\label{def:pivot}
\end{definition}

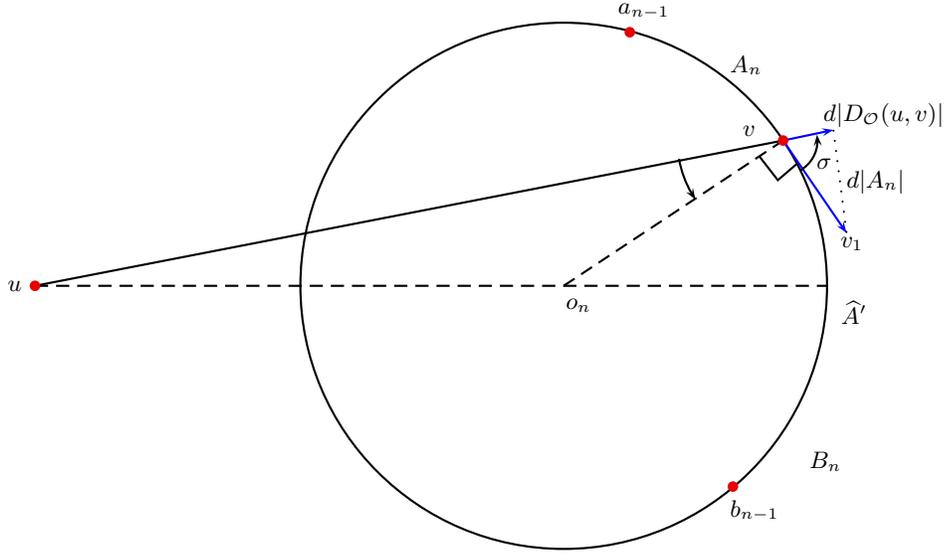
\begin{figure}[tbp]
\psset{unit=1pt}
\begin{center} \small
\begin{pspicture}(-220,-120)(140,120)
    \psset{labelsep=5pt}

    \pscircle(0,0){100}
  \psline[linestyle=dashed](-200,0)(100,0)

  \psline[linestyle=dashed](0,0)(83,55)
  \psline(-200,0)(83,55)

\uput[0](100,-10){$\widehat{A}'$}


  \psline[linecolor=blue]{->}(83,55)(107,20)

  \uput[180](-200,0){$u$}
  \uput[300](0,0){$o_n$}

\uput[60](66,52){$v$}
\uput[70](25,96){$a_{n-1}$}

\cnode*[linecolor=red](64,-76){2pt}{b}
\uput[-60](64,-76){$b_{n-1}$}

\uput[70](65,75){$A_n$}
\uput[70](95,-75){$B_n$}

\psline[linecolor=blue]{->}(83,55)(102,59)

\psline[linestyle=dotted](107,20)(102,59)

\psarc{->}(83,55){13}{-60}{10}

\uput[-30](91,50){$\sigma$}

\uput[-30](100,20){$v_1$}

\uput[-30](102,45){$d|A_n|$}

\uput[60](105,55){$d|D_{\mathcal{O}}(u,v)|$}

\psarc{->}(83,55){40}{190}{214}

\cnode*[linecolor=red](-200,0){2pt}{z}
\cnode*[linecolor=red](25,96){2pt}{a}
\cnode*[linecolor=red](83,55){2pt}{y}

\psline(74,49)(81,40)(88,46)

\end{pspicture}

\caption{An illustration for Proposition~\ref{atends}. The angle $\angle uvo_n$ is in range $(-\pi/2,\pi/2)$. As $v$ moves away from $a_{n-1}$ along $\widehat{A}'$, $\angle uvo_n$ decreases. When $v$ moves below $uo_n$, $\angle uvo_n$ becomes negative. }\label{fig:convex}
\end{center}
\end{figure}

\begin{proposition} If $u$ is fixed and $u,v$ are unobstructed, then the maximum of $\Upsilon_{\mathcal{O}}(u,v)$ occurs when $v$ is an endpoint of $\widehat{A}$ or when $v$ is pivotal in  $\widehat{A}$.
\label{atends}
\end{proposition}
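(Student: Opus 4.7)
The plan is to reduce the variation of $\Upsilon_{\mathcal{O}}(u,v)$ over $v \in \widehat{A}$ to a one-parameter convexity problem. Parametrize $v$ by arc length $t$ along $\widehat{A}$. For $v \in \widehat{A}$, unobstructedness gives $D_{\mathcal{O}}(u,v) = \overline{uv}$, so $|D_{\mathcal{O}}(u,v)| = ||uv||$, and $|P_{\mathcal{O}}(u,v)| = \min(|P_{\mathcal{O}}^{A_n}(u,v)|, |P_{\mathcal{O}}^{B_n}(u,v)|)$. Setting $F(t) = |P_{\mathcal{O}}^{A_n}(u,v)| - \lambda ||uv||$ and $G(t) = |P_{\mathcal{O}}^{B_n}(u,v)| - \lambda ||uv||$, we have $\Upsilon_{\mathcal{O}}(u,v) = \min(F(t),G(t)) + \Phi_{\mathcal{O}}$.

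Observe that $|A_n(t)| + |B_n(t)|$ equals the length of the arc of $O_n$ lying outside $O_{n-1}$, which is constant in $t$; hence $|A_n|$ and $|B_n|$ are affine in $t$ with opposite slopes. Since $|P_{\mathcal{O}}^{A_n}(u,v)| - |A_n|$ and $|P_{\mathcal{O}}^{B_n}(u,v)| - |B_n|$ equal the shortest-path lengths from $u$ to the fixed points $a_{n-1}$ and $b_{n-1}$ in $\mathcal{O}$ (both independent of $v$), the difference $F(t) - G(t)$ is affine in $t$. Consequently $\widehat{A}$ contains at most one pivotal point $t^*$, which splits $\widehat{A}$ into at most two closed sub-intervals on each of which $\min(F,G)$ is realized by a single one of $F$ or $G$ throughout.

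The key geometric fact is that $||uv(t)||$ is a concave function of $t$ throughout $\widehat{A}$. Placing $o_n$ at the origin, letting $v = r_n(\cos\theta,\sin\theta)$, $u$ at distance $d$ from $o_n$ and angle $\phi$, and $\psi = \theta - \phi$, a direct differentiation of $||uv||^2 = r_n^2 + d^2 - 2 r_n d\cos\psi$ gives $\frac{d^2||uv||}{dt^2}\cdot||uv||^3 = \frac{d}{r_n}\bigl(\cos\psi\cdot||uv||^2 - r_n d\sin^2\psi\bigr)$, whose sign is opposite to that of $(v-u)\cdot(v-o_n) = r_n(r_n - d\cos\psi)$. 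For any $v \in \widehat{A}$, Fact~\ref{fact} implies that $\overrightarrow{uv}$ stabs $O_1,\ldots,O_n$ in order; in particular the segment $\overline{uv}$ enters $O_n$ at some point strictly preceding $v$ (since it must cross the chord $\overline{a_{n-1}b_{n-1}}$, which lies in the interior of $O_n$) and so exits $O_n$ precisely at $v$. Hence $(v-u)\cdot(v-o_n)\ge 0$ and $\frac{d^2||uv||}{dt^2}\le 0$ throughout $\widehat{A}$.

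Therefore $-\lambda||uv(t)||$ is convex on $\widehat{A}$, and on each of the (at most two) sub-intervals determined by $t^*$, $\Upsilon_{\mathcal{O}}(u,v(t)) = \mathrm{const} \pm t - \lambda ||uv(t)|| + \Phi_{\mathcal{O}}$ is the sum of an affine function and a convex function, hence convex. A convex function on a closed interval attains its maximum at an endpoint, and these endpoints are precisely the boundary points of $\widehat{A}$ together with, when it lies in $\widehat{A}$, the pivotal point $t^*$; this is exactly what the proposition asserts. The technical crux is the concavity claim: it is the unobstructedness of $v$, channeled through Fact~\ref{fact}, that forces $v$ to be the exit of $\overrightarrow{uv}$ on $O_n$ and thus places $v$ in the concave regime of $||uv||$.
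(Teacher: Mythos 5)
Your high-level strategy is the same as the paper's: on each piece of $\widehat{A}$ where the $\min$ in $|P_{\mathcal{O}}(u,v)|$ is achieved by a single one of the two paths, $\Upsilon_{\mathcal{O}}(u,v)$ is an affine function of $v$'s arc position plus the convex term $-\lambda||uv||$, so it is convex and is maximized at the endpoints; the pieces are delimited by the boundary of $\widehat{A}$ and pivotal points. Your explicit observation that $F-G$ is affine with slope $2$ (hence at most one pivotal point) is a clean way of carving the pieces that the paper leaves implicit, and your concavity argument for $||uv||$ replaces the paper's angle-monotonicity observation $\frac{d\,\angle uvo_n}{d|A_n|}\le 0$ with a direct second-derivative computation. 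So far so good.

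However, the sign claim in the computation is not correct as stated, and this is where the argument has a genuine hole. Factoring,
\begin{align}
\frac{d^2||uv||}{dt^2}\,||uv||^3 \;=\; \frac{d}{r_n}\bigl(\cos\psi\,||uv||^2 - r_n d\sin^2\psi\bigr)
\;=\; -\,\frac{d}{r_n}\,(d - r_n\cos\psi)\,(r_n - d\cos\psi),
\end{align}
so the sign is \emph{not} simply opposite to $(v-u)\cdot(v-o_n)=r_n(r_n-d\cos\psi)$; there is a second factor $d - r_n\cos\psi$, which equals $\tfrac{1}{d}(u-v)\cdot(u-o_n)$. If $u$ were strictly inside $O_n$ (e.g.\ $d<r_n$, $\psi$ small), one could have $r_n - d\cos\psi > 0$ but $d - r_n\cos\psi < 0$, giving $||uv||$ locally \emph{convex}, and your conclusion would fail even though $v$ is still the exit point of $\overrightarrow{uv}$ on $O_n$. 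What saves the argument is that in the present setting $u\notin\mathrm{int}(O_n)$, hence $d\ge r_n$ and therefore $d - r_n\cos\psi \ge r_n(1-\cos\psi)\ge 0$. This follows from the same Fact~\ref{fact} you already invoke, but through a different consequence than the one you drew: $u$ is by definition not in the interior of $O_2$, so $\overrightarrow{uv}$ enters $O_2$ no earlier than at $u$, and by the stabbing order it enters $O_n$ no earlier than it enters $O_2$; hence $u$ cannot lie in the interior of $O_n$. You established that $v$ is the exit point (giving the sign of one factor) but omitted this second fact (controlling the sign of the other factor), and the omission is load-bearing: without it your stated sign implication is false.
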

\begin{proof}
Let $\widehat{A}'$ be an arbitrary sub-arc of $\widehat{A}$ that does not contain a pivotal point in its interior. We will use functional analysis to show that the maximum of $\Upsilon_{\mathcal{O}}(u,v)$ for $v\in \widehat{A}'$ occurs when $v$ is an endpoint of $\widehat{A}'$.

Since $\widehat{A}'$ does not contain a pivotal point in its interior, either $|P_{\mathcal{O}}^{A_n}(u,v)| \leq |P_{\mathcal{O}}^{B_n}(u,v)|$ for every point $v\in \widehat{A}'$ or $|P_{\mathcal{O}}^{B_n}(u,v)| \leq |P_{\mathcal{O}}^{A_n}(u,v)|$ for every point $v\in \widehat{A}'$. Without loss of generality assume that $|P_{\mathcal{O}}^{A_n}(u,v)| \leq |P_{\mathcal{O}}^{B_n}(u,v)|$ for every point $v$ in $\widehat{A}'$. Fixing other parameters, $\Upsilon_{\mathcal{O}}(u,v)$ is a function of $|A_n|$ as $v$ moves along $\widehat{A}'$. One observes the following:
\begin{enumerate}
\item Since the definition of $\Phi_{\mathcal{O}}$ is independent of $v$, $\Phi_{\mathcal{O}}$ remains constant when $v$ moves along $\widehat{A}'$.

\item $|P_{\mathcal{O}}(u,v)|$ is a linear function of $|A_n|$ because $|P_{\mathcal{O}}(u,v)| = \min\{|P_{\mathcal{O}}^{A_n}(u,v)|,|P_{\mathcal{O}}^{B_n}(u,v)|\}= |P_{\mathcal{O}}^{A_n}(u,v)|= |P_{\mathcal{O}}(u,a_{n-1})|+|A_n|$, where $|P_{\mathcal{O}}(u,a_{n-1})|$ remains constant when $v$ moves along $\widehat{A}'$.

\item $-\lambda|D_{\mathcal{O}}(u,v)|$ is a convex function of $|A_n|$. To see why, refer to Figure~\ref{fig:convex}. Let $v_1$ be the location of $v$ after $|A_n|$ is increased by an infinitesimal amount (i.e., $d|A_n|$). Then $\overrightarrow{vv_1}$ is tangent to $O_n$. Let $\sigma$ be the angle from $\overrightarrow{vv_1}$ to $\overrightarrow{uv}$. So $\sigma = \pi/2 - \angle uvo_n$. It is a simple geometric observation that
\begin{align}
\frac{d |D_{\mathcal{O}}(u,v)|}{d|A_n|}=\cos \sigma = \sin(\angle uvo_n).\label{douv}
\end{align}
Also observe that $\angle uvo_n$ decreases as $v$ moves away from $a_{n-1}$ along $\widehat{A}'$ and hence $\frac{d\angle uvo_n}{d|A_n|} \leq 0$. Since $v$ is the exit point of ray $\overrightarrow{uv}$ on $O_n$, $\angle uvo_n\in [-\pi/2,\pi/2]$ and hence $\frac{d\sin(\angle uvo_n)}{d\angle uvo_n} \geq 0$. So
\begin{align}
\frac{d\sin(\angle uvo_n)}{d|A_n|} = \frac{d\sin(\angle uvo_n)}{d\angle uvo_n}\cdot \frac{d\angle uvo_n}{d|A_n|}\leq 0.\label{dsinuvo_n}
\end{align}
Combining (\ref{douv}) and (\ref{dsinuvo_n}), we have
$$\frac{d^2( -\lambda|D_{\mathcal{O}}(u,v)|)}{d|A_n|^2} = -\lambda\cdot \frac{d(\frac{d |D_{\mathcal{O}}(u,v)|}{d|A_n|})}{d|A_n|} = -\lambda\cdot \frac{d(\sin(\angle uvo_n))}{d|A_n|} \geq 0,$$ which implies that $-\lambda|D_{\mathcal{O}}(u,v)|$ is a convex function of $|A_n|$.
\end{enumerate}

Now we know $\Phi_{\mathcal{O}}$, $|P_{\mathcal{O}}(u,v)|$, and $-\lambda|D_{\mathcal{O}}(u,v)|$ are all convex functions of $|A_n|$ (constant and linear functions are also convex). Being a sum of convex functions, $\Upsilon_{\mathcal{O}}(u,v) = |P_{\mathcal{O}}(u,v)| - \lambda|D_{\mathcal{O}}(u,v)|+\Phi_{\mathcal{O}}$ is a convex function of $|A_n|$. This proves that the maximum of $\Upsilon_{\mathcal{O}}(u,v)$ occurs when $v$ is an endpoint of $\widehat{A}'$. Proposition~\ref{atends} follows from this, because:
\begin{itemize}
\item If $\widehat{A}$ does not contain a pivotal point in its interior, then the maximum of $\Upsilon_{\mathcal{O}}(u,v)$ occurs when $v$ is an endpoint of $\widehat{A}$ and hence Proposition~\ref{atends} is true.
\item If $\widehat{A}$ contains a pivotal point in its interior, we can split $\widehat{A}$ at the pivotal point into two sub-arcs $\widehat{A_1}$ and $\widehat{A_2}$ that do not contain a pivotal point in their interiors. So the maximum of $\Upsilon_{\mathcal{O}}(u,v)$ for all $v\in \widehat{A_1}\cup \widehat{A_2}$ occurs when $v$ is an endpoint of $\widehat{A_1}$ or $\widehat{A_2}$.  The endpoints of $\widehat{A_1}$ and $\widehat{A_2}$  are either the pivotal point or the endpoints of $\widehat{A}$. Thus, Proposition~\ref{atends} is also true.
\end{itemize}
This completes the proof of Proposition~\ref{atends}
\end{proof}

In the rest of the proof for Lemma~\ref{lemma:main}, we will show that $\Upsilon_{\mathcal{O}}(u,v) < 0$ if $v$ is an endpoint of $\widehat{A}$ and $\Upsilon_{\mathcal{O}}(u,v) < 0$ if $v$ is pivotal. Once these are proven, then by Proposition~\ref{atends}, we have $\Upsilon_{\mathcal{O}}(u,v) < 0$ for any point $v$ in $\widehat{A}$, which completes the proof of Lemma~\ref{lemma:main}.

\begin{proposition} $\Upsilon_{\mathcal{O}}(u,v) < 0$ if $v$ is an endpoint of $\widehat{A}$.\label{endofa}
\end{proposition}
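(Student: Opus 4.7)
The plan is to reduce the statement to the two cases already handled, namely Propositions~\ref{endsok} and~\ref{obsok}. Since $\widehat{A}$ is a closed sub-arc of the valid terminal arc on $O_n$ (which itself runs from $a_{n-1}$ to $b_{n-1}$), its topological boundary consists of at most two points. Each such boundary point $v$ falls into one of two cases: (i) $v \in \{a_{n-1}, b_{n-1}\}$, or (ii) $v$ lies strictly between $a_{n-1}$ and $b_{n-1}$ on the valid terminal arc. Case (i) is an immediate application of Proposition~\ref{endsok}, so only case (ii) requires work.

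For case (ii), I would argue by continuity that $\overline{uv}$ must pass through some pinch point $a_j$ or $b_j$ with $1 \le j \le n-1$. The idea is that as $v'$ slides along the valid terminal arc, the chord $\overline{uv'}$ varies continuously, while the obstructed/unobstructed status of a terminal can only flip when $\overline{uv'}$ sweeps across one of the finitely many pinch points $a_1, b_1, \ldots, a_{n-1}, b_{n-1}$. At a boundary point $v$ of $\widehat{A}$ interior to the terminal arc, such a transition occurs precisely at $v$. Since $v \in \widehat{A}$ gives $D_{\mathcal{O}}(u,v) = \overline{uv}$, the polyline $D_{\mathcal{O}}(u,v)$ itself contains this pinch point---exactly the hypothesis invoked in the proof of Proposition~\ref{obsok}.

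Having secured that, I would then repeat the decomposition from Proposition~\ref{obsok}: without loss of generality take the pinch point to be $a_j$, split $\mathcal{O}$ into sub-chains $\mathcal{O}_{1,j+1}$ and $\mathcal{O}_{j+1,n}$, and combine the additivities $|D_{\mathcal{O}}(u,v)| = |D_{\mathcal{O}_{1,j+1}}(u,a_j)| + |D_{\mathcal{O}_{j+1,n}}(a_j,v)|$ and $\Phi_{\mathcal{O}} = \Phi_{\mathcal{O}_{1,j+1}} + \Phi_{\mathcal{O}_{j+1,n}}$ with the triangle-style sub-additivity $|P_{\mathcal{O}}(u,v)| \le |P_{\mathcal{O}_{1,j+1}}(u,a_j)| + |P_{\mathcal{O}_{j+1,n}}(a_j,v)|$. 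Applying the inductive hypothesis on each sub-chain (falling back to Proposition~\ref{endsok} when $j = n-1$, in which case $\mathcal{O}_{1,j+1} = \mathcal{O}$ itself) yields $\Upsilon_{\mathcal{O}}(u,v) \le \Upsilon_{\mathcal{O}_{1,j+1}}(u,a_j) + \Upsilon_{\mathcal{O}_{j+1,n}}(a_j,v) < 0$.

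The only real obstacle is the continuity step in case (ii): one must verify that a genuine boundary point of the closed arc $\widehat{A}$ different from $a_{n-1}$ and $b_{n-1}$ is necessarily pinned by some intermediate $a_j$ or $b_j$, so that this second case truly does fall under the obstructed hypothesis of Proposition~\ref{obsok}. This should be routine given the chain structure---it amounts to noting that on each side of the chain, the arcs $A_1,\ldots,A_{n-1}$ (respectively $B_1,\ldots,B_{n-1}$) contribute only finitely many relevant pinch points---but it deserves to be spelled out carefully. Once it is in place, the proposition follows by a clean reduction to the machinery already developed.
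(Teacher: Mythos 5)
Your proposal is correct and takes the same route as the paper: the paper's proof of this proposition is exactly the two-case reduction you describe, citing Proposition~\ref{endsok} when $v\in\{a_{n-1},b_{n-1}\}$ and Proposition~\ref{obsok} otherwise, and leaves the continuity observation (that a boundary point of $\widehat{A}$ distinct from $a_{n-1},b_{n-1}$ must be obstructed) implicit, which you make explicit. The only redundancy is that you re-derive the sub-chain decomposition rather than simply invoking Proposition~\ref{obsok} directly, and you restrict the pinch point to $1\le j\le n-1$ strictly between the endpoints where in fact $j=n-1$ with $a_{n-1}$ or $b_{n-1}$ as the pinch point is also possible for $v\neq a_{n-1},b_{n-1}$ --- but Proposition~\ref{obsok} already covers that case, so the conclusion stands.
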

\begin{proof}
If $v$ is an endpoint of $\widehat{A}$, then there are two cases: (1) $v\in \{a_{n-1}, b_{n-1}\}$ or, (2) $u,v$ are obstructed. See Figure~\ref{fig:pivotal} for an illustration. In either case, we have $\Upsilon_{\mathcal{O}}(u,v)< 0$ by  Proposition~\ref{endsok} and Proposition~\ref{obsok}.
\end{proof}

What remains to be shown is that $\Upsilon_{\mathcal{O}}(u,v) < 0$ when $v$ is pivotal. This requires a careful analysis of the geometry. To start, we fix a coordinate system where the origin is the center point of $\overline{a_{n-1}b_{n-1}}$, the $x$-axis is $o_{n-1}o_n$, and the $y$-axis is $a_{n-1}b_{n-1}$. See Figure~\ref{fig:finalsetup} for an illustration. For any point $p$ in the plane, denote by $X_p$ and $Y_p$ the $x$- and $y$-coordinates of $p$ in the coordinate system. By flipping along the $x$-axis or the $y$-axis (or both) if necessary, we can assume that $X_{o_{n-1}} \leq X_{o_n}$ (i.e., $o_{n-1}$ is to the left of $o_n$) and $Y_{v} \leq Y_u$ (i.e., $u$ is above $v$). Without loss of generality, let $a_{n-1}$ be above $b_{n-1}$ (the analysis will be similar if $b_{n-1}$ is above $a_{n-1}$). By Proposition~\ref{fact}, $\overrightarrow{uv}$ crosses $\overline{a_{n-1}b_{n-1}}$ from left to right. Hence $u$ is to the left of the $y$-axis and $v$ is to the right of the $y$-axis.

Let $q$ be the rightmost intersection between $O_n$ and the $x$-axis. We define the following parameters. Let
$\alpha = \angle qo_na_{n-1}$ and $\beta = \angle vo_nq$. Let $\gamma$ be the angle from $\overrightarrow{uv}$ to the $x$-axis in the {\em counterclockwise} direction.

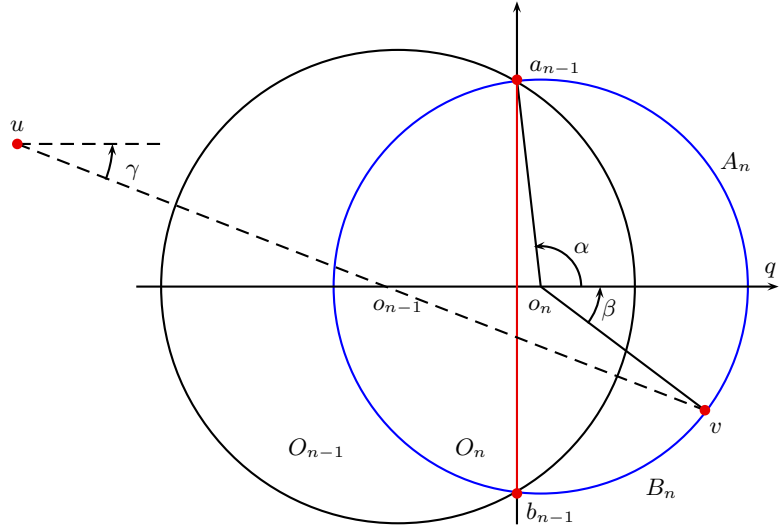
\begin{figure}[tbp]
\psset{unit=0.9pt}
\begin{center} \small
\begin{pspicture}(-160,-130)(160,150)
    \psset{labelsep=5pt}

    \pscircle(0,0){100}

    \pscircle[linecolor=blue](60,0){87.5}


    \psline(60,0)(129,-52)

  \psline{->}(-110,0)(160,0)
  \psline(60,0)(50,87)
    \psline[linecolor=red](50,87)(50,-87)
  \psline[linestyle=dashed](-160,60)(129,-52)
    \psline[linestyle=dashed](-160,60)(-100,60)
\psline{->}(50,87)(50,120)
\psline(50,-87)(50,-100)







  \uput[90](-160,60){$u$}
  \uput[-90](0,0){$o_{n-1}$}

    \uput[-90](60,0){$o_n$}

  \uput[45](150,0){$q$}


\uput[-60](129,-52){$v$}

\uput[30](50,87){$a_{n-1}$}
\uput[-45](50,-87){$b_{n-1}$}

\uput[30](130,47){$A_n$}
\uput[-45](100,-77){$B_n$}

\uput[-45](-50,-60){$O_{n-1}$}
\uput[-45](20,-60){$O_n$}


\uput[45](70,10){$\alpha$}

\uput[0](80,-10){$\beta$}

\uput[0](-120,48){$\gamma$}




\psarc{->}(60,0){17}{0}{100}
\psarc{->}(60,0){25}{-38}{0}
\psarc{->}(-160,60){40}{-22}{0}



\cnode*[linecolor=red](-160,60){2pt}{z}
\cnode*[linecolor=red](50,87){2pt}{a}
\cnode*[linecolor=red](50,-87){2pt}{b}

\cnode*[linecolor=red](129,-52){2pt}{v}

\end{pspicture}

\caption{An illustration of the coordinate system and the definition of the parameters $\alpha,\beta,\gamma$.}\label{fig:finalsetup}
\end{center}
\end{figure}

The ranges of $\alpha, \beta$ and $\gamma$ are given as follows.
\begin{itemize}
\item Since $a_{n-1}$ is on or above the $x$-axis, $0 \leq \alpha \leq \pi$. If $\alpha=0$ or $\alpha=\pi$, then $a_{n-1}=b_{n-1}$, which means that $D_{\mathcal{O}}(u,v)$ contains $a_{n-1}$ and $\Upsilon_{\mathcal{O}}(u,v) < 0$ by Proposition~\ref{obsok}. So we can assume \begin{align}
0 < \alpha < \pi.\label{range_a}
\end{align}

\item Since $v$ is a pivotal point, we have
$|P_{\mathcal{O}}^{A_n}(u,v)| = |P_{\mathcal{O}}^{B_n}(u,v)|$, where $|P_{\mathcal{O}}^{A_n}(u,v)|=|P_{\mathcal{O}}(u,a_{n-1})|+|A_n|$ and $|P_{\mathcal{O}}^{B_n}(u,v)|=|P_{\mathcal{O}}(u,b_{n-1})|+|B_n|$. Thus,
\begin{align}
|A_n|-|B_n| = |P_{\mathcal{O}}(u,b_{n-1})| - |P_{\mathcal{O}}(u,a_{n-1})|.\label{eq:beta0}
\end{align}
Since $a_{n-1}$ and $b_{n-1}$ are connected by a line segment $\overline{a_{n-1}b_{n-1}}$, the difference between $|P_{\mathcal{O}}(u,a_{n-1})|$ and $|P_{\mathcal{O}}(u,b_{n-1})|$ is at most $||a_{n-1}b_{n-1}||$. In other words,
\begin{align}
-||a_{n-1}b_{n-1}||\leq |P_{\mathcal{O}}(u,b_{n-1})| - |P_{\mathcal{O}}(u,a_{n-1})| \leq ||a_{n-1}b_{n-1}||.\label{eq:beta4}
\end{align}
Combining (\ref{eq:beta0}) and (\ref{eq:beta4}) gives $-||a_{n-1}b_{n-1}||\leq |A_n|-|B_n| \leq ||a_{n-1}b_{n-1}||.$ We further observe that $||a_{n-1}b_{n-1}|| = 2r_n\sin\alpha$ and $|A_n|-|B_n| = 2r_n\beta$. Thus the range of $\beta$ is \begin{align}
-\sin\alpha \leq \beta \leq \sin\alpha.\label{range_b}
\end{align}

\item We have $\gamma \geq 0$ since $Y_{v} \leq Y_u$.

We claim that the largest value of $\gamma$ occurs when $\overline{uv}$ passes through $a_{n-1}$. Here is why.
Refer to Figure~\ref{fig:largestgamma}. Since $\overline{uv}$ crosses $\overline{a_{n-1}b_{n-1}}$ and  $\gamma \geq 0$ , $\overline{uv}$ is sandwiched between the horizontal line passing through $v$ and the line $a_{n-1}v$ (see Figure~\ref{fig:largestgamma}). Within this range, $\gamma$  obtains its maximum value when  $\overline{uv}$ is on the line $a_{n-1}v$, i.e., when $\overline{uv}$ passes through $a_{n-1}$.

\begin{figure}[tbp]
\psset{unit=0.7pt}
\begin{center} \small
\begin{pspicture}(-160,-150)(160,210)
    \psset{labelsep=5pt}

    \pscircle(0,0){100}

    \pscircle[linecolor=blue](60,0){87.5}


    \psline(60,0)(129,-52)

  \psline{->}(-110,0)(160,0)
  \psline(60,0)(50,-87)
    \psline[linecolor=red](50,87)(50,-87)
  \psline[linestyle=dashed](-160,60)(129,-52)

  \psline[linestyle=dashed,linecolor=blue](-20,210)(129,-52)
  \psline[linestyle=dashed](-20,210)(40,210)
  \psarc{->}(-20,210){20}{-60}{0}
  \uput[-45](0,205){$\gamma'$}

  \psline[linestyle=dashed,linecolor=blue](-180,-52)(129,-52)

    \psline[linestyle=dashed](-160,60)(-100,60)
\psline{->}(50,87)(50,120)
\psline(50,-87)(50,-100)







  \uput[90](-160,60){$u$}


  \uput[45](150,0){$q$}


\uput[-90](129,-52){$v$}

\uput[30](50,87){$a_{n-1}$}
\uput[-45](50,-87){$b_{n-1}$}

\uput[30](130,47){$A_n$}
\uput[-45](100,-77){$B_n$}

\uput[-45](-50,-60){$O_{n-1}$}
\uput[-45](20,-60){$O_n$}


\uput[-70](70,-10){$\alpha$}

\uput[0](80,-10){$\beta$}

\uput[0](-120,50){$\gamma$}




\uput[80](60,0){$o_n$}

\psarc{->}(60,0){17}{-100}{0}
\psarc{->}(60,0){25}{-38}{0}
\psarc{->}(-160,60){40}{-22}{0}



\cnode*[linecolor=red](-160,60){2pt}{z}
\cnode*[linecolor=red](50,87){2pt}{a}
\cnode*[linecolor=red](50,-87){2pt}{b}

\cnode*[linecolor=red](129,-52){2pt}{v}


\end{pspicture}

\caption{An illustration for the range of $\gamma$. This shows that $0 < \gamma < \gamma'$, where $\gamma' = \pi/2-\angle b_{n-1}a_{n-1}v = \pi/2-\frac{\angle b_{n-1}o_nv}{2} = \pi/2- (\alpha-\beta)/2$.}\label{fig:largestgamma}
\end{center}
\end{figure}
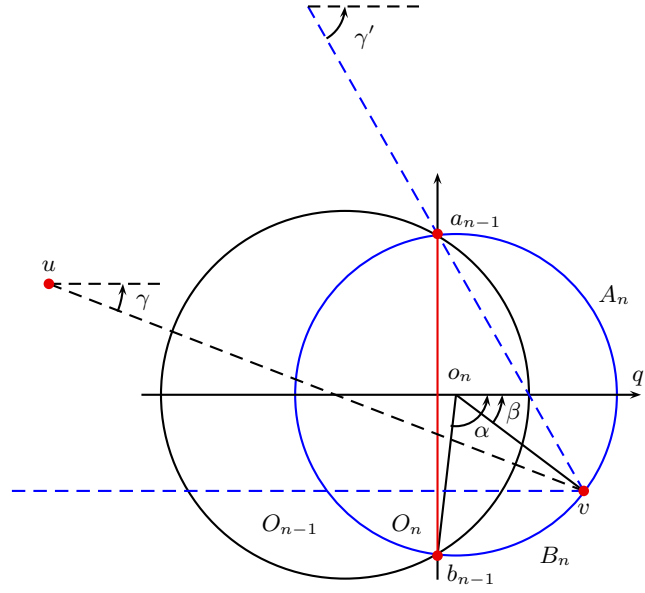

This means $\gamma \leq \pi/2-\angle b_{n-1}a_{n-1}v = \pi/2-\frac{\angle b_{n-1}o_nv}{2} = \pi/2- (\alpha-\beta)/2$ (as illustrated in Figure~\ref{fig:largestgamma}). So the range of $\gamma$ is \begin{align}
0 \leq \gamma \leq \pi/2-(\alpha-\beta)/2 < \pi/2.\label{range_c}
\end{align}
The last inequality is true because from (\ref{range_b}) $\alpha-\beta \geq \alpha-\sin\alpha > 0.$
\end{itemize}

We proceed by distinguishing two cases depending on the value of $\gamma$ with regard to a threshold value $\gamma^+$, which is defined as  \begin{align}
\gamma^+=\frac{3\sin\alpha-\alpha}{4}+
\arcsin\left(\frac{\alpha+\sin\alpha}{4\lambda\sin(\frac{\alpha+\sin\alpha}{4})}\right).\label{gammaplus}\end{align}
\begin{proposition}\label{upper}
If $v$ is a pivotal point and $\gamma \geq \gamma^+$, then $\Upsilon_{\mathcal{O}}(u,v) < 0.$
\end{proposition}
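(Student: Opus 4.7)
The plan is to apply induction to the sub-chain $\mathcal{O}_{1,n-1}$ with terminals $u$ and $a_{n-1}$, then account for the contribution of the final circle $O_n$ using the pivotal identity. Because $v$ is pivotal,
\[
|P_{\mathcal{O}}(u,v)| \;=\; |P_{\mathcal{O}}(u,a_{n-1})| + |A_n|,
\]
and because $v$ lies on the unobstructed arc $\widehat{A}$, we may assume $u$ and $v$ are unobstructed in $\mathcal{O}$, so $|D_{\mathcal{O}}(u,v)| = |uv|$ and the line $\overline{uv}$ crosses each chord $\overline{a_ib_i}$ at some point $p_i$ (the complementary obstructed case was already handled by Proposition~\ref{obsok}).

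The key first step is to convert the inductive bound $\Upsilon_{\mathcal{O}_{1,n-1}}(u,a_{n-1})<0$ into an estimate involving $|uv|$ and the last chord. I will use the candidate polyline $u\to p_1\to\cdots\to p_{n-2}\to a_{n-1}$: the first $n-2$ hops telescope along $\overline{uv}$ because the $p_i$ are collinear, and a triangle inequality at $p_{n-1}$ caps the final detour, giving $|D_{\mathcal{O}_{1,n-1}}(u,a_{n-1})| \leq |up_{n-1}| + |p_{n-1}a_{n-1}|$. Combining with $|P_{\mathcal{O}}(u,a_{n-1})| \leq |P_{\mathcal{O}_{1,n-1}}(u,a_{n-1})|$ and Proposition~\ref{notincrease} (which implies $\Phi_{\mathcal{O}}-\Phi_{\mathcal{O}_{1,n-1}}\le 0$), and using $|uv|=|up_{n-1}|+|p_{n-1}v|$, yields the workhorse inequality
\[
\Upsilon_{\mathcal{O}}(u,v) \;<\; \lambda\bigl(|p_{n-1}a_{n-1}| - |p_{n-1}v|\bigr) + |A_n|.
\]

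The remainder is pure trigonometry. In the coordinate system of Figure~\ref{fig:finalsetup} one computes $|A_n|=r_n(\pi-\alpha+\beta)$, $|p_{n-1}v|=r_n(\cos\alpha+\cos\beta)/\cos\gamma$, and $|p_{n-1}a_{n-1}|=r_n\bigl[\sin\alpha+\sin\beta-(\cos\alpha+\cos\beta)\tan\gamma\bigr]$. Substituting and applying the sum-to-product identities with $\mu=(\alpha+\beta)/2$, $\nu=(\alpha-\beta)/2$ reduces the target $\Upsilon_{\mathcal{O}}(u,v)<0$ to a single inequality in $(\alpha,\beta,\gamma)$ of the schematic form
\[
2\lambda\cos\nu\,\sin\!\bigl(\tfrac{\pi}{4}+\tfrac{\gamma}{2}-\mu\bigr) \;>\; (\pi-2\nu)\sin\!\bigl(\tfrac{\pi}{4}-\tfrac{\gamma}{2}\bigr).
\]

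The main obstacle is verifying this for every admissible $\beta\in[-\sin\alpha,\sin\alpha]$: the hardest case is the extremal pivotal value $\beta=\sin\alpha$ (which maximises both $|A_n|$ and the chord factor $\pi-2\nu$), and isolating $\gamma/2$ via $\arcsin$ on that boundary equation is precisely what produces the closed form $\gamma^+=(3\sin\alpha-\alpha)/4+\arcsin\bigl((\alpha+\sin\alpha)/(4\lambda\sin((\alpha+\sin\alpha)/4))\bigr)$. Monotonicity of the left-hand side in $\gamma$ then extends strict inequality from $\gamma=\gamma^+$ to all $\gamma\ge\gamma^+$, completing the proof.
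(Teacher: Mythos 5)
Your overall architecture (use pivotality $|P_{\mathcal{O}}(u,v)|=|P_{\mathcal{O}}(u,a_{n-1})|+|A_n|$, feed $u,a_{n-1}$ into the inductive bound via Proposition~\ref{notincrease}, then reduce to a one‑parameter trigonometric inequality and argue monotonicity in $\gamma$) is the same as the paper's. But there is a genuine quantitative gap in the one step where you depart from it: the bound on $|D_{\mathcal{O}_{1,n-1}}(u,a_{n-1})|$.

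You bound $|D_{\mathcal{O}_{1,n-1}}(u,a_{n-1})|\leq \|up_{n-1}\|+\|p_{n-1}a_{n-1}\|$ by detouring at $p_{n-1}$, the point where $\overline{uv}$ crosses the last chord $\overline{a_{n-1}b_{n-1}}$. The paper instead proves the strictly stronger estimate $|D_{\mathcal{O}}(u,a_{n-1})|\leq \|uv'\|+\|v'a_{n-1}\|$, where $v'$ is the \emph{entry point} of $\overrightarrow{uv}$ on the circle $O_n$. This is not a triangle inequality: it requires showing that the convex polyline $D_{\mathcal{O}}(u,a_{n-1})$ is contained in the triangle $\triangle uv'a_{n-1}$, which in turn uses Fact~\ref{fact} and the observation that the last turning point $a_k$ of the polyline is an entry point of a circle $O_{k+1}$ and hence lies no later than $v'$ along $\overrightarrow{uv}$. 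Since $v'$ precedes $p_{n-1}$ on $\overline{uv}$, the triangle inequality at $p_{n-1}$ gives $\|uv'\|+\|v'a_{n-1}\|\leq \|up_{n-1}\|+\|p_{n-1}a_{n-1}\|$, i.e.\ your correction term $\|p_{n-1}v\|-\|p_{n-1}a_{n-1}\|$ is strictly smaller than the paper's $\|v'v\|-\|v'a_{n-1}\|$ unless $v',p_{n-1},a_{n-1}$ are collinear, which they generically are not. The paper's $\gamma^+$ is calibrated so that the bound with $v'$ \emph{exactly} vanishes at $\beta=\sin\alpha$ and $\gamma=\gamma^+$; your weaker bound is therefore strictly positive there, so your workhorse inequality cannot yield $\Upsilon_{\mathcal{O}}(u,v)<0$ at $\gamma=\gamma^+$. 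You would need a larger threshold, which would then not match the $\gamma^+$ used in Proposition~\ref{last} and the four numerically verified inequalities.

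A secondary but related issue: from the paper's $v'$ estimate, the correction term collapses to the product $-4r_n\sin(\alpha/4+\beta/4)\sin(3\beta/4-\alpha/4-\gamma)$, which is what makes a clean $\arcsin$ expression for the zero threshold $\gamma^*(\alpha,\beta)$ possible. Your reduction produces the ratio form $2\lambda\cos\nu\sin(\pi/4+\gamma/2-\mu)>(\pi-2\nu)\sin(\pi/4-\gamma/2)$, which has $\gamma$ on both sides and does \emph{not} isolate to the stated $\gamma^+$ via a single $\arcsin$. Finally, the claim that $\beta=\sin\alpha$ is the extremal case is asserted but not proved; the paper establishes it by computing $\partial\gamma^*/\partial\beta>0$, and that derivative calculation cannot be skipped, since several $\beta$-dependent factors pull in opposite directions.
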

\begin{proof}
The first step is to show that $|D_{\mathcal{O}}(u,v)|-|D_{\mathcal{O}}(u,a_{n-1})| \geq ||v'v||-||v'a_{n-1}||$, where $v'$ is the entry-point of $\overrightarrow{uv}$ on $O_n$ (see Figure~\ref{fig:finalsetup2}). This requires a careful verification because in general $D_{\mathcal{O}}(u,a_{n-1})$ may not be a straight line segment.


\begin{figure}[tbp]
\psset{unit=1pt}
\begin{center} \small
\begin{pspicture}(-160,-120)(160,120)
    \psset{labelsep=5pt}


    \pscircle(60,0){87.5}

    \psarc[linewidth=3,linecolor=green](60,0){87.5}{96}{175}
    \psarc[linewidth=3,linecolor=blue](60,0){87.5}{-38}{96}


    \psline(60,0)(129,-52)

  \psline{->}(-110,0)(160,0)
  \psline(60,0)(50,87)
    \psline[linecolor=red](50,87)(50,-87)
  \psline[linestyle=dashed,linecolor=red](-160,60)(129,-52)
    \psline[linestyle=dashed](-160,60)(-100,60)
\psline{->}(50,87)(50,120)
\psline(50,-87)(50,-100)

\psline[linestyle=dotted,dotsep=2pt](-160,60)(50,87)

\psline[linestyle=dashed,linecolor=red](-160,60)(-75,40)(-30,39)(50,87)

\cnode*[linecolor=red](-75,40){2pt}{j}
\cnode*[linecolor=red](-30,39){2pt}{k}



\psline[linestyle=dotted,dotsep=2pt](50,87)(-27,9)

\psline[linestyle=dotted,dotsep=2pt](60,0)(-27,9)


\uput[90](-30,39){$a_k$}
\uput[60](-75,40){$a_j$}

  \uput[90](-160,60){$u$}

    \uput[-90](60,0){$o_n$}

  \uput[45](150,0){$q$}

\uput[45](-24,48){$v''$}

\uput[-60](129,-52){$v$}

\uput[45](50,87){$a_{n-1}$}
\uput[-45](50,-87){$b_{n-1}$}

\uput[30](130,47){$A_n$}
\uput[-45](100,-77){$B_n$}

\uput[205](-27,9){$v'$}

\uput[45](70,10){$\alpha$}

\uput[0](80,-10){$\beta$}

\uput[20](-120,50){$\gamma$}


\uput[-45](20,-60){$O_n$}



\psarc{->}(60,0){17}{0}{100}
\psarc{->}(60,0){25}{-38}{0}
\psarc{->}(-160,60){40}{-22}{0}



\cnode*[linecolor=red](-160,60){2pt}{z}
\cnode*[linecolor=red](50,87){2pt}{a}
\cnode*[linecolor=red](50,-87){2pt}{b}

\cnode*[linecolor=red](129,-52){2pt}{v}

\end{pspicture}

\caption{An illustration for Proposition~\ref{upper}.}\label{fig:finalsetup2}
\end{center}
\end{figure}

Recall that $u,v$ are unobstructed. So $\overrightarrow{uv}$ crosses all line segments $\overline{a_ib_i}$, $1\leq i \leq n-1$,  in that order. Visualize $D_{\mathcal{O}}(u,v)$ as a ``rubber band'' connecting $u$ and $v$ (illustrated by the dashed line between $u$ and $v$ in Figure~\ref{fig:finalsetup2}). As $v$ moves along the boundary of $O_n$ toward $a_{n-1}$, the line segment $D_{\mathcal{O}}(u,v)$ transforms into a polyline $D_{\mathcal{O}}(u,a_{n-1})$ that ``bends'' around points  $a_j, a_k, \ldots \in \{a
_1,\ldots,a_{n-1}\}$, as illustrated in Figure~\ref{fig:finalsetup2}. Therefore $D_{\mathcal{O}}(u,a_{n-1})$ is a path from $u$ to $a_{n-1}$ that is convex-away from $ua_{n-1}$ and is contained in the area bounded by $\overline{ua_{n-1}}$, $\overline{uv}$, and $A_n$.
Refer to Figure~\ref{fig:finalsetup2}. Let $a_{k}$ be the last turning point in the polyline $D_{\mathcal{O}}(u,a_{n-1})$. So $\overline{a_ka_{n-1}}$ is a part of $D_{\mathcal{O}}(u,a_{n-1})$. By definition, $a_k$ is an intersection between $O_k$ and $O_{k+1}$. Note that $a_k$ and $a_{n-1}$ are terminals of the sub-chain $\mathcal{O}_{k+1,n}$ and $a_k\neq a_{k+1}$ (otherwise we will choose $a_{k+1}$ as the last turning point). Therefore $a_k$ is the entry-point of $\overrightarrow{a_ka_{n-1}}$ on $O_{k+1}$. By Proposition~\ref{fact}, $\overrightarrow{a_ka_{n-1}}$ enters $O_{k+1}$ no later than entering $O_{n}$. This means that $a_k$ appears in the ray $\overrightarrow{a_ka_{n-1}}$ no later than the entry-point of $\overrightarrow{a_ka_{n-1}}$ on $O_n$, denoted henceforth by $v''$. Since $\overline{a_ka_{n-1}}$ is a part of $D_{\mathcal{O}}(u,a_{n-1})$, $\overline{a_ka_{n-1}}$ is contained in the area bounded by $\overline{ua_{n-1}}$, $\overline{uv}$, and $A_n$. This means that $\overrightarrow{a_ka_{n-1}}$ cannot enter $O_n$ via $A_n$ (the blue\footnote{Blue appears as dark-gray in black and white print.} arc in Figure~\ref{fig:finalsetup2}). In other words, the entry-point of $\overrightarrow{a_ka_{n-1}}$ on $O_n$ (i.e., $v''$) is on the arc between $a_{n-1}$ and $v'$ (the green\footnote{Green appears as light-gray in black and white print.} arc in Figure~\ref{fig:finalsetup2}). Since $a_k$ appears in $\overrightarrow{a_ka_{n-1}}$ no later than $v''$, $a_k$ is in $\triangle uv'a_{n-1}$. Now recall that $D_{\mathcal{O}}(u,a_{n-1})$ is a path from $u$ to $a_{n-1}$ that is convex-away from $ua_{n-1}$. Let $$\Pi = \overline{ua_{n-1}}\cup D_{\mathcal{O}}(u,a_{n-1}).$$ Then  $\Pi$ is a convex polygon. Since $\Pi$ contains $\overline{a_ka_{n-1}}$ as an edge, the convex polygon $\Pi$ as a whole is on the same side of the line $a_{n-1}v''$. Combining this with the fact that $\Pi$ is contained in the area bounded by $\overline{ua_{n-1}}$, $\overline{uv}$, and $A_n$, we conclude that $\Pi$ is contained in the triangle $\triangle uv'a_{n-1}$. See Figure~\ref{fig:finalsetup2} for an illustration. It is known that if a convex polygonal body $\Pi$ is contained in another convex polygonal body $\triangle uv'a_{n-1}$, then the length of the boundary of $\Pi$ is less than or equal to the length of the boundary of $\triangle uv'a_{n-1}$ (see~\cite[p. 42]{benson}). Therefore we have $|D_{\mathcal{O}}(u,a_{n-1})|+||ua_{n-1}|| \leq ||uv'||+||v'a_{n-1}||+||ua_{n-1}||$ or, equivalently, $|D_{\mathcal{O}}(u,a_{n-1})| \leq ||uv'||+||v'a_{n-1}||$. Recall that $|D_{\mathcal{O}}(u,v)|=||uv||$. We have \begin{align}
|D_{\mathcal{O}}(u,v)|-|D_{\mathcal{O}}(u,a_{n-1})| &\geq ||uv||-(||uv'||+||v'a_{n-1}||) \nonumber\\
&=||v'v||-||v'a_{n-1}||,\label{ddiff}
\end{align} as claimed in the begin of this proof.

Again refer to Figure~\ref{fig:finalsetup2}. Using simple trigonometric functions, one verifies that $||v'v|| =2r_n\cos(\angle vv'o_n)$ and $||v'a_{n-1}||= 2r_n\cos(\angle o_nv'a_{n-1})$, where $\angle vv'o_n = \angle o_nvv'= \beta-\gamma$ and $\angle o_nv'a_{n-1} = \angle vv'a_{n-1}-\angle vv'o_n = \frac{\angle vo_na_{n-1}}{2}-\angle vv'o_n = (\alpha+\beta)/2-(\beta-\gamma) = \alpha/2-\beta/2+\gamma$. Combining these with (\ref{ddiff}) and applying trigonometric identities, we have \begin{align}
&|D_{\mathcal{O}}(u,v)|-|D_{\mathcal{O}}(u,a_{n-1})| \nonumber\\
&\geq ||v'v||-||v'a_{n-1}|| \nonumber\\
&=
2r_n\cos(\beta-\gamma)-2r_n\cos(\alpha/2-\beta/2+\gamma)\nonumber\\
&=-4r_n\sin(\alpha/4+\beta/4)\sin(3\beta/4-\alpha/4-\gamma).\label{d-d}
\end{align}

Since $v$ is pivotal, $|P_{\mathcal{O}}(u,v)|=|P_{\mathcal{O}}(u,a_{n-1})|+|A_n|=|P_{\mathcal{O}}(u,b_{n-1})|+|B_n|$. Note that $|A_n|=r_n(\alpha+\beta)$. By Proposition~\ref{endsok}, $\Upsilon_{\mathcal{O}}(u,a_{n-1}) = |P_{\mathcal{O}}(u,a_{n-1})|- \lambda|D_{\mathcal{O}}(u,a_{n-1})|+\Phi_{\mathcal{O}} < 0$. Combining these with (\ref{d-d}), we have
\begin{align}
\Upsilon_{\mathcal{O}}(u,v) &= |P_{\mathcal{O}}(u,v)|-\lambda|D_{\mathcal{O}}(u,v)|+\Phi_{\mathcal{O}}\nonumber\\
&= |P_{\mathcal{O}}(u,a_{n-1})|+|A_n|-\lambda|D_{\mathcal{O}}(u,v)|+\Phi_{\mathcal{O}}\nonumber\\
&= (|P_{\mathcal{O}}(u,a_{n-1})|-\lambda|D_{\mathcal{O}}(u,a_{n-1})|+\Phi_{\mathcal{O}}\nonumber) + |A_n| -\lambda|D_{\mathcal{O}}(u,v)|+\lambda|D_{\mathcal{O}}(u,a_{n-1})|\\
&= \Upsilon_{\mathcal{O}}(u,a_{n-1}) +|A_n| - \lambda(|D_{\mathcal{O}}(u,v)|-|D_{\mathcal{O}}(u,a_{n-1})|)\nonumber\\
&<  |A_n| -\lambda(|D_{\mathcal{O}}(u,v)|-|D_{\mathcal{O}}(u,a_{n-1})|)\nonumber\\
&\leq r_n(\alpha+\beta)  + 4\lambda r_n\sin(\alpha/4+\beta/4)\sin(3\beta/4-\alpha/4-\gamma).\label{habc}
\end{align}

Define a function $$h(\alpha,\beta,\gamma)=r_n(\alpha+\beta)+4\lambda r_n\sin(\alpha/4+\beta/4)\sin(3\beta/4-\alpha/4-\gamma).$$

Then from (\ref{habc})
\begin{align}
\Upsilon_{\mathcal{O}}(u,v) < h(\alpha,\beta,\gamma).\label{replacement}
\end{align}

We have \begin{align}
\frac{\partial h}{\partial \gamma}=-4\lambda r_n\sin(\alpha/4+\beta/4)\cos(3\beta/4-\alpha/4-\gamma).\label{hder}
\end{align}
By (\ref{range_b}), we have $
\alpha/4+\beta/4 \geq (\alpha-\sin\alpha)/4 > 0,\label{abl}
$ and $
\alpha/4+\beta/4 \leq (\alpha+\sin\alpha)/4 < \pi/4\label{abh}$. The last inequality is true because $(\alpha+\sin\alpha)/4$ is an increasing function for $\alpha \in (0, \pi)$. So \begin{align}
0 < \alpha/4+\beta/4 < \pi/4.\label{abboth}
\end{align}
By (\ref{range_c}), we have
$
3\beta/4-\alpha/4-\gamma \geq 3\beta/4-\alpha/4-\pi/2+(\alpha-\beta)/2
=\alpha/4+\beta/4-\pi/2
> -\pi/2.\label{cl}$
Also $
3\beta/4-\alpha/4-\gamma < 3\beta/4 \leq (3\sin\alpha)/4 \leq 3/4.\label{ch}
$ So
\begin{align}
-\pi/2 < 3\beta/4-\alpha/4-\gamma < 3/4.\label{bboth}
\end{align}
From (\ref{abboth}) and (\ref{bboth}),
$\sin(\alpha/4+\beta/4) > 0$ and $\cos(3\beta/4-\alpha/4-\gamma) > 0$. Plugging these into (\ref{hder}), we have \begin{align}
\frac{\partial h}{\partial \gamma} < 0.\label{hgamma}
\end{align}

Let  $$\gamma^*=
3\beta/4-\alpha/4+\arcsin\left(\frac{\alpha+\beta}{4\lambda\sin(\alpha/4+\beta/4)}\right).$$ It is easy to verify that $h(\alpha,\beta,\gamma^*) = 0$. By a careful calculation that is given in the appendix (Section~\ref{sec:appendix-gamma}), we can show that
\begin{align}
\gamma^* \leq \gamma^+.\label{gammagamma}
\end{align}

By (\ref{hgamma}), $h$ is a decreasing function of $\gamma$. So for any $\gamma \geq \gamma^+ \geq \gamma^*$, we have $$h(\alpha,\beta,\gamma) \leq h(\alpha,\beta,\gamma^+) \leq h(\alpha,\beta,\gamma^*)=0.$$ Combining this with (\ref{replacement}), we have $\Upsilon_{\mathcal{O}}(u,v) < h(\alpha,\beta,\gamma) \leq 0$ as desired. This completes the proof of Proposition~\ref{upper}.
\end{proof}

\begin{figure}[tbph]
\psset{unit=0.9pt}
\begin{center} \small
\begin{pspicture}(-160,-120)(160,120)
    \psset{labelsep=5pt}

    \pscircle(0,0){100}

    \pscircle[linecolor=blue](60,0){87.5}

    \pscircle[linecolor=blue,linestyle=dotted](15,0){94}
    \pscircle[linecolor=blue,linestyle=dotted](30,0){90}
    \pscircle[linecolor=blue,linestyle=dotted](45,0){87.5}


  \psline{->}(-110,0)(160,0)
    \psline[linecolor=red](50,87)(50,-87)
  \psline[linestyle=dashed](-160,60)(129,-52)
\psline{->}(50,87)(50,120)
\psline(50,-87)(50,-100)







  \uput[90](-160,60){$u$}
  \uput[90](0,0){$o_{n-1}$}

    \uput[90](60,0){$o_n$}

  \uput[45](150,0){$q$}


\uput[-60](129,-52){$v$}

\uput[30](50,87){$a_{n-1}$}
\uput[-45](50,-87){$b_{n-1}$}

\uput[30](130,47){$A_n$}
\uput[-45](100,-77){$B_n$}












\cnode*[linecolor=red](-160,60){2pt}{z}
\cnode*[linecolor=red](50,87){2pt}{a}
\cnode*[linecolor=red](50,-87){2pt}{b}

\cnode*[linecolor=red](129,-52){2pt}{v}

  \psline[linestyle=dotted](-160,60)(114,-54)
\cnode*[linecolor=red,linestyle=dotted](114,-54){2pt}{v''}

\psline[linestyle=dotted](-160,60)(102,-55)
\cnode*[linecolor=red,linestyle=dotted](102,-55){2pt}{v''}

\psline[linestyle=dotted](-160,60)(91,-56)
\cnode*[linecolor=red,linestyle=dotted](91,-56){2pt}{v''}

\psline[linestyle=dotted](-160,60)(82,-56.5)
\cnode*[linecolor=red,linestyle=dotted](82,-56.5){2pt}{v''}

\uput[-45](-120,-60){$O_{n-1}$}
\uput[-45](20,-60){$O_n$}

\end{pspicture}

\caption{An illustration for the transformation in Proposition~\ref{last}. Transform $O_n$ by moving $o_n$ (the center of $O_n$) toward $o_{n-1}$ (the center of $O_{n-1}$) along the $x$-axis while fixing $a_{n-1}$ and $b_{n-1}$ as the intersections between the boundaries of $O_n$ and $O_{n-1}$. The dotted disks show the process of the transformation. In the transformation, the radius of $O_n$ changes, but $v$ stays pivotal. The large dots show the locations of $v$ during the transformation. }\label{fig:finalsetup22}
\end{center}
\end{figure}
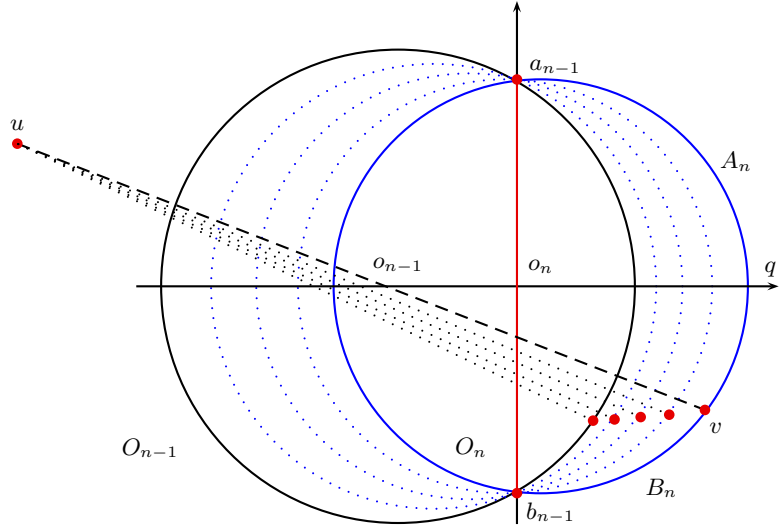

The only remaining case is when $\gamma$ is below the threshold $\gamma^+$.
\begin{proposition}\label{last}
If $v$ is a pivotal point and $\gamma < \gamma^+$, then $\Upsilon_{\mathcal{O}}(u,v) < 0$.
\end{proposition}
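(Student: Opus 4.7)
The plan is to handle the remaining case $\gamma < \gamma^+$ by a continuous-deformation argument: continuously translate $o_n$ along the $x$-axis toward $o_{n-1}$, simultaneously adjusting $r_n$ so that $v$ stays on the boundary of $O_n$ and remains pivotal, until the configuration exits the low-$\gamma$ regime. The deformation terminates in one of two configurations that are already handled: (i) $\gamma$ rises to $\gamma^+$, which is covered by Proposition~\ref{upper}, or (ii) $O_n$ becomes contained in $O_{n-1}$, so the chain effectively reduces to $\mathcal{O}_{1,n-1}$ and Proposition~\ref{endsok} (together with the inductive hypothesis) gives $\Upsilon_{\mathcal{O}}(u,v)<0$.

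Concretely, I would parametrize the deformation by $s=||o_{n-1}o_n||$, decreasing from its initial value. The points $u$, $v$, and the entire sub-chain $(O_1,\ldots,O_{n-1})$ remain fixed throughout; only $O_n$, its intersection points $a_{n-1},b_{n-1}$ with $O_{n-1}$, the radius $r_n$, and the induced angles $\alpha,\beta,\gamma$ depend on $s$. Since $v$ is fixed in the plane, $|D_{\mathcal{O}}(u,v)|=||uv||$ is constant. Using pivotality, write $|P_{\mathcal{O}}(u,v)|=|P_{\mathcal{O}_{1,n-1}}(u,a_{n-1})|+r_n(\alpha+\beta)$, and using the additivity of $\Phi$, write $\Phi_{\mathcal{O}}=\Phi_{\mathcal{O}_{1,n-1}}+\varphi(r_n-r_{n-1})-\tfrac{\varphi}{3}(2H_n+V_n)$ with $H_n=s$. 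The quantity $\Upsilon_{\mathcal{O}}(u,v)$ then becomes an explicit function of $s$ (with $\alpha,\beta,r_n$ determined by $s$).

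The key step is to show that $\tfrac{d\Upsilon_{\mathcal{O}}}{ds}\le 0$ throughout the deformation, so that pushing $s$ smaller only increases $\Upsilon_{\mathcal{O}}(u,v)$ and therefore it suffices to bound $\Upsilon$ at the terminating configuration. I expect this to split along the three sub-cases of Fact~\ref{hivi}, according to the red/green coloring of the peaks $q_{n-1}^{\rightarrow}$ and $q_n^{\leftarrow}$, each handled by a direct trigonometric computation that combines: the rate at which $a_{n-1}$ slides along $O_{n-1}$ (which drives the change in $|P_{\mathcal{O}_{1,n-1}}(u,a_{n-1})|$ and in $V_n$), the rate $\tfrac{dr_n}{ds}$ forced by $v$ staying on $O_n$, and the constant $-\tfrac{2\varphi}{3}$ contribution from $H_n$. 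At the two terminations, the sub-chain argument already in place finishes the proof.

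The main obstacle is the derivative computation itself. As $s$ decreases, the intersection point $a_{n-1}$ migrates along the boundary of $O_{n-1}$, which changes both $|P_{\mathcal{O}_{1,n-1}}(u,a_{n-1})|$ (at a rate controlled by the angle $\overline{a_{n-2}a_{n-1}}$ makes with $O_{n-1}$) and the angles $\alpha,\beta$. Combined with the possibility that the peaks switch color mid-deformation, a careful case analysis is needed. The decisive inequality should make essential use of the threshold $\gamma^+$ in (\ref{gammaplus}): its precise form is designed so that, together with the pivotal relation (\ref{range_b}) and the range (\ref{range_c}), the terms from $|A_n|$, $-\lambda|D_{\mathcal{O}}(u,v)|$ and the potential function conspire to yield the desired sign. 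Once the monotonicity in $s$ is established, the reduction to Propositions~\ref{upper} and~\ref{endsok} closes the induction.
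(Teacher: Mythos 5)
Your proposed deformation is not the one the paper uses, and as stated it does not go through. The paper fixes the chord $\overline{a_{n-1}b_{n-1}}$ (hence the entire sub-chain $\mathcal{O}_{1,n-1}$ and the quantities $|P_{\mathcal{O}_{1,n-1}}(u,a_{n-1})|$, $|P_{\mathcal{O}_{1,n-1}}(u,b_{n-1})|$, and $\eta=\beta r_n$) and lets $v$ slide along the shrinking $\partial O_n$; this is exactly what makes the pivotal constraint compatible with moving $o_n$, and what makes the derivative of $|P_{\mathcal{O}}(u,v)| = |P_{\mathcal{O}_{1,n-1}}(u,a_{n-1})| + |A_n|$ a purely local trigonometric quantity, $\sin\alpha-\alpha\cos\alpha$. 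You instead fix $v$ and let $a_{n-1},b_{n-1}$ migrate along $\partial O_{n-1}$. This creates two problems. First, the deformation is overconstrained: with $o_n$ on the $x$-axis and $v$ required to stay on $\partial O_n$, the radius $r_n=||o_nv||$ is forced, which then determines $a_{n-1},b_{n-1}$; the pivotal condition $|A_n|-|B_n|=|P_{\mathcal{O}_{1,n-1}}(u,b_{n-1})|-|P_{\mathcal{O}_{1,n-1}}(u,a_{n-1})|$ becomes a single scalar equation in the single remaining parameter $X_{o_n}$, so there is no one-parameter family of pivotal configurations with $v$ fixed---you cannot ``simultaneously adjust $r_n$'' to satisfy both.

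Second, and more fundamentally, even if you drop pivotality and just track the shortest path, the rate of change of $|P_{\mathcal{O}_{1,n-1}}(u,a_{n-1})|$ as $a_{n-1}$ slides along $\partial O_{n-1}$ is not a local quantity: it depends on the global shape of the whole sub-chain (e.g.\ whether the shortest path arrives at $a_{n-1}$ along $A_{n-1}$ or along a chord, and at what angle). You acknowledge this in your last paragraph, but the induction gives you no control over it, and this is precisely the dependence the paper's transformation is designed to avoid. Your termination case~(ii) is also off: as $o_n$ approaches $o_{n-1}$ with $v$ fixed and $v\notin\operatorname{int}(O_{n-1})$, one gets $r_n=||o_{n-1}v||\ge r_{n-1}$, so $O_n\supseteq O_{n-1}$ (not contained in it), and the pair $(O_{n-1},O_n)$ ceases to satisfy the chain definition because their boundaries no longer intersect in two points. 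The paper's endpoint is cleaner: since $a_{n-1},b_{n-1}$ are fixed, when $o_n$ reaches $o_{n-1}$ the circle $O_n$ coincides with $O_{n-1}$ and $\Upsilon_{\mathcal{O}}=\Upsilon_{\mathcal{O}_{1,n-1}}$ directly.
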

\begin{proof}

Consider the following transformation (see Figure~\ref{fig:finalsetup22}):
\begin{itemize}
\item Transform $O_n$ by fixing two ``anchor'' points $a_{n-1},$  $b_{n-1}$ on its boundary and moving its center $o_n$ along the $x$-axis toward $o_{n-1}$, the center of $O_{n-1}$. During the transformation, the points $a_{n-1}$ and $b_{n-1}$ remain the  intersections between the boundaries of $O_n$ and $O_{n-1}$. The radius of $O_n$ changes during the transformation. In Figure~\ref{fig:finalsetup22}, the dotted disks show the process of the transformation.

\item Adjust the location of $v$ on the boundary of $O_n$ so that $v$ stays pivotal. In Figure~\ref{fig:finalsetup22}, the sequence of large dots shows the locations of $v$ during the transformation.

\item Stop the transformation when $\gamma \geq \gamma^+$ or when $o_n$ reaches $o_{n-1}$.
\end{itemize}

There are two cases that the transformation will end: (1) if $\gamma \geq \gamma^+$ at the end of the transformation, then $\Upsilon_{\mathcal{O}}(u,v) < 0$ by Proposition~\ref{upper}; (2) if $o_n$ reaches $o_{n-1}$ at the end of the transformation, then $\Upsilon_{\mathcal{O}}(u,v) = \Upsilon_{\mathcal{O}_{1,n-1}}(u,v)< 0$ by the inductive hypothesis. So in any case, $\Upsilon_{\mathcal{O}}(u,v)< 0$ at the end of the transformation.

In order to prove the proposition, we need to show that $\Upsilon_{\mathcal{O}}(u,v)< 0$ {\em before} the transformation. Since $\Upsilon_{\mathcal{O}}(u,v)< 0$ at the end of the transformation, it suffices to show that $\Upsilon_{\mathcal{O}}(u,v)$ does not decrease during the transformation. Since the transformation is effected by varying $X_{o_n}$ --- the x-coordinate of $o_n$, and $X_{o_n}$ decreases during the transformation, we only need to show that $$\frac{\partial{\Upsilon_{\mathcal{O}}(u,v)}}{\partial{X_{o_n}}} \leq 0,$$ during the transformation. We will prove this by analyzing the geometry. Recall that we have fixed a coordinate system where the origin is the center point of $\overline{a_{n-1}b_{n-1}}$, the $x$-axis is $o_{n-1}o_n$ where $o_{n-1}$ is to the left of $o_n$, and the $y$-axis is $a_{n-1}b_{n-1}$ where $a_{n-1}$ is above $b_{n-1}$.

Since $
\Upsilon_{\mathcal{O}}(u,v) = |P_{\mathcal{O}}(u,v)| - \lambda|D_{\mathcal{O}}(u,v)|+\Phi_{\mathcal{O}}$ by (\ref{defupsilon}), we need to calculate the partial derivatives $\frac{\partial{|P_{\mathcal{O}}(u,v)|}}{\partial{X_{o_n}}}$, $\frac{\partial{\Phi_{\mathcal{O}}}}{\partial{X_{o_n}}}$, and $\frac{\partial{|D_{\mathcal{O}}(u,v)|}}{\partial{X_{o_n}}}$. The calculations for them are routine but technical, and are given in the appendix (Section~\ref{sec:appendix-calc}).

\begin{align}
\frac{\partial{|P_{\mathcal{O}}(u,v)|}}{\partial{X_{o_n}}} &= \sin\alpha-\alpha\cos\alpha.\label{pder}\\
\frac{\partial{\Phi_{\mathcal{O}}}}{\partial{X_{o_n}}} &= \begin{cases} -\frac{2\varphi}{3}-\frac{4\varphi}{3}\cos\alpha, & \mbox{if } 0< \alpha <\pi/2. \\ -\frac{2\varphi}{3}-\frac{2\varphi}{3}\cos\alpha, & \mbox{if } \pi/2 \leq \alpha <\pi. \end{cases}\label{phider}\\
\frac{\partial{|D_{\mathcal{O}}(u,v)|}}{\partial{X_{o_n}}} &= \cos{\gamma}-\cos\alpha(\cos(\beta-\gamma)+\beta\sin(\beta-\gamma)).\label{dder}
\end{align}

Based on (\ref{dder}), we define a function \begin{align}
f(\alpha, \beta,\gamma)&= -\lambda\frac{\partial{|D_{\mathcal{O}}(u,v)|}}{\partial{X_{o_n}}}\label{fdeforig}\\
&= -\lambda(\cos{\gamma}-\cos\alpha(\cos(\beta-\gamma)+\beta\sin(\beta-\gamma))).\label{fdef}
\end{align}

We can bound $f(\alpha, \beta,\gamma)$ by single-variant functions in $\alpha$, as shown in the following inequality whose proof is quite technical and is given in the appendix  (Section~\ref{sec:appendix-inequ-f}).
\begin{align}
f(\alpha,\beta,\gamma)\leq \begin{cases}  \max\{f(\alpha,\sin\alpha,0), f(\alpha,\sin\alpha,\gamma^+)\}, & \mbox{if } 0< \alpha <\pi/2. \\ \max\{f(\alpha,0,0), f(\alpha,0,\gamma^+)\}, & \mbox{if } \pi/2 \leq \alpha <\pi. \end{cases}\label{gggbbb}
\end{align}

Combining (\ref{pder}), (\ref{phider}), (\ref{dder}), (\ref{fdeforig}), and (\ref{gggbbb}), we have
\begin{align}
\frac{\partial{\Upsilon_{\mathcal{O}}(u,v)}}{\partial{X_{o_n}}} &= \frac{\partial{|P_{\mathcal{O}}(u,v)|}}{\partial{X_{o_n}}} - \lambda\frac{\partial{|D_{\mathcal{O}}(u,v)|}}{\partial{X_{o_n}}}
+\frac{\partial{|\Phi_{\mathcal{O}}|}}{\partial{X_{o_n}}}\nonumber\\
&= \frac{\partial{|P_{\mathcal{O}}(u,v)|}}{\partial{X_{o_n}}} + f(\alpha, \beta,\gamma)
+\frac{\partial{|\Phi_{\mathcal{O}}|}}{\partial{X_{o_n}}}\nonumber\\
&= \begin{cases}  \sin\alpha-\alpha\cos\alpha +f(\alpha, \beta,\gamma) -\frac{2\varphi}{3}-\frac{4\varphi}{3}\cos\alpha, & \mbox{if } 0< \alpha <\pi/2 \\ \sin\alpha-\alpha\cos\alpha +f(\alpha, \beta,\gamma) -\frac{2\varphi}{3}-\frac{2\varphi}{3}\cos\alpha, & \mbox{if } \pi/2 \leq \alpha <\pi \end{cases}\nonumber\\
&\leq \begin{cases}  \sin\alpha-\alpha\cos\alpha +\max\{f(\alpha,\sin\alpha,0), f(\alpha,\sin\alpha,\gamma^+)\} -\frac{2\varphi}{3}-\frac{4\varphi}{3}\cos\alpha, & \mbox{if } 0< \alpha <\pi/2. \\ \sin\alpha-\alpha\cos\alpha +\max\{f(\alpha,0,0), f(\alpha,0,\gamma^+)\} -\frac{2\varphi}{3}-\frac{2\varphi}{3}\cos\alpha, & \mbox{if } \pi/2 \leq \alpha <\pi. \end{cases}\label{finaleq}
\end{align}

By (\ref{finaleq}), we only need to verify the following four inequalities, where $f(\alpha, \beta,\gamma) = -\lambda(\cos{\gamma}-\cos\alpha(\cos(\beta-\gamma)+\beta\sin(\beta-\gamma)))$ and $\gamma^+=\frac{3\sin\alpha-\alpha}{4}+\arcsin\left(\frac{\alpha+\sin\alpha}{4\lambda\sin(\frac{\alpha+\sin\alpha}{4})}\right).$ \begin{align}
g_1(\alpha)&=\sin\alpha-\alpha\cos\alpha -\frac{2\varphi}{3}-\frac{2\varphi}{3}\cos\alpha +f(\alpha,0,0)  < 0,\mbox{~ when $\pi/2 \leq \alpha < \pi$}.\label{ineq1}\\
g_2(\alpha)&=\sin\alpha-\alpha\cos\alpha -\frac{2\varphi}{3}-\frac{2\varphi}{3}\cos\alpha +f(\alpha,0,\gamma^+)  < 0,\mbox{~ when $\pi/2 \leq \alpha < \pi$}.\label{ineq2}\\
g_3(\alpha)&=\sin\alpha-\alpha\cos\alpha -\frac{2\varphi}{3}-\frac{4\varphi}{3}\cos\alpha +f(\alpha,\sin\alpha,0) < 0,\mbox{~ when $0 < \alpha < \pi/2$}.\label{ineq3}\\
g_4(\alpha)&=\sin\alpha-\alpha\cos\alpha -\frac{2\varphi}{3}-\frac{4\varphi}{3}\cos\alpha +f(\alpha,\sin\alpha,\gamma^+) < 0,\mbox{~ when $0 < \alpha < \pi/2$}.\label{ineq4}
\end{align}
Since $g_1,g_2,g_3$ and $g_4$ are smooth single-variant functions on small intervals of $\alpha$, one can easily verify the above inequalities using numerical computing software, such as Mathematica. For completeness, a more formal verification is given in the appendix (Section~\ref{sec:appendix-all-ineqs}). There, we show that $g_1,g_2,g_3$ and $g_4$ have small Lipschitz constants, and then use a program that implements a simplified Piyavskii's algorithm~\cite{Vanderbei97extensionof} to verify that their upper bounds are less than 0. This completes the proof of Proposition~\ref{last}
\end{proof}

This is the end of the proof of Lemma~\ref{lemma:main}.

\section{Proof of Lemma~\ref{lemma:pot}} \label{section:main}

Let $\mathbb{O}$ be a set of chains whose stretch factor is greater than or equal to a threshold $\tau$. In this section, we will prove that if $\mathbb{O}$ is non-empty, then there exists a chain $\mathcal{O}^* \in \mathbb{O}$ with terminals $u,v$ such that $|P_{\mathcal{O}^*}(u,v)|/|D_{\mathcal{O}^*}(u,v)| \geq \tau$ and
$\Phi_{\mathcal{O}^*} \geq -\frac{\sqrt{5}\varphi}{3}|P_{\mathcal{O}^*}(u,v)|$.

Suppose that $\mathbb{O}$ is non-empty. Let $\mathbb{E}$ be the subset of $\mathbb{O}$ consisting of chains in $\mathbb{O}$ with a minimum number of disks. $\mathbb{E}$ is non-empty because $\mathbb{O}$ is non-empty. The number of disks in a chain of $\mathbb{E}$ is denoted $n$.  Next, we will choose a chain $\mathcal{O}^*$ from $\mathbb{E}$ such that the total radii of the disks in $\mathcal{O}^*$ is minimized. We justify below that such a $\mathcal{O}^*$ always exists when $\mathbb{E}$ is non-empty.

Note that the definition of the chain (Definition~\ref{def:circles}) includes the boundary cases: the case when two consecutive disks are tangent and the case when two connecting arcs of the same disk share an end point. Also note that the definition of $\mathbb{O}$ includes the boundary case, i.e., the case when the stretch factor is equal to the threshold $\tau$. Therefore, $\mathbb{E}$ also contains its boundary cases. In other words, $\mathbb{E}$ is a closed  set (but $\mathbb{E}$ is not necessarily bounded).
Associate with every chain $\mathcal{O} \in \mathbb{E}$  a pair of terminals $u,v$ that yields the worst stretch factor of $\mathcal{O}$. Every $\mathcal{O}\in \mathbb{E}$ can be represented\footnote{modulo rotating and scaling, which do not affect the stretch factor} by a vector $\mathbf{x} \in \mathcal{R}^{3n}$ that specifies, for each of the $n$ disks in $\mathcal{O}$, its radius and the $x$- and $y$-coordinates of its center in a (normalized) coordinate system where $u$ is the origin and $v$ is $(1,0)$. Therefore $\mathbb{E}$ can be mapped to a non-empty closed set $\mathbb{S} \subseteq \mathcal{R}^{3n}$. Define a function $\mathcal{H}: \mathbb{S} \rightarrow \mathcal{R}$ as $\mathcal{H}(\mathbf{x})=\sum_{i=1}^n r_i$, where $\mathbf{x}\in \mathbb{S}$ and $r_1, \ldots,r_n$ are the radii of the disks in the chain $\mathcal{O}$ represented by $\mathbf{x}$. $\mathcal{H}(\mathbf{x})$ is a continuous (linear) function on $\mathbb{S}$. Observe that when the $L_2$-norm of $\mathbf{x}$, $||\mathbf{x}||_2$, approaches infinity, the length of the {\em centered polyline} (the polyline connecting the centers of the disks in $\mathcal{O}$, see Definition~\ref{def:terminals}) also approaches infinity. Therefore, to prevent the chain $\mathcal{O}$ from being broken, the total radii of the disks in the chain (i.e., $\mathcal{H}(\mathbf{x})$) must also approach infinity. In other words, $$\lim_{||\mathbf{x}||_2\rightarrow \infty}\mathcal{H}(\mathbf{x}) \rightarrow \infty.$$ This means that $\mathcal{H}$ is a {\em coercive} function on $\mathbb{S}$. It is known that a continuous coercive function on a non-empty closed set $\mathbb{S} \subseteq \mathcal{R}^{3n}$ has a global minimum, regardless whether $\mathbb{S}$ is bounded or not (see \cite[p.60]{extremabook}). So $\mathcal{H}$ has a global minimum on $\mathbb{S}$. Therefore we can choose $\mathcal{O}^*$ to be the chain in $\mathbb{E}$ that achieves the global minimum of $\mathcal{H}$. Let $u,v$ be the terminals associated with $\mathcal{O}^*$ that yield the worst stretch factor. Then $\mathcal{O}^*$ satisfies three conditions:
\begin{enumerate}
    \item Since $\mathcal{O}^* \in \mathbb{O}$, $\mathcal{O}^*$ has stretch factor $\geq \tau$, i.e., $|P_{\mathcal{O}^*}(u,v)|/|D_{\mathcal{O}^*}(u,v)| \geq \tau$.

    \item Since $\mathcal{O}^* \in \mathbb{E}$, the number of disks in $\mathcal{O}^*$ is minimized among all chains in $\mathbb{O}$.

    \item Since $\mathcal{O}^*$ minimizes $\mathcal{H}$, the sum of the radii $\sum_{O_i\in\mathcal{O}^*}r_i$ is minimized among all chains in $\mathbb{E}$.
\end{enumerate}

Because $\mathcal{O}^*$ satisfies conditions 1---3, we can prove that it has the following two properties (Propositions~\ref{vunobs} and \ref{equal}), which may not hold for a general chain.

\begin{proposition}
$u$ and $v$ are unobstructed in $\mathcal{O}^*$.\label{vunobs}
\end{proposition}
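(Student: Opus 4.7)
The plan is to argue by contradiction, exploiting the minimality of the number of circles in $\mathcal{O}^*$. Specifically, I would suppose that $u,v$ are obstructed, produce a shorter chain in $\mathbb{O}$ by splitting $\mathcal{O}^*$ at the obstruction point, and thereby contradict condition (2) in the choice of $\mathcal{O}^*$.

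First I would recall the definition of obstructed: $D_{\mathcal{O}^*}(u,v)$ contains some point $p_j \in \{a_j, b_j\}$ for some $1 \leq j \leq n-1$. This allows me to partition $\mathcal{O}^*$ into two sub-chains $\mathcal{O}^*_{1,j+1}=(O_1,\ldots,O_{j+1})$ and $\mathcal{O}^*_{j+1,n}=(O_{j+1},\ldots,O_n)$, each of which is itself a chain (Properties (1) and (2) of Definition~\ref{def:circles} are inherited). I need to check that $u,p_j$ form a pair of terminals for the first sub-chain and $p_j,v$ for the second: $u$ is on the boundary of $O_1$ outside the interior of $O_2$ by hypothesis, and $p_j = a_j$ (or $b_j$) lies on the boundary of both $O_j$ and $O_{j+1}$ so it is not in the interior of either. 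The same check works on the other side.

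Next I would use the additive identities that already appear in the proof of Proposition~\ref{obsok}:
\begin{align*}
|P_{\mathcal{O}^*}(u,v)| &\leq |P_{\mathcal{O}^*_{1,j+1}}(u,p_j)| + |P_{\mathcal{O}^*_{j+1,n}}(p_j,v)|,\\
|D_{\mathcal{O}^*}(u,v)| &= |D_{\mathcal{O}^*_{1,j+1}}(u,p_j)| + |D_{\mathcal{O}^*_{j+1,n}}(p_j,v)|.
\end{align*}
Since the stretch factor of $\mathcal{O}^*$ is at least $\tau$ at the pair $(u,v)$, the ratio of the left-hand sides is $\geq \tau$, so by the mediant (fraction-splitting) inequality at least one of the ratios $|P_{\mathcal{O}^*_{1,j+1}}(u,p_j)|/|D_{\mathcal{O}^*_{1,j+1}}(u,p_j)|$ or $|P_{\mathcal{O}^*_{j+1,n}}(p_j,v)|/|D_{\mathcal{O}^*_{j+1,n}}(p_j,v)|$ is $\geq \tau$. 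That sub-chain therefore has stretch factor $\geq \tau$ and belongs to $\mathbb{O}$.

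Finally, both sub-chains contain strictly fewer than $n$ circles (each has at most $n-1$), which contradicts the defining property of $\mathcal{O}^*$ that it minimizes the number of circles among chains in $\mathbb{O}$. Hence $u,v$ cannot be obstructed. There is no serious obstacle here; the only subtlety is checking that the endpoints of the split are legitimate terminals of the resulting sub-chains and that the mediant inequality gives a sub-chain still in $\mathbb{O}$—both are immediate from the definitions.
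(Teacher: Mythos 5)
Your overall strategy matches the paper's: decompose $\mathcal{O}^*$ at the obstruction point $p_j$, use the additive identities for $|P|$ and $|D|$, apply the mediant inequality, and contradict the minimality of the circle count. But your decomposition has a gap. You split into the \emph{overlapping} pair $\mathcal{O}^*_{1,j+1}=(O_1,\ldots,O_{j+1})$ and $\mathcal{O}^*_{j+1,n}=(O_{j+1},\ldots,O_n)$, copying the split used in the proof of Proposition~\ref{obsok}, and then assert that ``both sub-chains contain strictly fewer than $n$ circles (each has at most $n-1$).'' That is false when $j=n-1$: there $\mathcal{O}^*_{1,j+1}=\mathcal{O}^*_{1,n}=\mathcal{O}^*$ has exactly $n$ circles, so if the mediant inequality certifies \emph{that} piece (which it may, since the other piece is a single circle whose stretch factor is at most $\pi/2$ and $\tau$ is arbitrary), you get no contradiction with condition~(2). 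This is not a vacuous case, and notice that the paper's own proof of Proposition~\ref{obsok} treats $j=n-1$ separately precisely because the inductive hypothesis does not apply to a sub-chain of full length.

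The paper sidesteps the problem by using the \emph{non-overlapping} split $\mathcal{O}^*_{1,j}=(O_1,\ldots,O_j)$ and $\mathcal{O}^*_{j+1,n}=(O_{j+1},\ldots,O_n)$; for $1\leq j\leq n-1$ these have $j\leq n-1$ and $n-j\leq n-1$ circles respectively, so the count always strictly drops, and the terminal-validity and additivity checks go through unchanged since $p_j$ lies on $\partial O_j\cap\partial O_{j+1}$. Alternatively, you could patch your argument by handling $j=n-1$ explicitly: since $p_{n-1}\in\{a_{n-1},b_{n-1}\}$ is also a valid terminal of $\mathcal{O}^*_{1,n-1}$, and dropping $O_n$ leaves $|D(u,p_{n-1})|$ unchanged while it cannot decrease $|P(u,p_{n-1})|$, the $(n-1)$-circle chain $\mathcal{O}^*_{1,n-1}$ would also lie in $\mathbb{O}$, again contradicting condition~(2). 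Either fix closes the gap; as written, the step ``each has at most $n-1$'' is simply wrong.
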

\begin{proof}
Suppose that $D_{\mathcal{O}^*}(u,v)$ is obstructed. Then $D_{\mathcal{O}^*}(u,v)$ contains a point $p_j$ which is either $a_j$ or $b_j$, for some $1\leq j \leq n-1$. Consider two sub-chains of $\mathcal{O}^*$: $\mathcal{O}_{1,j}^*=(O_1,\ldots,O_j)$ and $\mathcal{O}_{j+1,n}^*=(O_{j+1},\ldots,O_n)$. So $u,p_j$ are terminals of $\mathcal{O}_{1,j}^*$ and $p_j,v$ are terminals of $\mathcal{O}_{j+1,n}^*$.
For the similar reasons as those for (\ref{ineq:p}) and (\ref{ineq:d}), $|P_{\mathcal{O}^*}(u,v)| \leq |P_{\mathcal{O}_{1,j}^*}(u,p_j)|+|P_{\mathcal{O}_{j+1,n}^*}(p_j,v)|$ and $|D_{\mathcal{O}^*}(u,v)| = |D_{\mathcal{O}_{1,j}^*}(u,p_j)|+|D_{\mathcal{O}_{j+1,n}^*}(p_j,v)|$. Since $|P_{\mathcal{O}^*}(u,v)|/|D_{\mathcal{O}^*}(u,v)| \geq \tau$, we have either $|P_{\mathcal{O}_{1,j}^*}(u,p_j)|/|D_{\mathcal{O}_{1,j}^*}(u,p_j)|\geq \tau$ or $|P_{\mathcal{O}_{j+1,n}^*}(p_j,v)|/|D_{\mathcal{O}_{j+1,n}^*}(p_j,v)|\geq \tau$. This means that either $\mathcal{O}_{1,j}^* \in \mathbb{O}$ or $\mathcal{O}_{j+1,n}^* \in \mathbb{O}$ and both have less number of disks than $\mathcal{O}^*$---a contradiction to condition 2 of $\mathcal{O}^*$. So $u,v$ must be unobstructed in $\mathcal{O}^*$. This completes the proof of Proposition~\ref{vunobs}.
\end{proof}

\begin{proposition}
Both $A_1A_2\ldots A_n$ and $B_1B_2\ldots B_n$ are shortest paths between $u$ and $v$ in $\mathcal{O}^*$.\label{equal}
\end{proposition}
\begin{proof}
Suppose that the statement is not true. Then at least of one of these two paths, say $B_1B_2\ldots B_n$, is not a shortest path between $u$ and $v$ in $\mathcal{O}^*$. Let $Q$ be a shortest path between $u$ and $v$ that contains the longest prefix $B_1B_2\ldots B_{j-1}$ of $B_1B_2\ldots B_n$. If $B_1B_2\ldots B_{j-1}$ is empty, then $j=1$. We have a few observations:
\begin{itemize}
\item $B_j$ is not in any shortest path between $u$ and $v$, because if there is a shortest path $Q'$ that contains $B_j$, then replacing the subpath of $Q'$ before $B_j$ by $B_1B_2\ldots B_{j-1}$ yields another shortest path $Q''$ which contains a longer prefix $B_1B_2\ldots B_j$ than $Q$.

\item Since every shortest path between $u$ and $v$ cannot contain $B_j$, every shortest path between $u$ and $v$ must contain $A_j$ because $\{A_j,B_j\}$ is a cut that separates $u$ and $v$.

\item Since $Q$ contains $B_{j-1}$ and $A_j$ (but not $B_j$), it must also contain $\overline{a_{j-1}b_{j-1}}$. In the case when $j=1$, $B_0$ is degenerated and $a_0 = b_0 = u$

\item $B_j$ is not degenerated because: (i) if both $A_j$ and $B_j$ are degenerated, then $O_j$ can be removed from $\mathcal{O}^*$ without changing the stretch factor---a contradiction to condition 2 of $\mathcal{O}^*$; (ii) if $B_j$ is degenerated and $A_j$ is not degenerated, then $Q$ can be further shortened by taking $\overline{a_jb_j}$ as a short-cut from $b_{j-1}$ (which equals $b_j$) to $a_j$ instead of taking $\overline{a_{j-1}b_{j-1}}$ and $A_j$---a contradiction to the fact that $Q$ is a shortest path.
\end{itemize}

We perform a transformation that shrinks $O_j$ by reducing $r_j$, as follows: fix $a_{j-1}$ and $a_j$ on the boundary of $O_j$ and reduce $O_j$'s radius $r_j$ by a small amount; in case where $A_j$ is degenerated, fix $a_j$ (which equals $a_{j-1}$) on the boundary of $O_j$ and move the center $o_j$ toward $a_j$ by a small amount along the line segment $\overline{o_ja_j}$. See Figure~\ref{fig:shrink} for illustrations. Such transformation can be performed while maintaining the following three properties:

\begin{figure}[tbph]
\psset{unit=0.8pt}
\begin{center} \small

\begin{pspicture}(-140,-130)(200,90)
    \psset{labelsep=6pt}

    \psline[linecolor=red](-142,0)(183,0)

    \psarc(-107,4){35}{10}{302}

    \psarc(-27,-31){61}{101}{138}
    \psarc(-27,-31){61}{175}{372}

    \psarc(21,41){61}{7}{192}


    \psarc(90,-10){58}{188}{308}

    \psarc(123,4){60}{-87}{133}

    \psline(-72,10)(-88,-26)
    \psline(-39,28)(32,-19)

    \psline(82,48)(32,-19)
    \psline(82,48)(125,-56)



    \psline[linecolor=red]{->}(2,-84)(-19,-43)


    \psline[linestyle=dashed](-72,10)(-78,-16)
    \psline[linestyle=dashed](-39,28)(-2,-16)

    \psarc[linestyle=dashed](-40,-9){38}{89}{150}
    \psarc[linestyle=dashed](-40,-9){38}{189}{352}

    \psarc[linestyle=dashed](-107,4){35}{302}{328}

    \psarc[linestyle=dashed](21,41){61}{248}{-80}


    \uput[180](-42,37){$A_j$}

    \uput[180](-70,10){$a_1$}
    \uput[-110](-88,-26){$b_1$}
        \uput[20](-80,-20){$b_1'$}

    \uput[60](-39,25){$a_2$}
    \uput[220](80,-10){$b_2$($b_3$)}
        \uput[20](-33,-20){$b_2'$}

    \uput[0](85,45){$a_3$($a_4$)}
    \uput[-90](125,-56){$b_4$}

    \uput[140](-140,0){$a_0$}
    \uput[180](-140,0){$u$}
    \uput[-140](-140,0){$b_0$}

    \uput[40](180,0){$a_5$}
    \uput[0](180,0){$v$}
    \uput[-40](180,0){$b_5$}


\uput[-90](0,-100){(a) Case 1: $A_j$ is not degenerated.}

\end{pspicture}

\begin{pspicture}(-140,-130)(200,140)
    \psset{labelsep=6pt}

    \psline[linecolor=red](-142,0)(183,0)

    \psarc(-107,4){35}{10}{302}

    \psarc(-27,-31){61}{101}{138}
    \psarc(-27,-31){61}{175}{372}

    \psarc(21,41){61}{7}{192}


    \psarc(90,-10){58}{188}{308}

    \psarc(123,4){60}{-87}{133}

    \psline(-72,10)(-88,-26)
    \psline(-39,28)(32,-19)

    \psline(82,48)(32,-19)
    \psline(82,48)(125,-56)

    \psline[linestyle=dashed](82,48)(49,-14)
    \psline[linestyle=dashed](82,48)(82,-36)



    \psline[linecolor=red]{->}(52,-50)(63,-30)




    \psarc[linestyle=dashed](21,41){61}{-80}{-63}

    \psarc[linestyle=dashed](123,4){60}{-136}{-87}
    \psarc[linestyle=dashed](88,5){43}{-154}{-102}


    \uput[180](-70,10){$a_1$}
    \uput[-110](-88,-26){$b_1$}

    \uput[60](-39,25){$a_2$}
    \uput[220](38,-19){$b_2$($b_3$)}
        \uput[20](48,-20){$b_3'$}

    \uput[102](93,50){$A_j$}

    \uput[0](85,45){$a_3$($a_4$)}
    \uput[-90](125,-56){$b_4$}
        \uput[20](77,-38){$b_4'$}

    \uput[140](-140,0){$a_0$}
    \uput[180](-140,0){$u$}
    \uput[-140](-140,0){$b_0$}

    \uput[40](180,0){$a_5$}
    \uput[0](180,0){$v$}
    \uput[-40](180,0){$b_5$}


    \uput[-90](0,-100){(b) Case 2: $A_j$ is degenerated}

\end{pspicture}
\caption{Illustrations of Propositions~\ref{equal}: shrinking a disk. There are two cases depending on whether $A_j$ is degenerated. }\label{fig:shrink}
\end{center}
\end{figure}

\begin{enumerate}
\item $\mathcal{O}^*$ remains a chain. Since $a_{j-1}$ and $a_j$ stay on the boundary of $O_j$, $O_j$ remains intersected with $O_{j-1}$ and $O_{j+1}$ during the transformation. So property (1) of a chain (in Definition~\ref{def:circles}) is satisfied. Shrinking $O_j$ will shrink the connecting arcs $C_{j-1}^{(j)}$ and $C_{j+1}^{(j)}$, while $C_{j-1}^{(j-2)}$ and $C_{j+1}^{(j+2)}$ remain the same. Therefore the two connecting arcs on $C_{j-1}$, namely $C_{j-1}^{(j-2)}$ and $C_{j-1}^{(j)}$ will not overlap after shrinking $O_j$. For the same reason, $C_{j+1}^{(j)}$ and $C_{j+1}^{(j+2)}$ will not overlap after shrinking $O_j$. The only other possible way that property (2) of a chain can be violated is for $C_{j}^{(j-1)}$ and $C_{j}^{(j+1)}$ to overlap; but we know that $B_j$ is not degenerated. So shrinking $O_j$ by a sufficiently small amount will not make $C_{j}^{(j-1)}$ and $C_{j}^{(j+1)}$ overlap.

\item $B_j$ is still not in any shortest path between $u$ and $v$. This is because the lengths of the paths between $u$ and $v$ changes continuously when $O_j$ shrinks. So we can shrink $O_j$ by a sufficiently small amount such that $B_j$ is still not in any shortest path between $u$ and $v$.

\item $|D_{\mathcal{O}^*}(u,v)|$ remains unchanged. This is because by Proposition~\ref{vunobs}, $u,v$ are unobstructed in $\mathcal{O}^*$. So we can shrink $O_j$ by a sufficiently small amount such that $u,v$ remain unobstructed, which means that $|D_{\mathcal{O}^*}(u,v)|=||uv||$ remains the same.
\end{enumerate}

We claim that shrinking $O_j$ by a sufficiently small amount will not decrease  $|P_{\mathcal{O}^*}(u,v)|$. We will prove the claim by showing that the length of any shortest path between $u$ and $v$ in $\mathcal{O}^*$ is not decreased after shrinking $O_j$. Since shrinking $O_j$ does not affect $\mathcal{O}^*$ except for $A_j$ and the arcs and line segments containing $b_{j-1}$ or $b_j$ (including $B_{j-1}$, $B_j$, $B_{j+1}$,  $\overline{a_{j-1}b_{j-1}}$, and $\overline{a_jb_j}$), we only need to consider the effect of shrinking $O_j$ on shortest paths that contain $A_j$, $b_{i-1}$ or $b_j$.

First consider the effect of shrinking $O_j$ on $|A_j|$. We have $|A_j| \leq ||a_{j-1}b_{j-1}||+|B_j|+||a_jb_j||$ because otherwise $A_j$ will not be in a shortest path. Hence $|A_j|$ is less than half of the circumference of $O_j$. Therefore, shrinking $O_j$ while keeping $a_{j-1}$ and $a_j$ on the boundary of $O_j$ will increase $|A_j|$. For an illustration, see Figure~\ref{fig:shrink} (a) which shows that shrinking $O_j$ increases $|A_j|$.

Next consider the effect of shrinking $O_j$ on any shortest path containing $b_{j-1}$ or $b_j$. Let $P$ be an arbitrary shortest path between $u$ and $v$ in $\mathcal{O}^*$ that contains $b_j$. Since $P$ contains $b_j$ but not $B_j$, $P$ includes $A_j$, $\overline{a_jb_j}$, and $B_{j+1}$. Let $b_j'$ be the new location of $b_j$ after the transformation of $O_j$ (see Figure~\ref{fig:shrink}). Then the arc $\wideparen{b_{j}b_{j}'}$ becomes an extension of $B_{j+1}$. So after shrinking $O_j$, $|B_{j+1}|$ becomes $|B_{j+1}|+|\wideparen{b_{j}b_{j}'}|$ and $||a_jb_j||$ becomes $||a_jb_j'||$. Hence, $|P|$ becomes $|P|+|\wideparen{b_{j}b_{j}'}|+||a_jb_j'||-||a_jb_j||$ after shrinking $O_j$. Since $||a_jb_j||$ is the shortest distance between $a_j$ and $b_j$, we have
\begin{align}
|\wideparen{b_{j}b_{j}'}|+||a_jb_j'||-||a_jb_j|| > 0.\label{ajbj}
\end{align} For an illustration, see Figure~\ref{fig:shrink} (b) which shows that $b_4'$ is the new location of $b_4$ after shrinking $O_4$ and $|\wideparen{b_{4}b_{4}'}|+||a_4b_4'||-||a_4b_4||>0$.  This implies that shrinking $O_j$ increases $|P|$. By a similar argument, shrinking $O_j$ increases the length of any shortest path containing $b_{j-1}$.

So in any case, shrinking $O_j$ will not decrease  $|P_{\mathcal{O}^*}(u,v)|$.

In summary, we have proven that if $B_1B_2\ldots B_n$ is not a shortest path, then by shrinking a disk $O_j\in \mathcal{O}^*$ we have a new chain that satisfies condition 1 and 2 of $\mathcal{O}^*$, and has a smaller sum of radii than $\mathcal{O}^*$, which is a contradiction to condition 3 of $\mathcal{O}^*$. Therefore the statement of Proposition~\ref{equal} is true.
\end{proof}

In order to prove Lemma~\ref{lemma:pot}, we then only need to prove the following.
\begin{proposition}\label{prop:lasatone}
$\Phi_{\mathcal{O}^*} \geq -\frac{\sqrt{5}\varphi}{3}|P_{\mathcal{O}^*}(u,v)|.$
\end{proposition}
\begin{proof}
Recall that $\Phi_{\mathcal{O}^*}=\varphi(r_n-r_1)- \frac{\varphi}{3}\sum_{i=2}^n(2H_i+V_i)$. Without loss of generality, we can assume that $r_n\geq r_1$, because if this is not the case we can reverse the labels of the disks in $\mathcal{O}^*$. So it suffices to show that
\begin{align}
\sum_{i=2}^n(2H_i+V_i)\leq \sqrt{5}|P_{\mathcal{O}^*}(u,v)|.\label{ineq:lastone2}
\end{align}

See Figure~\ref{fig:equals}. We create a chain $\overline{\mathcal{O}^*}$ from $\mathcal{O}^*$ where the centers $o_1\ldots o_{n}$ are aligned on a straight line, as follows: first rotate $O_1$ around $o_1$ such that $u$ comes to a location where $|A_1|=|B_1|$; then rotate $O_1$ and $O_2$ as a whole around $o_2$ such that $|A_2|=|B_2|$; then rotate $O_1,O_2$, and $O_3$ as a whole around $o_3$ such that $|A_3|=|B_3|$; and so on. $\overline{\mathcal{O}^*}$ is the resulting chain after the rotation operations.

Observe that the rotation operations do not change the total length of the boundary of the chain. So $\sum_{i=1}^n (|A_i|+|B_i|)$ in $\overline{\mathcal{O}^*}$ remains the same as in $\mathcal{O}^*$, which is $2|P_{\mathcal{O}^*}(u,v)|$ by Proposition~\ref{equal}. In $\overline{\mathcal{O}^*}$, since $A_i=B_i$ for $1\leq i \leq n$, we have \begin{align}
|A_1\ldots A_n|=|B_1\ldots B_n| = |P_{\mathcal{O}^*}(u,v)|.\label{aaa}
\end{align}

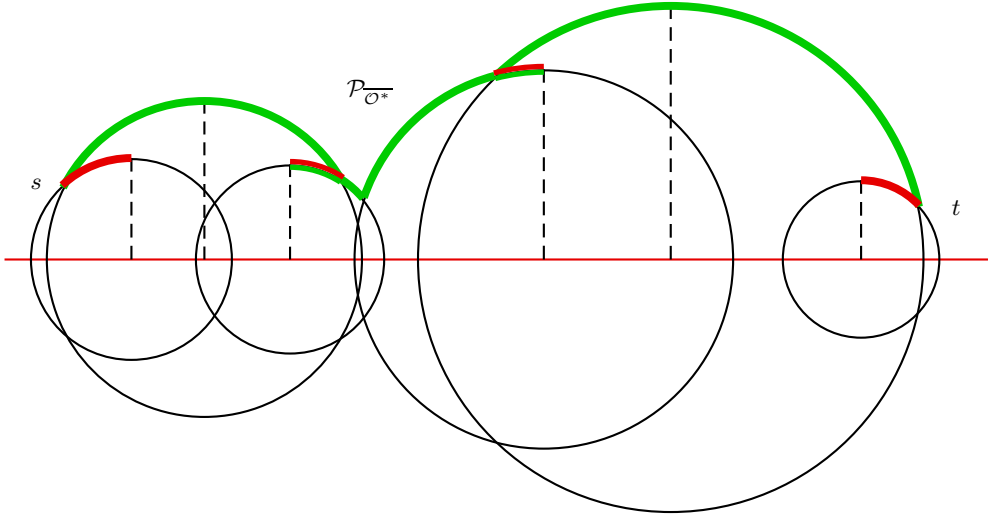
\begin{figure}[tbp]
\psset{unit=1.2pt}
\begin{center} \small

\begin{pspicture}(-180,-80)(160,80)
    \psset{labelsep=5pt}

    \psline[linecolor=red](-170,0)(140,0)

    \pscircle(-130,0){32}
    \pscircle(-107,0){50}
    \pscircle(-80,0){30}
    \pscircle(0,0){60}
    \pscircle(40,0){80}
    \pscircle(100,0){25}

    \psline[linestyle=dashed](-130,0)(-130,32)
    \psline[linestyle=dashed](-107,0)(-107,50)
    \psline[linestyle=dashed](-80,0)(-80,30)
    \psline[linestyle=dashed](0,0)(0,60)
    \psline[linestyle=dashed](40,0)(40,80)
    \psline[linestyle=dashed](100,0)(100,25)

    \psarc[linecolor=green,linewidth=3pt](-107,0){50}{30}{153}
    \psarc[linecolor=red,linewidth=3pt](-130,0){32}{90}{133}

    \psarc[linecolor=green,linewidth=3pt](-80,0){30}{40}{57}

    \psarc[linecolor=green,linewidth=3pt](0,0){60}{105}{161}

    \psarc[linecolor=green,linewidth=3pt](40,0){80}{12}{133}

    \psarc[linecolor=red,linewidth=3pt](100,0){25}{43}{90}

    \psarc[linecolor=green,linewidth=2pt](-80,0){29.1}{57}{90}
    \psarc[linecolor=red,linewidth=2pt](-80,0){30.9}{57}{90}
    \psarc[linecolor=green,linewidth=2pt](0,0){59.1}{90}{105}
    \psarc[linecolor=red,linewidth=2pt](0,0){60.9}{90}{105}

   \uput[90](-160,18){$s$}
   \uput[90](130,10){$t$}

   \uput[90](-55,45){$\mathcal{P}_{\overline{\mathcal{O}^*}}$}

\end{pspicture}

\caption{An illustration for Proposition~\ref{prop:lasatone}. The centers of the disks in $\overline{\mathcal{O}^*}$ are aligned on a straight line. The path $\mathcal{P}_{\overline{\mathcal{O}^*}}$ consists of ``heavy'' (red or dark-gray) arcs and ``light'' (green or light-gray) arcs. If we let overlapping heavy arcs and light arcs in $\mathcal{P}_{\overline{\mathcal{O}^*}}$ cancel each other out and then remove the heavy arcs at the beginning or the end of $\mathcal{P}_{\overline{\mathcal{O}^*}}$, the resulting path is $\mathcal{P}_{\overline{\mathcal{O}^*}}'$, which is a subpath of $A_1\ldots A_n$ between $s$ and $t$.}\label{fig:equals}
\end{center}
\end{figure}

Recall from Definition~\ref{peak} that $H_i$ and $V_i$ are the horizontal and vertical distance traveled along the path $\mathcal{P}_i$, where $\mathcal{P}_i$ is a path from $q_{i-1}^{\rightarrow}$, a peak of $O_{i-1}$ with regard to $o_{i-1}o_i$, to $q_i^{\leftarrow}$, a peak of $O_i$ with regard to $o_{i-1}o_i$. Also recall that the arcs in $\mathcal{P}_i$ are ``light'' or ``heavy'' depending on whether the arc is on the boundary of the chain or not. The light arcs in $\mathcal{P}_i$ contribute positively to $H_i$ and $V_i$, and the heavy arcs in $\mathcal{P}_i$ contribute negatively to $H_i$ and $V_i$. Observe that $H_i$ and $V_i$ are determined by the sizes of $O_{i-1},O_i$ and the distance between them. The rotation operations do not affect the sizes and the distance of $O_{i-1}$ and $O_i$. So $\sum_{i=2}^n(2H_i+V_i)$ in $\overline{\mathcal{O}^*}$ remains the same as in $\mathcal{O}^*$.

Since the centers $o_1\ldots o_{n}$ are all aligned on a straight line  in $\overline{\mathcal{O}^*}$, the peaks $q_i^{\leftarrow}$ and $q_i^{\rightarrow}$ of every disk $O_i$ overlap, for $2\leq i\leq n-1$. So the paths $\mathcal{P}_2,\ldots,\mathcal{P}_n$ join at the peaks to form a single path from $q_1^{\rightarrow}$ to $q_n^{\leftarrow}$; denote it by $\mathcal{P}_{\overline{\mathcal{O}^*}}$ (see Figure~\ref{fig:equals}). Let $H_{\overline{\mathcal{O}^*}}$ and $V_{\overline{\mathcal{O}^*}}$ the horizontal and vertical distance traveled along the path $\mathcal{P}_{\overline{\mathcal{O}^*}}$. Then $H_{\overline{\mathcal{O}^*}} = \sum_{i=2}^n H_i$ and $V_{\overline{\mathcal{O}^*}} = \sum_{i=2}^n V_i$.

We further refine $\mathcal{P}_{\overline{\mathcal{O}^*}}$ into another path $\mathcal{P}_{\overline{\mathcal{O}^*}}'$ as follows (see Figure~\ref{fig:equals}). First we let the overlapping portions of heavy arcs and light arcs in $\mathcal{P}_{\overline{\mathcal{O}^*}}$ cancel each other out; this will not affect $H_{\overline{\mathcal{O}^*}}$ and $V_{\overline{\mathcal{O}^*}}$. All heavy arcs in $\mathcal{P}_{\overline{\mathcal{O}^*}}$ will be canceled in this manner except for those at the beginning or the end of $\mathcal{P}_{\overline{\mathcal{O}^*}}$, which can then be safely removed since they contribute negatively to $H_{\overline{\mathcal{O}^*}}$ and $V_{\overline{\mathcal{O}^*}}$. The resulting path is $\mathcal{P}_{\overline{\mathcal{O}^*}}'$. Let $s$ and $t$ be the end points of $\mathcal{P}_{\overline{\mathcal{O}^*}}'$.

Let $H_{\overline{\mathcal{O}^*}}'$ and $V_{\overline{\mathcal{O}^*}}'$ be the horizontal and vertical distance traveled by $\mathcal{P}_{\overline{\mathcal{O}^*}}'$. Then \begin{align}
\sum_{i=2}^n H_i = H_{\overline{\mathcal{O}^*}} \leq H_{\overline{\mathcal{O}^*}}'\label{hibound}
\end{align}
and
\begin{align}
\sum_{i=2}^n V_i = V_{\overline{\mathcal{O}^*}} \leq V_{\overline{\mathcal{O}^*}}'.\label{vibound}
\end{align}
Since $\mathcal{P}_{\overline{\mathcal{O}^*}}'$ is a path between $s$ and $t$,
by the generalized triangle inequality, we have
\begin{align}
\sqrt{(H_{\overline{\mathcal{O}^*}}')^2+(V_{\overline{\mathcal{O}^*}}')^2} &= \sqrt{\left(\int_s^t |dx|\right)^2+\left(\int_s^t |dy|\right)^2} \nonumber\\
&\leq \int_s^t\sqrt{(dx)^2+(dy)^2} \nonumber\\
&= |\mathcal{P}_{\overline{\mathcal{O}^*}}'|.\label{ineq:gri}
\end{align}
Since $\mathcal{P}_{\overline{\mathcal{O}^*}}'$ is a subpath of $A_1\ldots A_n$, from (\ref{aaa}), we have
\begin{align}
|\mathcal{P}_{\overline{\mathcal{O}^*}}'|\leq |A_1\ldots A_n|= |P_{\mathcal{O}^*}(u,v)|.\label{paaa}
\end{align}

Combining (\ref{hibound}), (\ref{vibound}), (\ref{ineq:gri}) and (\ref{paaa}) gives
\begin{align}
\sqrt{(\sum_{i=2}^n H_i)^2+(\sum_{i=2}^n V_i)^2} &\leq \sqrt{(H_{\overline{\mathcal{O}^*}}')^2+(V_{\overline{\mathcal{O}^*}}')^2} \nonumber\\
&\leq |\mathcal{P}_{\overline{\mathcal{O}^*}}'| \nonumber\\
&\leq |P_{\mathcal{O}^*}(u,v)|.\label{ineq:lastone3}
\end{align}
Applying the Cauchy-Schwarz inequality and using (\ref{ineq:lastone3}), we have \begin{align}
2\sum_{i=2}^n H_i+\sum_{i=2}^n V_i &\leq \sqrt{2^2+1^2}\cdot\sqrt{(\sum_{i=2}^n H_i)^2+(\sum_{i=2}^n V_i)^2}\nonumber\\
&= \sqrt{5}\cdot\sqrt{(\sum_{i=2}^n H_i)^2+(\sum_{i=2}^n V_i)^2}\nonumber\\
&\leq \sqrt{5}\cdot|P_{\mathcal{O}^*}(u,v)|,
\end{align} as required by (\ref{ineq:lastone2}). This proves Proposition~\ref{prop:lasatone}.
\end{proof}
This completes the proof of Lemma~\ref{lemma:pot}.

\section{Conclusions}\label{sec:conclu}
In this paper, we showed that the stretch factor of the Delaunay triangulation is less than 1.998 by proving the same upper bound on the stretch factor of the chain.

There are a few places where our approach can be further improved. Firstly, the potential function can be improved to yield a better upper bound. For example, if we define the potential function $\Phi_{\mathcal{O}}$ to be the length of the segment of $\overline{uv}$ inside $O_n$, then we can improve the upper bound to 1.98, although the analysis is quite complicated. Secondly, the key components of our proof are Proposition~\ref{upper} and Proposition~\ref{last}, whose proofs rely largely on functional analysis. We hope to gain insight of the underlying geometry that will help us simplify the proofs and push the upper bound closer to the tight bound.
~\\

\noindent{\bf Acknowledgement:} This work is supported in part by a Lafayette College research grant. The author thanks the anonymous reviewers for their valuable comments to improve the quality of the paper. The author is also grateful to Iyad Kanj and Shiliang Cui for helpful discussions related to the paper.

\bibliographystyle{plain}
\bibliography{ref}


\newpage

\section{Appendix}

\subsection{Proofs for Equality (\ref{hivi1}) and Inequality (\ref{hivi2})}\label{sec:appendix-hivi}

For this proof, fix a coordinate system where the origin is $o_i$, $x$-axis is $\overrightarrow{o_{i-1}o_i}$, and $a_{i-1}$ is on or above the $x$-axis. Let $X_{q_{i-1}^{\rightarrow}}$ and $Y_{q_{i-1}^{\rightarrow}}$ be the $x$- and $y$-coordinates of $q_{i-1}^{\rightarrow}$. Let $X_{q_i^{\leftarrow}}$ and $Y_{q_i^{\leftarrow}}$ be the $x$- and $y$-coordinates of $q_i^{\leftarrow}$. Let $X_{a_{i-1}}$ and $Y_{a_{i-1}}$ be the $x$- and $y$-coordinates of $a_{i-1}$. We distinguish three cases.
\begin{itemize}
\item Case 1. $Q_{i-1}^{\rightarrow}$ is light and $Q_i^{\leftarrow}$ is heavy, as in Figure~\ref{fig:phi}(a). In this case, $r_i < r_{i-1}$, $X_{a_{i-1}} > X_{q_{i-1}^{\rightarrow}}$ and $X_{a_{i-1}} > X_{q_i^{\leftarrow}}$. The horizontal distance traveled by $Q_{i-1}^{\rightarrow}$ and $Q_i^{\leftarrow}$ are $X_{a_{i-1}}-X_{q_{i-1}^{\rightarrow}}$ and $X_{a_{i-1}}-X_{q_i^{\leftarrow}}$, respectively. So
 $$H_i=(X_{a_{i-1}}-X_{q_{i-1}^{\rightarrow}})-(X_{a_{i-1}}-X_{q_i^{\leftarrow}})=
X_{q_i^{\leftarrow}}-X_{q_{i-1}^{\rightarrow}}=||o_io_{i-1}||.$$ In this case, the vertical distance traveled by $Q_{i-1}^{\rightarrow}$ and $Q_i^{\leftarrow}$ are $Y_{q_{i-1}^{\rightarrow}}-Y_{a_{i-1}}$ and $Y_{q_i^{\leftarrow}}-Y_{a_{i-1}}$, respectively (this is the same for all three cases). We have $$V_i=(Y_{q_{i-1}^{\rightarrow}}-Y_{a_{i-1}})-(Y_{q_i^{\leftarrow}}-Y_{a_{i-1}})= Y_{q_{i-1}^{\rightarrow}}-Y_{q_i^{\leftarrow}}=r_{i-1}-r_i=|r_i-r_{i-1}|.$$

\item Case 2. $Q_{i-1}^{\rightarrow}$ is heavy and $Q_i^{\leftarrow}$ is light, as in Figure~\ref{fig:phi}(b). In this case, $r_i > r_{i-1}$, $X_{q_{i-1}^{\rightarrow}} > X_{a_{i-1}}$ and $X_{q_i^{\leftarrow}} > X_{a_{i-1}}$. The horizontal distance traveled by $Q_{i-1}^{\rightarrow}$ and $Q_i^{\leftarrow}$ are $X_{q_{i-1}^{\rightarrow}}-X_{a_{i-1}}$ and $X_{q_i^{\leftarrow}}-X_{a_{i-1}}$, respectively. So $$H_i=-(X_{q_{i-1}^{\rightarrow}}-X_{a_{i-1}})+(X_{q_i^{\leftarrow}}-X_{a_{i-1}})=
X_{q_i^{\leftarrow}}-X_{q_{i-1}^{\rightarrow}}=||o_io_{i-1}||.$$ Same as in case 1, the vertical distance traveled by $Q_{i-1}^{\rightarrow}$ and $Q_i^{\leftarrow}$ are $Y_{q_{i-1}^{\rightarrow}}-Y_{a_{i-1}}$ and $Y_{q_i^{\leftarrow}}-Y_{a_{i-1}}$, respectively. We have $$V_i=-(Y_{q_{i-1}^{\rightarrow}}-Y_{a_{i-1}})+(Y_{q_i^{\leftarrow}}-Y_{a_{i-1}}) = Y_{q_i^{\leftarrow}}-Y_{q_{i-1}^{\rightarrow}}=r_i-r_{i-1}=|r_i-r_{i-1}|.$$

\item Case 3. Both $Q_{i-1}^{\rightarrow}$ and $Q_i^{\leftarrow}$ are light, as in Figure~\ref{fig:phi}(c). In this case, $X_{a_{i-1}} > X_{q_{i-1}^{\rightarrow}}$ and $X_{q_i^{\leftarrow}} > X_{a_{i-1}}$. The horizontal distance traveled by $Q_{i-1}^{\rightarrow}$ and $Q_i^{\leftarrow}$ are $X_{a_{i-1}}-X_{q_{i-1}^{\rightarrow}}$ and $X_{q_i^{\leftarrow}}-X_{a_{i-1}}$, respectively. So $$
H_i=(X_{a_{i-1}}-X_{q_{i-1}^{\rightarrow}})+(X_{q_i^{\leftarrow}}-X_{a_{i-1}}) =
X_{q_i^{\leftarrow}}-X_{q_{i-1}^{\rightarrow}}=||o_io_{i-1}||.$$ Same as in case 1, the vertical distance traveled by $Q_{i-1}^{\rightarrow}$ and $Q_i^{\leftarrow}$ are $Y_{q_{i-1}^{\rightarrow}}-Y_{a_{i-1}}$ and $Y_{q_i^{\leftarrow}}-Y_{a_{i-1}}$, respectively. We have $$
V_i=(Y_{q_i^{\leftarrow}}-Y_{a_{i-1}})+(Y_{q_{i-1}^{\rightarrow}}-Y_{a_{i-1}}).$$ Note that $Y_{q_{i-1}^{\rightarrow}}-Y_{a_{i-1}} \geq 0$ and $Y_{q_i^{\leftarrow}}-Y_{a_{i-1}} \geq 0$. If $r_i\geq r_{i-1}$ (i.e., $Y_{q_i^{\leftarrow}} \geq Y_{q_{i-1}^{\rightarrow}}$) then
\begin{align*}
V_i&=(Y_{q_i^{\leftarrow}}-Y_{a_{i-1}})+(Y_{q_{i-1}^{\rightarrow}}-Y_{a_{i-1}})\\
&\geq (Y_{q_i^{\leftarrow}}-Y_{a_{i-1}})-(Y_{q_{i-1}^{\rightarrow}}-Y_{a_{i-1}}) \\
&= Y_{q_i^{\leftarrow}}-Y_{q_{i-1}^{\rightarrow}} \\
&= r_i-r_{i-1} =|r_i-r_{i-1}|.
\end{align*}
If $r_i < r_{i-1}$ (i.e., $Y_{q_i^{\leftarrow}} < Y_{q_{i-1}^{\rightarrow}}$) then
\begin{align*}
V_i&=(Y_{q_i^{\leftarrow}}-Y_{a_{i-1}})+(Y_{q_{i-1}^{\rightarrow}}-Y_{a_{i-1}})\\
&\geq -(Y_{q_i^{\leftarrow}}-Y_{a_{i-1}})+(Y_{q_{i-1}^{\rightarrow}}-Y_{a_{i-1}})\\
&= Y_{q_{i-1}^{\rightarrow}}-Y_{q_i^{\leftarrow}}\\
&= r_{i-1}-r_i =|r_i-r_{i-1}|.
\end{align*}

\item It is impossible for both $Q_{i-1}^{\rightarrow}$ and $Q_i^{\leftarrow}$ to be heavy.
\end{itemize}
So in any case, $H_i = ||o_io_{i-1}||$ and
$V_i \geq |r_i-r_{i-1}|,$ as required for Equality (\ref{hivi1}) and Inequality (\ref{hivi2}).



\subsection{Proof for Inequality~(\ref{gammagamma})}\label{sec:appendix-gamma}

Recall that  $$\gamma^* =
3\beta/4-\alpha/4+\arcsin\left(\frac{\alpha+\beta}{4\lambda\sin(\alpha/4+\beta/4)}\right).$$ Let $\mu=\alpha/4+\beta/4$ and $\nu=\frac{\mu}{\lambda \sin \mu} = \frac{\alpha+\beta}{4\lambda\sin(\alpha/4+\beta/4)}$. So
\begin{align}
\frac{\partial \mu}{\partial \beta}=\frac{\partial (\alpha/4+\beta/4)}{\partial \beta} > 0.\label{zbeta}
\end{align}
By (\ref{abboth}), $0 < \mu < \pi/4$. In this range, $\mu < \tan \mu$ and $\sin \mu> 0$. We have
\begin{align}
\frac{\partial \nu}{\partial \mu} = \frac{\partial \frac{\mu}{\lambda \sin \mu}}{\partial \mu} = (1-\mu/\tan \mu)/(\lambda\sin \mu)> 0.\label{zeta}
\end{align}
This means that $\nu$ is an increasing function of $\mu$. Therefore $\nu=\frac{\mu}{\lambda \sin \mu} \leq \frac{\pi/4}{\lambda \sin(\pi/4)} < 0.618.$ Also $\nu > 0$. In the range $0 < \nu < 0.618$, we have \begin{align}
\frac{\partial \arcsin \nu}{\partial \nu} = 1/\sqrt{1-\nu^2} > 0,\label{arcsinzeta}
\end{align}
By (\ref{zbeta}), (\ref{zeta}), and (\ref{arcsinzeta}),
$$\frac{\partial \gamma^*}{\partial \beta} = 3/4+\frac{\partial \arcsin \nu}{\partial \nu}\cdot \frac{\partial \nu}{\partial \mu}\cdot \frac{\partial \mu}{\partial \beta} > 0.$$
This means $\gamma^*$ is an increasing function of $\beta$, and since $\beta \leq \sin\alpha$, we have  $$\gamma^* \leq \gamma^* |_{\beta = \sin\alpha} =\frac{3\sin\alpha-\alpha}{4}+
\arcsin\left(\frac{\alpha+\sin\alpha}{4\lambda\sin(\frac{\alpha+\sin\alpha}{4})}\right) = \gamma^+,$$ as required by Inequality~(\ref{gammagamma}).

\subsection{Calculations of the Partial Derivatives in Equalities (\ref{pder}), (\ref{phider}) and (\ref{dder})}\label{sec:appendix-calc}

\begin{figure}[tbph]
\psset{unit=1pt}
\begin{center} \small
\begin{pspicture}(-160,-120)(160,120)
    \psset{labelsep=5pt}

\psline[linestyle=dotted](129,0)(129,-52)

    \pscircle(0,0){100}

    \pscircle[linecolor=blue](60,0){87.5}


    \psline(60,0)(129,-52)

  \psline{->}(-110,0)(160,0)
  \psline(60,0)(50,87)
    \psline[linecolor=red](50,87)(50,-87)
  \psline[linestyle=dashed](-160,60)(129,-52)
    \psline[linestyle=dashed](-160,60)(-100,60)
\psline{->}(50,87)(50,120)
\psline(50,-87)(50,-100)







  \uput[90](-160,60){$u$}
  \uput[-90](0,0){$o_{n-1}$}

    \uput[-90](60,0){$o_n$}

  \uput[45](150,0){$q$}


\uput[-60](129,-52){$v$}

\uput[30](50,87){$a_{n-1}$}
\uput[-45](50,-87){$b_{n-1}$}

\uput[30](130,47){$A_n$}
\uput[-45](100,-77){$B_n$}


\uput[45](70,10){$\alpha$}

\uput[0](80,-10){$\beta$}

\uput[0](-120,50){$\gamma$}

\uput[0](145,-20){$\eta$}



\psarc{->}(60,0){17}{0}{100}
\psarc{->}(60,0){25}{-38}{0}
\psarc[linewidth=3pt,linecolor=magenta](60,0){87}{-36}{0}

\psarc{->}(-160,60){40}{-22}{0}



\cnode*[linecolor=red](-160,60){2pt}{z}
\cnode*[linecolor=red](50,87){2pt}{a}
\cnode*[linecolor=red](50,-87){2pt}{b}

\cnode*[linecolor=red](129,-52){2pt}{v}

\uput[-45](-70,-60){$O_{n-1}$}
\uput[-45](20,-60){$O_n$}

\uput[135](50,0){$o$}
\uput[90](129,0){$v_x$}

\end{pspicture}

\caption{An illustration for the parameters defined in Section~\ref{sec:appendix-calc}.}\label{fig:finalsetup3}
\end{center}
\end{figure}
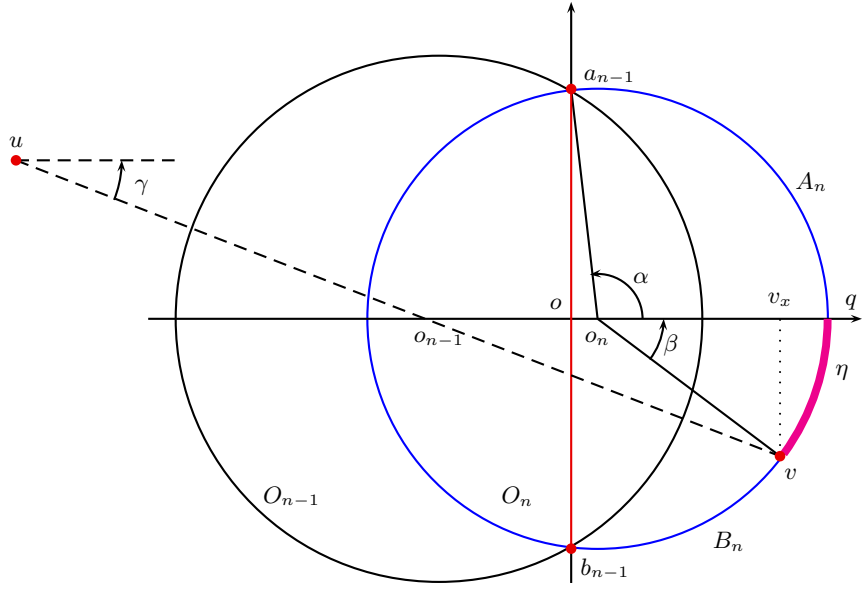

First, let's define some parameters (refer to Figure~\ref{fig:finalsetup3} for an illustration):
\begin{itemize}
\item Let $X_u, Y_u$ and $X_v, Y_v$ be the $x$- and $y$-coordinates of $u$ and $v$, respectively.

\item Let $Y_{a_{n-1}}$ be the $y$-coordinate of $a_{n-1}$.

\item Let $\eta=\beta r_n$.
\end{itemize}

We claim that the parameters $X_u$, $Y_u$, $Y_{a_{n-1}}$ and $\eta$ remain constant during the transformation. Here is why. Firstly, $X_u$ and $Y_u$ remain constant during the transformation because the location of $u$ is not changed by the transformation. Secondly, $Y_{a_{n-1}}$ remains constant during the transformation because $a_{n-1}$ is fixed during the transformation. Thirdly, observe that $\eta = \beta r_n$ is the length of the arc between $v$ and $q$ on the boundary of $O_n$ (the thick arc in Figure~\ref{fig:finalsetup3}). Recall that $A_n$ is the arc on the boundary of $O_n$ between $a_{n-1}$ and $v$, and $B_n$ is the arc on the boundary of $O_n$ between $b_{n-1}$ and $v$. So $\eta=(|A_n|-|B_n|)/2$. During the transformation, $v$ stays pivotal and hence $|P_{\mathcal{O}}^{A_n}(u,v)| = |P_{\mathcal{O}}^{B_n}(u,v)|$, where $|P_{\mathcal{O}}^{A_n}(u,v)| = |P_{\mathcal{O}}(u,a_{n-1})|+|A_n|$ and $|P_{\mathcal{O}}^{B_n}(u,v)| = |P_{\mathcal{O}}(u,b_{n-1})|+|B_n|$. Therefore, $\eta = (|A_n|-|B_n|)/2=(|P_{\mathcal{O}}(u,b_{n-1})| - |P_{\mathcal{O}}(u,a_{n-1})|)/2$. Since the transformation does not affect $O_1,\ldots,O_{n-1}$, clearly $|P_{\mathcal{O}}(u,b_{n-1})|$ and $|P_{\mathcal{O}}(u,a_{n-1})|$ remain constant during the transformation. In other words, $\eta$ remains constant during the transformation.

In what follows, we express all other parameters as functions of ($X_{o_n}$, $\eta$, $X_u$, $Y_u$,  $Y_{a_{n-1}}$) and then calculate their partial derivatives with respect to $X_{o_n}$. This is achievable because, as mentioned above, the parameters $\eta$, $X_u$, $Y_u$, and $Y_{a_{n-1}}$ are all independent of $X_{o_n}$.

Refer to Figure~\ref{fig:finalsetup3}. Let $o$ be the origin of the coordinate system. Consider the triangle $\triangle oo_na_{n-1}$. We have $||o_na_{n-1}||=r_n$, $||oo_n||=|X_{o_n}|$ ($X_{o_n} < 0$ if $o_n$ is to the left of $o$ and $X_{o_n} \geq 0$ otherwise), and $||oa_{n-1}||=Y_{a_{n-1}}$ ($Y_{a_{n-1}}$ is always non-negative since $a_{n-1}$ is above $b_{n-1}$). Therefore
\begin{align}
r_n &= ||o_na_{n-1}|| = \sqrt{||oo_n||^2+||oa_{n-1}||^2} = \sqrt{X_{o_n}^2+Y_{a_{n-1}}^2},\label{r_n_eq}
\end{align}
and hence
\begin{align}
\frac{\partial{r_n}}{\partial{X_{o_n}}} &= \frac{\partial{\sqrt{X_{o_n}^2+Y_{a_{n-1}}^2}}}{\partial{X_{o_n}}} = \frac{2X_{o_n}}{2\sqrt{X_{o_n}^2+Y_{a_{n-1}}^2}} = \frac{X_{o_n}}{r_n}= -\cos\alpha.\label{r_n_der}
\end{align}
The last equality is based on the geometric observation that $\cos{\alpha}=-\frac{X_{o_n}}{r_n}$ (refer to Figure~\ref{fig:finalsetup3}).

Since $\alpha = \pi/2+\angle oa_{n-1}o_n$ and $\angle oa_{n-1}o_n = \arctan(\frac{X_{o_n}}{Y_{a_{n-1}}})$ (note that $\angle oa_{n-1}o_n \geq 0$ if and only if $X_{o_n} \geq 0$), we have
\begin{align}
\alpha &= \pi/2+\arctan(\frac{X_{o_n}}{Y_{a_{n-1}}}),\label{alpha_eq}
\end{align}
and hence
\begin{align}
\frac{\partial{\alpha}}{\partial{X_{o_n}}} &= \frac{\partial{(\pi/2+\arctan(\frac{X_{o_n}}{Y_{a_{n-1}}})})}{\partial{X_{o_n}}} = \frac{Y_{a_{n-1}}^2}{X_{o_n}^2+Y_{a_{n-1}}^2}\cdot \frac{1}{Y_{a_{n-1}}} = \frac{Y_{a_{n-1}}}{X_{o_n}^2+Y_{a_{n-1}}^2} = \frac{Y_{a_{n-1}}}{r_n^2}= \frac{\sin\alpha}{r_n}.\label{alpha_der}
\end{align} The last equality is based on the geometric observation that $\sin{\alpha}=\frac{Y_{a_{n-1}}}{r_n}$ (refer to Figure~\ref{fig:finalsetup3}).

Since $\eta=\beta r_n$, we have
\begin{align}
\beta &= \eta/r_n,\label{beta_eq}\end{align}
and hence, from (\ref{r_n_der}), we have
\begin{align}
\frac{\partial{\beta}}{\partial{X_{o_n}}} &= \frac{\partial{(\eta/r_n)}}{\partial{X_{o_n}}} = -\frac{\eta}{r_n^2}\cdot\frac{\partial{r_n}}{\partial{X_{o_n}}} = -\frac{\beta}{r_n}\cdot\frac{\partial{r_n}}{\partial{X_{o_n}}} = \frac{\beta\cos\alpha}{r_n}.\label{beta_der}
\end{align}

Refer to Figure~\ref{fig:finalsetup3}. Let $v_x$ be the projection of $v$ on the x-axis. Consider the triangle $\triangle o_nvv_x$. We have $X_{v} - X_{o_n}= ||o_nv||\cos\beta = r_n\cos\beta$. Therefore
\begin{align}
X_{v} &= X_{o_n}+r_n\cos\beta,\label{x_v_eq}
\end{align}
and hence, from (\ref{r_n_der}) and (\ref{beta_der}), we have
\begin{align}
\frac{\partial{X_{v}}}{\partial{X_{o_n}}} &=  \frac{\partial{(X_{o_n}+r_n\cos\beta)}}{\partial{X_{o_n}}}= 1-r_n\sin\beta\frac{\partial{\beta}}{\partial{X_{o_n}}}+\cos\beta\frac{\partial{r_n}}{\partial{X_{o_n}}} \nonumber \\
&= 1-\beta\cos\alpha\sin\beta-\cos\alpha\cos\beta.\label{x_v_der}
\end{align}

Again consider the triangle $\triangle o_nvv_x$. We have
\begin{align}
Y_{v} &= -||o_nv||\sin\beta = -r_n\sin\beta,\label{y_v_eq}
\end{align}
and hence, from (\ref{r_n_der}) and (\ref{beta_der}), we have
\begin{align}
\frac{\partial{Y_{v}}}{\partial{X_{o_n}}} &=  \frac{\partial{(-r_n\sin\beta)}}{\partial{X_{o_n}}} = -r_n\cos\beta\frac{\partial{\beta}}{\partial{X_{o_n}}}-\sin\beta\frac{\partial{r_n}}{\partial{X_{o_n}}} \nonumber \\
&= -\beta\cos\alpha\cos\beta+\cos\alpha\sin\beta.\label{y_v_der}
\end{align}

Since $A_n$ is the arc on the boundary of $O_n$ between $a_{n-1}$ and $v$, we have
\begin{align}
|A_n| = (\alpha+\beta)r_n = \alpha r_n+\beta r_n=\alpha r_n+\eta,\label{a_n_eq}
\end{align}
and hence, from (\ref{r_n_der}) and (\ref{alpha_der}), we have \begin{align} \frac{\partial{|A_{n}|}}{\partial{X_{o_n}}}
=\frac{\partial{(\alpha r_n+\eta)}}{\partial{X_{o_n}}}
=\frac{\partial{(\alpha r_n)}}{\partial{X_{o_n}}}
= \alpha\frac{\partial{r_n}}{\partial{X_{o_n}}}+ r_n\frac{\partial{\alpha}}{\partial{X_{o_n}}} = \sin\alpha-\alpha\cos\alpha.\label{alphar_n_der}
\end{align}

Next we will express $\frac{\partial{\Upsilon_{\mathcal{O}}(u,v)}}{\partial{X_{o_n}}}$ as a function of $\alpha, \beta$ and $\gamma$.

By the definition of $\Upsilon_{\mathcal{O}}(u,v)$, see (\ref{defupsilon}), we have
\begin{align}
\frac{\partial{\Upsilon_{\mathcal{O}}(u,v)}}{\partial{X_{o_n}}} &= \frac{\partial{(|P_{\mathcal{O}}(u,v)| - \lambda|D_{\mathcal{O}}(u,v)|+\Phi_{\mathcal{O}})}}{\partial{X_{o_n}}} = \frac{\partial{|P_{\mathcal{O}}(u,v)|}}{\partial{X_{o_n}}} - \lambda\frac{\partial{|D_{\mathcal{O}}(u,v)|}}{\partial{X_{o_n}}}
+\frac{\partial{|\Phi_{\mathcal{O}}|}}{\partial{X_{o_n}}}\label{upsilon_der}
\end{align}

Since $v$ is the pivotal point, we have $|P_{\mathcal{O}}(u,v)| = |P_{\mathcal{O}}(u,a_{n-1})|+|A_{n}|$, where  $|P_{\mathcal{O}}(u,a_{n-1})|$ is independent of $X_{o_n}$ because the transformation does not affect $O_1,\ldots,O_{n-1}$. Hence from (\ref{alphar_n_der}),
\begin{align}
\frac{\partial{|P_{\mathcal{O}}(u,v)|}}{\partial{X_{o_n}}}=
\frac{\partial{(P_{\mathcal{O}}(u,a_{n-1})+|A_{n}|)}}{\partial{X_{o_n}}}
=\frac{\partial{|A_{n}|}}{\partial{X_{o_n}}}
= \sin\alpha-\alpha\cos\alpha,\label{pder-appendix}
\end{align}
which gives Equality (\ref{pder}).\\
\\

From (\ref{phi_def}),
\begin{align}
\frac{\partial{\Phi_{\mathcal{O}}}}{\partial{X_{o_n}}} &= \frac{\partial{(\varphi (r_{n}-r_1)-\frac{\varphi}{3}\sum_{i=2}^{n-1}(2H_i+V_i))}}{\partial{X_{o_n}}}.
\end{align}
Since the transformation does not affect $O_1,\ldots,O_{n-1}$, the variables $r_1$, $H_1,\ldots,H_{n-1}$, $V_1,\ldots,V_{n-1}$ are independent of $X_{o_n}$. We have
\begin{align}
\frac{\partial{\Phi_{\mathcal{O}}}}{\partial{X_{o_n}}} &= \frac{\partial{(\varphi (r_{n}-r_1)-\frac{\varphi}{3}\sum_{i=2}^{n-1}(2H_i+V_i))}}{\partial{X_{o_n}}}\nonumber\\
&=\varphi\frac{\partial{r_n}}{\partial{X_{o_n}}}-\frac{\varphi}{3}(2\frac{\partial{H_n}}{\partial{X_{o_n}}}+\frac{\partial{V_n}}{\partial{X_{o_n}}})
\end{align}

From (\ref{hivi1}) in Section~\ref{sec:outline}, $H_n=||o_no_{n-1}||=X_{o_n}-X_{o_{n-1}}$. Since $X_{o_{n-1}}$ is constant during the transformation,
\begin{align}
\frac{\partial{H_n}}{\partial{X_{o_n}}} =\frac{\partial{(X_{o_n}-X_{o_{n-1}})}}{\partial{X_{o_n}}} =\frac{\partial{X_{o_n}}}{\partial{X_{o_n}}}= 1.\label{h_der}
\end{align}

Now let's consider $V_n$, which is the vertical distance traveled by $Q_{n-1}^{\rightarrow}$ and $Q_n^{\leftarrow}$ with light arcs contributing positively and heavy arcs contributing negatively (refer to Figure~\ref{fig:phi}).
The vertical distance traveled by $Q_{n}^{\leftarrow}$ is $r_n-Y_{a_{n-1}}$ and the vertical distance traveled by $Q_{n-1}^{\rightarrow}$ is $r_{n-1}-Y_{a_{n-1}}$.

When $0< \alpha <\pi/2$ (as in case a of Figure~\ref{fig:phi}), $Q_{n-1}^{\rightarrow}$ is light and $Q_n^{\leftarrow}$ is heavy. So $V_n=(r_{n-1}-Y_{a_{n-1}})-(r_n-Y_{a_{n-1}})=r_{n-1}-r_n$. Since $r_{n-1}$ is constant during the transformation,
\begin{align}
\frac{\partial{V_n}}{\partial{X_{o_n}}} = \frac{\partial{(r_{n-1}-r_n)}}{\partial{X_{o_n}}}=-\frac{\partial{r_n}}{\partial{X_{o_n}}}=\cos\alpha.\label{v_n_der1}
\end{align}

When $\pi/2 \leq \alpha <\pi$, there are two possibilities: (1) $Q_n^{\leftarrow}$ is light and $Q_{n-1}^{\rightarrow}$ is heavy (as in case b of Figure~\ref{fig:phi}), and $V_n=-(r_{n-1}-Y_{a_{n-1}})+(r_n-Y_{a_{n-1}}) = r_n - r_{n-1}$. Since $r_{n-1}$ is constant during the transformation,
\begin{align}
\frac{\partial{V_n}}{\partial{X_{o_n}}} = \frac{\partial{(r_n - r_{n-1})}}{\partial{X_{o_n}}}=\frac{\partial{r_n}}{\partial{X_{o_n}}}=-\cos\alpha.\label{v_n_der2}
\end{align}
(2) $Q_n^{\leftarrow}$ and $Q_{n-1}^{\rightarrow}$ are both light (as in case c of Figure~\ref{fig:phi}), and $V_n=(r_{n-1}-Y_{a_{n-1}})+(r_n-Y_{a_{n-1}}) = r_n+r_{n-1}-2Y_{a_{n-1}}$. Since $r_{n-1}$ and $Y_{a_{n-1}}$ are constant during the transformation,
\begin{align}
\frac{\partial{V_n}}{\partial{X_{o_n}}} = \frac{\partial{(r_n+r_{n-1}-2Y_{a_{n-1}})}}{\partial{X_{o_n}}}=\frac{\partial{r_n}}{\partial{X_{o_n}}}=-\cos\alpha.\label{v_n_der3}
\end{align}
Therefore, combining (\ref{r_n_der}), (\ref{h_der}), (\ref{v_n_der1}), (\ref{v_n_der2}), and (\ref{v_n_der3}), we have
\begin{align}
\frac{\partial{\Phi_{\mathcal{O}}}}{\partial{X_{o_n}}} &= \frac{\partial{(\varphi (r_{n}-r_1)-\frac{\varphi}{3}\sum_{i=2}^{n-1}(2H_i+V_i))}}{\partial{X_{o_n}}}\nonumber\\
&=\varphi\frac{\partial{r_n}}{\partial{X_{o_n}}}-\frac{\varphi}{3}(2\frac{\partial{H_n}}{\partial{X_{o_n}}}+\frac{\partial{V_n}}{\partial{X_{o_n}}})\nonumber\\
&= \begin{cases} -\varphi\cos\alpha-\frac{2\varphi}{3}-\frac{\varphi}{3}\cos\alpha, & \mbox{if } 0< \alpha <\pi/2 \\ -\varphi\cos\alpha-\frac{2\varphi}{3}+\frac{\varphi}{3}\cos\alpha, & \mbox{if } \pi/2 \leq \alpha <\pi \end{cases}\nonumber\\
&= \begin{cases} -\frac{2\varphi}{3}-\frac{4\varphi}{3}\cos\alpha, & \mbox{if } 0< \alpha <\pi/2, \\ -\frac{2\varphi}{3}-\frac{2\varphi}{3}\cos\alpha, & \mbox{if } \pi/2 \leq \alpha <\pi, \end{cases}\label{phider-appendix}
\end{align}
which gives Equality (\ref{phider}).\\
\\

For the case under consideration, $|D_{\mathcal{O}}(u,v)| = ||uv||$ and hence
\begin{align}
\frac{\partial{|D_{\mathcal{O}}(u,v)|}}{\partial{X_{o_n}}} &= \frac{\partial{||uv||}}{\partial{X_{o_n}}} = \frac{\partial{\sqrt{(X_{v}-X_u)^2+(Y_{v}-Y_u)^2}}}{\partial{X_{o_n}}}\nonumber\\
& = \frac{X_{v}-X_u}{\sqrt{(X_{v}-X_u)^2+(Y_{v}-Y_u)^2}}\cdot\frac{\partial{(X_{v}-X_u)}}{\partial{X_{o_n}}}+\frac{Y_{v}-Y_u}{\sqrt{(X_{v}-X_u)^2+(Y_{v}-Y_u)^2}}\cdot\frac{\partial{(Y_{v}-Y_u)}}{\partial{X_{o_n}}} \nonumber\\
& = \frac{X_{v}-X_u}{||uv||}\cdot\frac{\partial{(X_{v}-X_u)}}{\partial{X_{o_n}}}+\frac{Y_{v}-Y_u}{||uv||}\cdot\frac{\partial{(Y_{v}-Y_u)}}{\partial{X_{o_n}}} \nonumber\\
& = \cos{\gamma}\frac{\partial{X_{v}}}{\partial{X_{o_n}}}-\sin{\gamma}\frac{\partial{Y_{v}}}{\partial{X_{o_n}}}. \label{theta-angle}
\end{align}
The last equality is true because $\cos{\gamma} = \frac{X_v-X_u}{||uv||}$, $-\sin{\gamma} = \frac{Y_v-Y_u}{||uv||}$, and $X_u,Y_u$ are independent of $X_{o_n}$.

Combining (\ref{x_v_der}), (\ref{y_v_der}), and (\ref{theta-angle}) and by the trigonometric identities, we have
\begin{align}
\frac{\partial{|D_{\mathcal{O}}(u,v)|}}{\partial{X_{o_n}}} &= \cos{\gamma}\frac{\partial{X_{v}}}{\partial{X_{o_n}}}-\sin{\gamma}\frac{\partial{Y_{v}}}{\partial{X_{o_n}}}\\
& = \cos{\gamma}(1-\beta\cos\alpha\sin\beta-\cos\alpha\cos\beta)
-\sin{\gamma}(-\beta\cos\alpha\cos\beta+\cos\alpha\sin\beta) \nonumber\\
& = \cos{\gamma}-\cos\alpha(\beta\sin\beta\cos\gamma-\beta\cos\beta\sin\gamma
+\cos\beta\cos\gamma+\sin\beta\sin\gamma) \nonumber\\
& = \cos{\gamma}-\cos\alpha(\cos(\beta-\gamma)+\beta\sin(\beta-\gamma)),\label{dder-appendix}
\end{align}
which gives Equality (\ref{dder}).

\subsection{Proof for Inequality (\ref{gggbbb})}\label{sec:appendix-inequ-f}

Recall that by definition \begin{align}
f(\alpha, \beta,\gamma)&= -\lambda\frac{\partial{|D_{\mathcal{O}}(u,v)|}}{\partial{X_{o_n}}}\\
&= -\lambda(\cos{\gamma}-\cos\alpha(\cos(\beta-\gamma)+\beta\sin(\beta-\gamma))),
\end{align} where $\lambda=1.8$ is a constant.

We will first analyze the occurrence of the maximum value of $f(\alpha, \beta,\gamma)$ with respect to $\gamma$.
Let $v_2$ be the location of $v$ when $X_{o_n}$ increases by $\partial{X_{o_n}}$ (i.e., $o_n$ moves to the right by $\partial{X_{o_n}}$)\footnote{Note that here $o_n$ is moving in the opposite direction to the transformation of $O_n$ defined in the beginning of the proof for Proposition~\ref{last}. While it is necessary for the correctness of the proof to move $o_n$ towards $o_{n-1}$ when defining the transformation of $O_n$, here we move $o_n$ away from $o_{n-1}$ to be consistent with the sign of $\partial{X_{o_n}}$ for the sake of notational convenience.}. Let $\partial\ell$ be the distance from $v$ to $v_2$. Let $\omega$ be the angle from the $x$-axis to $\overrightarrow{vv_2}$. See Figure~\ref{fig:finalsetup4} for an illustration\footnote{In Figure~\ref{fig:finalsetup4}, both $\partial\ell$ and $\omega$ are exaggerated for the illustrative purpose.}. So

\begin{figure}[tbp]
\psset{unit=1pt}
\begin{center} \small
\begin{pspicture}(-160,-120)(160,120)
    \psset{labelsep=5pt}

    \pscircle(0,0){100}


    \pscircle[linecolor=blue](40,0){87.5}


    \psline(40,0)(109,-52)

  \psline{->}(-110,0)(160,0)
  \psline(40,0)(50,87)
    \psline[linecolor=red](50,87)(50,-87)
  \psline[linestyle=dashed](-160,60)(109,-52)
    \psline[linestyle=dashed](-160,60)(-100,60)
\psline{->}(50,87)(50,120)
\psline(50,-87)(50,-100)





    \psline[linecolor=blue]{->}(109,-52)(160,-40)


  \uput[90](-160,60){$u$}
  \uput[90](0,0){$o_{n-1}$}

    \uput[-90](40,0){$o_n$}

  \uput[45](130,0){$q$}

\uput[45](147,-40){$v_2$}

\uput[45](125,-45){$\partial\ell$}

\uput[-60](109,-52){$v$}

\uput[30](50,87){$a_{n-1}$}
\uput[-45](50,-87){$b_{n-1}$}

\uput[30](110,47){$A_n$}
\uput[-45](80,-77){$B_n$}


\uput[45](50,10){$\alpha$}

\uput[0](60,-10){$\beta$}

\uput[0](-120,50){$\gamma$}




\psarc{->}(40,0){17}{0}{82}
\psarc{->}(40,0){25}{-38}{0}

\psarc{->}(-160,60){40}{-22}{0}

\uput[0](159,-55){$\theta$}
\psline[linestyle=dashed](109,-52)(165,-74)
\psarc{->}(109,-52){52}{-22}{13}

\psline[linestyle=dashed](109,-52)(153,-52)
\psarc{->}(109,-52){40}{0}{13}
\psarc{->}(109,-52){35}{-22}{0}
\uput[0](140,-60){$\gamma$}
\uput[0](145,-46){$\omega$}

\cnode*[linecolor=red](-160,60){2pt}{z}
\cnode*[linecolor=red](50,87){2pt}{a}
\cnode*[linecolor=red](50,-87){2pt}{b}

\cnode*[linecolor=red](109,-52){2pt}{v}

\uput[-45](-70,-60){$O_{n-1}$}
\uput[-45](0,-60){$O_n$}

\end{pspicture}

\caption{An illustration for the definition of $\theta$ and $\omega$.}\label{fig:finalsetup4}
\end{center}
\end{figure}
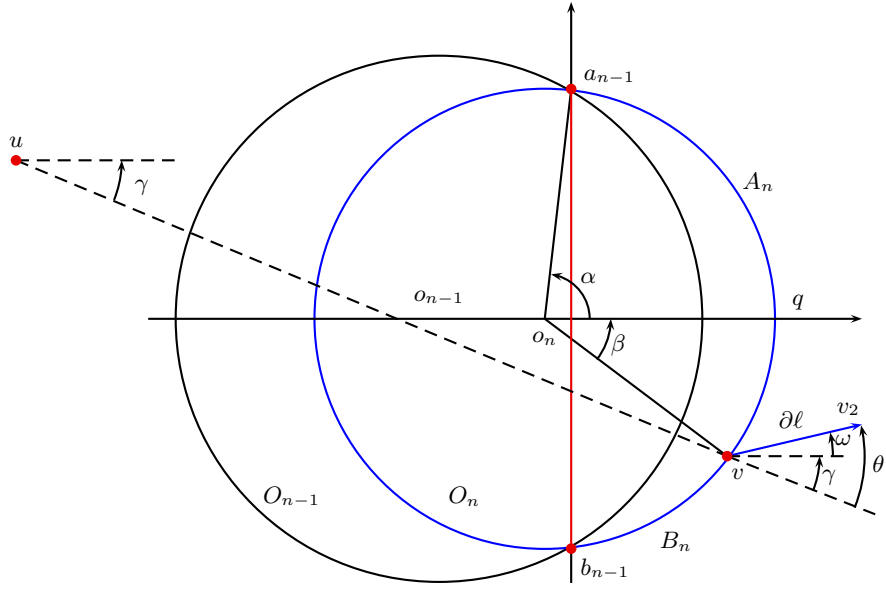

\begin{align}
\cos\omega=\frac{\partial X_v}{\partial\ell}, \label{cosomega}
\end{align}
and \begin{align}
\sin\omega=\frac{\partial Y_v}{\partial\ell}, \label{sinomega}
\end{align}
where $X_{v}$ and $Y_{v}$ are the $x$- and $y$-coordinates of $v$. Recall that $\gamma$ is the angle from $\overrightarrow{uv}$ to the $x$-axis. Let $\theta=\omega+\gamma$.

From (\ref{theta-angle}), (\ref{cosomega}) and (\ref{sinomega}) we have \begin{align}
\frac{\partial{|D_{\mathcal{O}}(u,v)|}}{\partial{X_{o_n}}}&=
\cos{\gamma}\frac{\partial{X_{v}}}{\partial{X_{o_n}}}-\sin{\gamma}\frac{\partial{Y_{v}}}{\partial{X_{o_n}}} \nonumber\\
&=(\cos\gamma\frac{\partial X_v}{\partial\ell}-\sin\gamma\frac{\partial Y_v}{\partial\ell})\frac{\partial{\ell}}{\partial{X_{o_n}}}\nonumber\\
&=(\cos\gamma\cos\omega-\sin\gamma\sin\omega)\frac{\partial{\ell}}{\partial{X_{o_n}}}\nonumber\\
&=
\cos(\omega+\gamma)\frac{\partial{\ell}}{\partial{X_{o_n}}}\nonumber\\
&= \cos\theta\frac{\partial{\ell}}{\partial{X_{o_n}}}.\label{observ}\end{align} Therefore, from (\ref{fdeforig}) and (\ref{observ}), $f=-\lambda\frac{\partial{|D_{\mathcal{O}}(u,v)|}}{\partial{X_{o_n}}}=-\lambda\cos\theta\frac{\partial\ell}{\partial{X_{o_n}}}$. Recall that $\partial\ell$ was defined to be $||vv_2||$ where $v_2$ is the location of $v$ when $X_{o_n}$ increases by $\partial{X_{o_n}}$. Hence $\frac{\partial\ell}{\partial{X_{o_n}}}>0$. Also note that $\frac{\partial\ell}{\partial{X_{o_n}}}$ is determined by $\alpha$ and $\beta$. In other words, $\frac{\partial\ell}{\partial{X_{o_n}}}$ is independent of $\gamma$. Therefore, with $\alpha$ and $\beta$ fixed, $f(\alpha, \beta,\gamma)$ is maximized when $\cos\theta$ is minimized, i.e., when $\theta$ is minimized or maximized for $\theta\in [-\pi, \pi]$. Since $\theta=\omega+\gamma$ where $\omega$ is independent of $\gamma$, $\theta$ is minimized when $\gamma = 0$ and is maximized when $\gamma=\gamma^+$. Therefore the maximum of $f$ occurs when $\gamma = 0$ or when $\gamma=\gamma^+$. In other words,
\begin{align}
f(\alpha, \beta,\gamma) \leq \max\{f(\alpha,\beta,0), f(\alpha,\beta,\gamma^+)\},\label{ggg}
\end{align} where $\gamma^+=\frac{3\sin\alpha-\alpha}{4}+\arcsin\left(\frac{\alpha+\sin\alpha}{4\lambda\sin(\frac{\alpha+\sin\alpha}{4})}\right).$ 

We then analyze the occurrence of the maximum value of $f(\alpha, \beta,\gamma)$ with respect to $\beta$. We have
\begin{align*}
\frac{\partial{f}}{\partial{\beta}} &= \frac{\partial{(-\lambda(\cos{\gamma}-\cos\alpha(\cos(\beta-\gamma)+\beta\sin(\beta-\gamma))))}}{\partial{\beta}} \\ &= -\lambda\frac{\partial{(\cos{\gamma}-\cos\alpha(\cos(\beta-\gamma)+\beta\sin(\beta-\gamma)))}}{\partial{\beta}} \\ &= \lambda\cos\alpha\frac{\partial{(\cos(\beta-\gamma)+\beta\sin(\beta-\gamma))}}{\partial{\beta}} \\ &= \lambda\cos\alpha(-\sin(\beta-\gamma)+\sin(\beta-\gamma)+\beta\cos(\beta-\gamma)) \\
&=\lambda\beta\cos\alpha\cos(\beta-\gamma),
\end{align*}
where $\beta-\gamma = \angle o_nvu$. Since $v$ is the exit point of $\overrightarrow{uv}$ on the boundary of $O_n$, The angle $\angle o_nvu$ is in the range $[-\pi/2, \pi/2]$. In other words, $-\pi/2 \leq \beta-\gamma \leq \pi/2$ and hence  $\cos(\beta-\gamma) \geq 0$. We distinguish two cases:
\begin{enumerate}
\item $\pi/2 \leq \alpha < \pi$. In this case, $\cos\alpha\leq 0$. Since $\cos(\beta-\gamma) \geq 0$, we have $
    \frac{\partial{f}}{\partial{\beta}} = \lambda\beta\cos\alpha\cos(\beta-\gamma)
    \leq 0$ when $\beta \geq 0$, and $
    \frac{\partial{f}}{\partial{\beta}} \geq 0$ when $\beta \leq 0.$ This means that $f$ is a non-decreasing function of $\beta$ when $\beta \leq 0$ and a non-increasing function of $\beta$ when $\beta \geq 0.$ So $f$ obtains its maximum value when $\beta = 0$, i.e.,
    \begin{align}
    f(\alpha,\beta,\gamma)\leq f(\alpha,0,\gamma).\label{ge}
    \end{align}

\item $0 < \alpha < \pi/2$. In this case, $\cos\alpha\geq 0$. Since $\cos(\beta-\gamma) \geq 0$, we have $
    \frac{\partial{f}}{\partial{\beta}} = \lambda\beta\cos\alpha\cos(\beta-\gamma) \geq 0$ when $\beta \geq 0,$ and $
    \frac{\partial{f}}{\partial{\beta}} \leq 0$ when $\beta \leq 0.$ From (\ref{fdef}), we have
    \begin{align}
    f(\alpha,\beta,\gamma) - f(\alpha,-\beta,\gamma) = &
    -\lambda(\cos{\gamma}-\cos\alpha(\cos(\beta-\gamma)+\beta\sin(\beta-\gamma))) \nonumber \\ & +\lambda(\cos{\gamma}-\cos\alpha(\cos(-\beta-\gamma)-\beta\sin(-\beta-\gamma)))\nonumber\\
    = & \lambda\cos\alpha[(\cos(\beta-\gamma)+\beta\sin(\beta-\gamma))-(\cos(-\beta-\gamma)-\beta\sin(-\beta-\gamma))]\nonumber\\
    = & \lambda\cos\alpha[\cos(\beta-\gamma)+\beta\sin(\beta-\gamma)-\cos(\beta+\gamma)-\beta\sin(\beta+\gamma)]\nonumber\\
    = & \lambda\cos\alpha[\cos(\beta-\gamma)-\cos(\beta+\gamma)+\beta\sin(\beta-\gamma)-\beta\sin(\beta+\gamma)]\nonumber\\
    = & \lambda\cos\alpha[-2\sin\beta\sin(-\gamma)+2\beta\cos\beta\sin(-\gamma)] \mbox{~~~~(by trigonometric identities)}\nonumber\\
    = & \lambda\cos\alpha[2\sin\beta\sin\gamma-2\beta\cos\beta\sin\gamma] \nonumber\\
    = & 2\lambda\cos\alpha\sin\gamma(\sin\beta-\beta\cos\beta).
    \end{align}

    Recall that $\cos\alpha\geq 0$. By (\ref{range_c}), $0\leq \gamma < \pi/2$ and hence
    $\sin\gamma \geq 0$. By (\ref{range_b}), $\beta \leq \sin\alpha \leq 1$. If $\beta \geq 0$, then $0 \leq \beta \leq 1$ and within this range, one verifies that $\sin\beta-\beta\cos\beta \geq 0$. So when $\beta \geq 0$, we have $f(\alpha,\beta,\gamma) - f(\alpha,-\beta,\gamma) \geq 0$. In other words,
    \begin{align}
    f(\alpha,\beta,\gamma) \leq f(\alpha,|\beta|,\gamma).\label{f_1}
    \end{align}
    As mentioned in the beginning of this case, we have $\frac{\partial{f}}{\partial{\beta}} \geq 0$ for $\beta \geq 0$, and from (\ref{range_b}), $0 \leq |\beta| \leq \sin\alpha$. Hence, we have \begin{align}f(\alpha,|\beta|,\gamma) \leq f(\alpha,\sin\alpha,\gamma).\label{f_2}\end{align} From (\ref{f_1}) and (\ref{f_2}),
    \begin{align}
    f(\alpha,\beta,\gamma) \leq f(\alpha,\sin\alpha,\gamma).\label{le}
    \end{align}

\end{enumerate}

Combining  (\ref{ggg}) with (\ref{ge}) and (\ref{le}), we have
\begin{align}
f(\alpha,\beta,\gamma)\leq \max\{f(\alpha,\beta,0), f(\alpha,\beta,\gamma^+)\} \leq \begin{cases}  \max\{f(\alpha,\sin\alpha,0), f(\alpha,\sin\alpha,\gamma^+)\}, & \mbox{if } 0< \alpha <\pi/2, \\ \max\{f(\alpha,0,0), f(\alpha,0,\gamma^+)\}, & \mbox{if } \pi/2 \leq \alpha <\pi, \end{cases}\label{gggbbb-appendix}
\end{align}
which yields Inequality~(\ref{gggbbb}).

\subsection{Verify Inequalities (\ref{ineq1}), (\ref{ineq2}), (\ref{ineq3}), and (\ref{ineq4})}\label{sec:appendix-all-ineqs}

Let $z=(\alpha+\sin\alpha)/4$. Since $0 < z < \pi/4$, we have $1< \frac{z}{\sin z}< 1.111$. Furthermore,
$$\frac{dz}{d\alpha} = \frac{d((\alpha+\sin\alpha)/4)}{d\alpha} =1/4+\cos\alpha/4,$$ and
$$\frac{d(\frac{z}{\sin z})}{dz} = \frac{1}{\sin z}-\frac{z\cos z}{(\sin z)^2} \leq \frac{1}{\sin z}-\frac{\cos z}{\sin z} = \tan(z/2) < 0.415.$$ Recall that $$ \gamma^+=\frac{3\sin\alpha-\alpha}{4}+\arcsin\left(\frac{\alpha+\sin\alpha}{4\lambda\sin(\frac{\alpha+\sin\alpha}{4})}\right)= \frac{3\sin\alpha-\alpha}{4}+\arcsin(\frac{z}{\lambda\sin z}).$$ So \begin{align*}
\left|\frac{d(\gamma^+)}{d\alpha}\right|&\leq \left|\frac{3\cos\alpha-1}{4}+\frac{1}{\lambda\sqrt{1-(\frac{z}{\lambda\sin z})^2}} \frac{d(\frac{z}{\sin z})}{dz}\frac{dz}{d\alpha} \right| \\
&\leq 1+\left|\frac{0.415}{\lambda\sqrt{1-(\frac{z}{\lambda\sin z})^2}} (1/4+\cos\alpha/4)\right| \\ &\leq 1+\left|\frac{0.415}{\lambda\sqrt{1-(\frac{1.111}{\lambda})^2}} (1/4+1/4)\right| \\ &< 1.15.\end{align*}

Recall that
\begin{align}
f(\alpha, \beta,\gamma) =-\lambda(\cos{\gamma}-\cos\alpha(\cos(\beta-\gamma)+\beta\sin(\beta-\gamma))).\end{align}

We have
\begin{align*}
\left|\frac{d f(\alpha,0,0)}{d\alpha}\right| = \left|\frac{d (-\lambda(1-\cos\alpha))}{d\alpha}\right|
=|-\lambda\sin\alpha|
< 2.
\end{align*}

\begin{align*}
\left|\frac{d f(\alpha,0,\gamma^+)}{d\alpha}\right| &= \left|\frac{d (-\lambda(\cos{\gamma^+}-\cos\alpha\cos(-\gamma^+)))}{d\alpha}\right| \\
 &= \left|\frac{d (-\lambda\cos{\gamma^+}(1-\cos\alpha))}{d\alpha}\right| \\
&\leq  \lambda(|\frac{d(\gamma^+)}{d\alpha}\sin\gamma^+(1-\cos\alpha)|+|\sin\alpha\cos\gamma^+|) \\
&\leq  1.8(1.15+1) < 4.
\end{align*}

\begin{align*}
\left|\frac{d f(\alpha,\sin\alpha,0)}{d\alpha}\right| &= \left|\frac{d (-\lambda(1-\cos\alpha\cos(\sin\alpha)-\cos\alpha\sin\alpha\sin(\sin\alpha)))}{d\alpha}\right| \\
&\leq  \lambda(|\sin\alpha\cos(\sin\alpha)|+|(\cos\alpha)^2\sin\alpha\cos(\sin\alpha)| +|(\sin\alpha)^2\sin(\sin\alpha)|)\\
&\leq  1.8(1+1+1) < 6.
\end{align*}

\begin{align*}
\left|\frac{d f(\alpha,\sin\alpha,\gamma^+)}{d\alpha}\right| &= \left|\frac{d (-\lambda(\cos\gamma^+-\cos\alpha\cos(\sin\alpha-\gamma^+)-\cos\alpha\sin\alpha\sin(\sin\alpha-\gamma^+)))}{d\alpha}\right| \\
&=\left|\frac{d (-\lambda(\cos\gamma^+-\cos\alpha\cos(\sin\alpha-\gamma^+)-\frac{\sin(2\alpha)}{2}\sin(\sin\alpha-\gamma^+)))}{d\alpha}\right| \\
&\leq  \lambda(|\frac{d(\gamma^+)}{d\alpha}\sin\gamma^+|+|\sin\alpha\cos(\sin\alpha-\gamma^+)|+|\cos\alpha\sin(\sin\alpha-\gamma^+)(\cos\alpha-\frac{d(\gamma^+)}{d\alpha})|\\
&~~~+|\cos(2\alpha)\sin(\sin\alpha-\gamma^+)|+|\frac{\sin(2\alpha)}{2}\cos(\sin\alpha-\gamma^+)(\cos\alpha-\frac{d(\gamma^+)}{d\alpha})|)\\
&<  1.8(1.15+1+(1+1.15)+1+(1+1.15)/2) < 12.
\end{align*}

Also
\begin{align*}
\left|\frac{d(\sin\alpha-\alpha\cos\alpha -\frac{2\varphi}{3}-\frac{2\varphi}{3}\cos\alpha)}{d\alpha}\right| =|\alpha\sin\alpha+\frac{2\varphi}{3}\sin\alpha|
\leq  \pi+\frac{2\varphi}{3} < 4.\\
\left|\frac{d(\sin\alpha-\alpha\cos\alpha -\frac{2\varphi}{3}-\frac{4\varphi}{3}\cos\alpha)}{d\alpha}\right| =|\alpha\sin\alpha+\frac{4\varphi}{3}\sin\alpha|
\leq  \pi+\frac{4\varphi}{3} < 4.
\end{align*}

Therefore the Lipschitz constants are \begin{align*}
\left|\frac{dg_1(\alpha)}{d\alpha}\right| &= \left|\frac{d(\sin\alpha-\alpha\cos\alpha -\frac{2\varphi}{3}-\frac{2\varphi}{3}\cos\alpha)}{d\alpha}\right|+ \left|\frac{d f(\alpha,0,0)}{d\alpha}\right| < 4+2 < 16.\\
\left|\frac{dg_2(\alpha)}{d\alpha}\right| &= \left|\frac{d(\sin\alpha-\alpha\cos\alpha -\frac{2\varphi}{3}-\frac{2\varphi}{3}\cos\alpha)}{d\alpha}\right|+ \left|\frac{d f(\alpha,0,\gamma^+)}{d\alpha}\right| < 4+4 < 16.\\
\left|\frac{dg_3(\alpha)}{d\alpha}\right| &= \left|\frac{d(\sin\alpha-\alpha\cos\alpha -\frac{2\varphi}{3}-\frac{2\varphi}{3}\cos\alpha)}{d\alpha}\right|+ \left|\frac{d f(\alpha,\sin\alpha,0)}{d\alpha}\right| < 4+6 < 16.\\
\left|\frac{dg_4(\alpha)}{d\alpha}\right| &= \left|\frac{d(\sin\alpha-\alpha\cos\alpha -\frac{2\varphi}{3}-\frac{2\varphi}{3}\cos\alpha)}{d\alpha}\right|+ \left|\frac{d f(\alpha,\sin\alpha,\gamma^+)}{d\alpha}\right| < 4+12 = 16.
\end{align*}

Now we can use a simplified version of the Piyavskii's algorithm~\cite{Vanderbei97extensionof} for Lipschitz optimization to verify that $g_i(\alpha) < 0$, $1\leq i \leq 4$. The following Algorithm {\bf Bound}$(g_i,s,t)$ will either find a value $g_i(\alpha) \geq 0$ in the given range or return an upper bound on the value of $g_i$ in range $[s,t]$ that is less than 0. For $1\leq i \leq 4$, we run {\bf Bound}$(g_i,s,t)$ with $s$ and $t$ set to be the appropriate lower and upper bound on the range of $\alpha$, where $g_i(0)$ is set to be $\lim_{\alpha \rightarrow 0} g_i(\alpha)$. We verify that indeed $g_i(\alpha) < 0$, $1\leq i \leq 4$. This completes the verification of inequalities (\ref{ineq1}), (\ref{ineq2}), (\ref{ineq3}), and (\ref{ineq4}).

\begin{algorithm}{The Algorithm {\bf Bound}($g_i, s, t$)}{alg}
\algtitle{{\bf Bound}($g_i, s, t$), for $1\leq i \leq 4$}
\begin{codeblock}
\step {\bf if} $g_i(s) \geq 0$ or $g_i(t) \geq 0$ {\bf return} $\max\{g_i(s),g_i(t)\}$.
\step apex = $\max\{g_i(s),g_i(t)\}+16*(t-s)/2$
\step {\bf if} apex $\geq 0$, {\bf do}:
    \begin{enumerate}
    \step apex = $\max\{\mbox{\bf Bound}(g_i, s, (s+t)/2), \mbox{\bf Bound}(g_i, (s+t)/2,t)\}$
    \end{enumerate}
\step {\bf return} apex
\end{codeblock}
\end{algorithm}


\end{document}